\renewcommand*\@fnsymbol[1]{\the#1}
\DeclareMathAlphabet{\mathcal}{OMS}{cmsy}{m}{n}  
\setlist[itemize]{noitemsep}
\setlist[enumerate]{noitemsep}
\titleformat{\subsection}{\normalfont\normalsize}{\thesubsection}{.5em}{}
\titleformat{\subsubsection}{\normalfont\normalsize\itshape}{\thesubsubsection}{.5em}{}
\newtheoremstyle{dotlessTHM}{2ex}{1ex}{\itshape}{}{\bfseries}{}{.5em}{\thmname{#1}\thmnumber{ #2}\thmnote{ (#3)}}
\newtheoremstyle{dotlessDEF}{2ex}{1ex}{}{}{\bfseries}{}{.5em}{\thmname{#1}\thmnumber{ #2}\thmnote{ (#3)}}
\newtheoremstyle{dotlessREM}{2ex}{1ex}{}{}{\itshape}{}{.5em}{\thmname{#1}\thmnumber{ #2}\thmnote{ (#3)}}
\declaretheorem[name=Theorem,numberwithin=section,style=dotlessTHM]{theorem}
\declaretheorem[name=Definition,sharenumber = theorem, style = dotlessDEF]{definition}
\declaretheorem[name=Corollary,sharenumber = theorem, style = dotlessTHM]{corollary}
\declaretheorem[name=Proposition,sharenumber = theorem, style = dotlessTHM]{proposition}
\declaretheorem[name=Lemma,sharenumber = theorem, style = dotlessTHM]{lemma}
\declaretheorem[name=Remark,sharenumber = theorem, style = dotlessREM]{remark}
\declaretheorem[name=Example,sharenumber = theorem, style = dotlessREM]{example}
\numberwithin{equation}{section}
\numberwithin{figure}{section}
\let\@addpunct\@gobble
\newcommand{\risk}[3]{\rho_{\mathcal{#1},#2}(#3)}
\newcommand{\riskR}[3]{R_{\mathcal{#1},#2}(#3)}
\newcommand{\riskA}[3]{\rho_{#1,#2}(#3)}
\newcommand{\riskAR}[3]{R_{#1,#2}(#3)}
\newcommand{\riskset}[2]{ \inf\left\{ \lambda \in [0,1] \,\big|\, (1-\lambda){#1}_T + \lambda \tfrac{{#1}_0}{{#2}_0} {#2}_T \in \mathcal{A} \right\}  } 
\newcommand{\fspace}[0]{\mathbb{R}_{>0} \times \mathcal{X}}
\newcommand{\aspace}[0]{\mathbb{R}_{>0} \times \mathcal{A}}
\newcommand{\acc}[0]{\mathcal{A}}
\title{Intrinsic risk measures}
\author{\normalsize Walter Farkas\thanks{\href{mailto:walter.farkas@bf.uzh.ch}{walter.farkas@bf.uzh.ch}}\; and Alexander Smirnow\thanks{\href{mailto:alexander.smirnow@uzh.ch}{alexander.smirnow@uzh.ch}}\\
\normalsize Department of Banking and Finance, University of Zurich, \\
\normalsize Department of Mathematics, ETH Z\"urich  }
\date{\normalsize First version: October 27, 2016}
\begin{document}
\maketitle

\begin{abstract}
\noindent Monetary risk measures are usually interpreted as the smallest amount of external capital that must be added to a financial position to make it acceptable.
We propose a new concept: intrinsic risk measures and argue that this approach provides a direct path from unacceptable positions towards the acceptance set.
Intrinsic risk measures use only internal resources and return the smallest percentage of the currently held financial position which has to be sold and reinvested into an eligible asset such that the resulting position becomes acceptable. 
While avoiding the problem of infinite values, intrinsic risk measures allow a free choice of the eligible asset and they preserve desired properties such as monotonicity and quasi-convexity.
A dual representation on convex acceptance sets is derived and the link of intrinsic risk measures to their monetary counterparts on cones is detailed. \\

\noindent \textbf{Keywords} intrinsic risk measures, monetary risk measures, acceptance sets, coherence, conicity, quasi-convexity, value at risk    \\

\noindent \textbf{Mathematics Subject Classification(2010)} 91B30, 91B32, 91G99 \\

\noindent \textbf{JEL classification} C60, G11, G20

\end{abstract}

\section{Introduction}
\label{sec:intro}
Risk measures associated with acceptance criteria as introduced by P.~Artzner, F.~Delbaen, J.~Eber, and D.~Heath \cite{bib:ADHE} are maps $\rho_{\mathcal{A},r}$ 
from a certain function space $\mathcal{X}$ to $\mathbb{R}$ of the form
\begin{align} \label{Eq:MonRisk_constElig}
\risk{A}{r}{X_T} = \inf \left\{ m \in \mathbb{R} \,|\, X_T + m r \mathbf{1}_\Omega \in \mathcal{A} \right\} \,.
\end{align}
In words, the risk of a financial position $X_T \in \mathcal{X}$ is measured by the smallest amount of external money $m \in \mathbb{R}$ which must be invested into a risk-free reference instrument with constant return rate $r > 0$ in order to make it acceptable, that is, to make it an element of an acceptance set $\acc \subset \mathcal{X}$.
In this approach, acceptance sets form the primary objects and an associated risk measure is given by the distance between financial position and the boundary of the acceptance set with respect 
to the direction~$r \mathbf{1}_\Omega$. 
More recently, this approach has been re-linked to the original idea of using eligible assets with random return rate $r : \Omega \rightarrow \mathbb{R}_{>0}$ by P.~Artzner, F.~Delbaen, and P.~Koch-Medina \cite{bib:ADKM} and \mbox{D.~Konstantinides} and C.~Kountzakis \cite{bib:KK} among others, or has even been extended to processes by \mbox{M.~Fritelli} and G.~Scandolo \cite{bib:FrittelliScandolo}.
W.~Farkas, P.~Koch-Medina, and C.~Munari in \cite{bib:WPC2} and \cite{bib:WPC1} 
proposed to investigate general eligible assets $r:\Omega \rightarrow \mathbb{R}_{\geq 0}$ and acceptance sets, revealing significant shortcomings of the simplified constant approach and pointing out the close interplay between eligible assets and acceptance sets.
They work with a traded asset $S = (S_0,S_T)$ defined by its initial unitary price $S_0 \in \mathbb{R}_{>0}$ and its random payoff \mbox{$S_T:\Omega \rightarrow \mathbb{R}_{\geq 0}$} and replace~$r$ in \hyperref[Eq:MonRisk_constElig]{Equation (\ref*{Eq:MonRisk_constElig})} by the random return $\frac{S_T}{S_0}$ to arrive at an extended definition
\begin{align} \label{Eq:MonetaryRisk_randElig}
\risk{A}{S}{X_T} = \inf \left\{ m \in \mathbb{R} \,\big|\, X_T + \tfrac{m}{S_0} S_T \in \mathcal{A} \right\}.
\end{align}
The term $\frac{m}{S_0} S_T$ is interpreted as the payoff of $\frac{m}{S_0}$ units of asset $S$.
Consequently, written as $\risk{A}{S}{X_T}/S_0$, this risk measure can also be thought of as the smallest number of units of $S$ that need to be bought and added to the position $X_T$ to make it acceptable.

The choice $S = (1,r\mathbf{1}_\Omega)$ highlights that \hyperref[Eq:MonRisk_constElig]{Equation~(\ref*{Eq:MonRisk_constElig})} is a special case of \hyperref[Eq:MonetaryRisk_randElig]{Equation~(\ref*{Eq:MonetaryRisk_randElig})}. 
If $S_T$ is bounded away from zero, meaning that for some $\varepsilon > 0$ the inequality $S_T \geq \varepsilon$ holds ($\mathbb{P}$-a.s.),
a reduction of the latter to the initial definition is immediate and constitutes the basis for the simplified approach with constant return.
Unfortunately, the assumption that the payoff $S_T$ is bounded away from zero excludes relevant financial instruments such as defaultable bonds or options from the set of possible eligible assets. Moreover, the reduction can lead to alterations of the structure imposed on the acceptance set.
Consequently, allowing eligible assets with payoffs which are not necessarily bounded away from zero is a key point in the analysis of several concrete financial situations.

\vspace{8pt}
However, regardless of the particular definition of risk measure, these approaches are in line with the point of view in \cite[Section 2.1]{bib:ADHE}: 
\begin{quote}
`\emph{The current cost of getting enough of this or these [commonly accepted] instrument(s) is a good candidate for a measure of risk of the initially unacceptable position.}'
\end{quote}
This seminal idea does not only allow one to rank financial positions according to their risk, but also suggests a procedure to make an unacceptable position acceptable.
Referring to cash-additivity (Axiom T in \cite{bib:ADHE}), P.~Artzner, F.~Delbaen, J.~Eber, and D.~Heath claim in \cite[Remark 2.7]{bib:ADHE} that
\begin{quote}
`\emph{By insisting on references to cash and to time, [...] our approach goes much further than the interpretation [...] that ``the main function of a risk measure is to properly rank risks."}'
\end{quote}
However, in order to truly go beyond ranking risks and to apply this procedure, one must carry or raise the monetary amount $\rho_{\acc,S}(X_T)$. 
This raises the question as to which extent this method is applicable and how the acquisition of additional capital can be incorporated into the risk measure.

One possible approach is to sell part of the financial position to raise capital and invest it in the eligible asset, as was already mentioned in \cite[Section 2.1]{bib:ADHE}:
\begin{quote}
`\emph{For an unacceptable risk [...] one remedy may be to alter the position.}'
\end{quote}
The aim of our work is to develop this thought towards a new class of risk measures, which we will call \emph{intrinsic risk measures}.
Traditional risk measures are defined via (hypothetical) additional capital, which is not always available in reality.
Thus, we propose a class of risk measures that only allows the usage of internal capital contained in the financial position.
This new concept leads to a significant change of mentality.
It extends the scope of applications and eliminates problems of infinite values.
It also requires one to devote attention to the initial value of a position and its interplay with the desired eligible asset.

We develop our approach based on acceptance sets $\mathcal{A} \subset \mathcal{X}$ as primary objects and on the extended framework of general eligible assets $S = (S_0, S_T ) \in \mathbb{R}_{>0} \times \mathcal{A}$.

The intrinsic risk of a financial position of interest $X = (X_0,X_T )$ defined by its initial value $X_0 \in \mathbb{R}_{>0}$ and future payoff or net worth $X_T \in \mathcal{X}$ is given by 
\begin{equation} \label{Eq:FirstMentioningOfIntrinsicRisk}
\riskR{A}{S}{X} = \riskset{X}{S}\,.
\end{equation}
The intrinsic risk measure returns the smallest percentage of a given position that needs to be sold and reinvested into the eligible asset $S$, at inception, such that the resulting position is deemed acceptable.

By selling part of the position, required capital is raised and reinvested, resulting in a convex combination of two random variables.
This approach suggests a new way to shift unacceptable positions towards the acceptance set.
In particular, the treatment of risk measures taking infinite values and losing their operational applicability becomes superfluous.
Furthermore, standard properties such as monotonicity and quasi-convexity are preserved and they can be imposed using just the structure of the underlying acceptance set.
\\

The subsequent work has grown from the master's thesis of A.~Smirnow \cite{bib:AS} and is structured as follows. 
In \hyperref[Section:Prelim]{Section \ref*{Section:Prelim}}, the notion of acceptance sets and traditional risk measures are introduced, linking these two concepts and reviewing important properties.
The aim is to give a short overview of the advancements in risk measure theory and to lay the foundation for the intrinsic risk measure. 
In \hyperref[Sec:IntrinsicRisk]{Section \ref*{Sec:IntrinsicRisk}}, we define intrinsic risk measures and derive basic properties in juxtaposition with traditional risk measures.
In \hyperref[Sec:IntrinsicRiskonPosHom&Cashadditive]{Section \ref*{Sec:IntrinsicRiskonPosHom&Cashadditive}}, under the assumption of conic acceptance sets we associate intrinsic risk measures with traditional risk measures and we show that they can be expressed as functions of one another.
Further, using this representation we show that the intrinsic risk measure yields a smaller amount needed to reach acceptability while ensuring equal performance.
In the setting of convex acceptance sets, \hyperref[Sec:DualRepres]{Section \ref*{Sec:DualRepres}} starts with a short summary of standard duality results of traditional risk measures, followed by the derivation of a dual representation for intrinsic risk measures.
Finally, concluding remarks and a short outlook regarding possible extensions and questions are given in \hyperref[Sec:outlook]{Section \ref*{Sec:outlook}}.
Throughout this article, we illustrate the new concepts and results using the Value at Risk acceptance set and demonstrate the calculation and application of intrinsic risk measures.

\section{Terminology and preliminaries}
\label{Section:Prelim}
In this section, we introduce common terminology, and the general notion of acceptance sets and traditional risk measures.
The aim is to establish a basis on which we can build our framework. 
At the end of this section, a motivational outlook for the intrinsic risk measure is provided.\\

\noindent Throughout this study we will work on an atomless probability space $(\Omega,\mathcal{F},\mathbb{P})$.
For the sake of simplicity and presentational flow we consider financial positions on the space of essentially bounded random variables $\mathcal{X} = L^\infty(\Omega,\mathcal{F},\mathbb{P})$ endowed with the $\mathbb{P}$-almost sure order and the $\mathbb{P}$-essential supremum norm.
However, the majority of the results can be stated for bounded random variables on a model-free measurable space $(\Omega,\mathcal{F})$, or even in greater generality on arbitrary ordered real topological vector spaces.
We will explicitly indicate where immediate extensions are possible.

\subsection{Acceptance sets}
\label{sec:accSets}

In the financial world, it is a central task to hold positions that satisfy certain acceptability criteria, may they represent own preferences or be of regulatory nature. 
These criteria can be brought into a mathematical framework via so-called acceptance sets. 
The following definition determines a very general structure of acceptance sets which reflects the `minimal' human rationale. 

\begin{definition}
A subset $\mathcal{A} \subset \mathcal{X}$ is called an \emph{acceptance set} if it satisfies
\begin{itemize}[label = \raisebox{0.13ex}{\footnotesize $\bullet$}]
\item \emph{Non-triviality}: $\acc \neq \emptyset$ and $\mathcal{A} \varsubsetneq \mathcal{X}$, and \label{Property:nonTrivial}
\item \emph{Monotonicity}: $X_T \in \mathcal{A}$, $Y_T \in \mathcal{X}$, and $Y_T \geq X_T$ imply $Y_T \in \mathcal{A}$. \label{Property:monotonicity}
\end{itemize}
An element $X_T \in \acc$ is called $\acc$\emph{-acceptable}, or just \emph{acceptable} if the reference to $\acc$ is clear.
Similarly, $X_T \notin \acc$ is said to be $(\acc\text{-})$\emph{unacceptable}.
\end{definition}
\noindent Non-triviality is mathematically important and also representative of real world requirements, as we will not just accept any bad situation and on the other hand, since any event requires near-term reactions, there must always be acceptable actions. 
Monotonicity implements the idea that any financial position dominating an acceptable position with respect to the order $\leq$ must also be acceptable. 

These two axioms constitute the basis for acceptance sets.
Depending on the context, it is often necessary to impose further structure and we recall three for our framework relevant properties. 
$\acc$ is called

\begin{itemize}[label = \raisebox{0.13ex}{\footnotesize $\bullet$}]
\item a \emph{cone} or \emph{conic} if $X_T \in \acc$ implies that for all $\lambda > 0$ also $\lambda X_T \in \acc$, 
\item \emph{convex} if $X_T, Y_T \in \acc$ implies that for all $\lambda \in [0,1]$ also $\lambda X_T + (1-\lambda)Y_T \in \acc$,
\item \emph{closed} if $\acc = \bar{\acc}$.
\end{itemize}
The cone property allows for arbitrary scaling of financial positions invariant of their acceptability status. 
Convexity represents the principle of diversification: given two acceptable positions, any convex combination of these will be acceptable. 
It will be discussed in \hyperref[Section:tradRM]{Section \ref*{Section:tradRM}} how these two properties translate to monetary risk measures. 
Closedness is of importance when considering limits of sequences of acceptable positions.
Apart from this, it is economically motivated as it prohibits arbitrarily small perturbations to benefit unacceptable positions and make them acceptable. 

\vspace{8pt}
The next lemma summarises some useful properties of acceptance sets which will be used in subsequent sections.

\begin{lemma} \label{Lemma:acceptanceSet1}
Let $\mathcal{A} \subset \mathcal{X}$ be an acceptance set. 
Then the following assertions hold.
\begin{enumerate}
\item \label{LemmaPart:allConst} $\mathcal{A}$ contains all sufficiently large constants and no sufficiently small constants. 
\item \label{LemmaPart:intAcc} $S_T \in \mathrm{int}(\mathcal{A})$ if and only if there exists an $\varepsilon > 0$ such that $S_T - \varepsilon \mathbf{1}_\Omega \in \mathcal{A}$. 
\item \label{LemmaPart:intcl} The interior $\mathrm{int}(\mathcal{A})$ and the closure $\bar{\mathcal{A}}$ are acceptance sets, and $\mathrm{int}(\acc) = \mathrm{int}(\bar{\acc})$. 
\item \label{LemmaPart:intclAlsoConic} If $\mathcal{A}$ is a cone, then $\mathrm{int} (\acc)$ and $\bar{\acc}$ are cones, and $0 \notin \mathrm{int}(\acc)$ and $0 \in \bar{\acc}$. 
\end{enumerate}
\end{lemma}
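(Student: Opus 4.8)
The plan is to verify the four assertions in the order stated, relying only on monotonicity, non-triviality, and the fact that every element of $\mathcal{X} = L^\infty$ is bounded; from (2) onward the interior characterisation does most of the work. For (1): since $\acc \neq \emptyset$, fix $X_T \in \acc$; being bounded it is dominated by a constant $c\mathbf{1}_\Omega$, so monotonicity puts every constant $\geq c$ into $\acc$. Dually, since $\acc \varsubsetneq \mathcal{X}$, fix $Z_T \notin \acc$; it dominates some $d\mathbf{1}_\Omega$, and $d\mathbf{1}_\Omega \in \acc$ would force $Z_T \in \acc$, so $d\mathbf{1}_\Omega \notin \acc$, and then no constant $\leq d$ lies in $\acc$. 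For (2): if $S_T - \varepsilon\mathbf{1}_\Omega \in \acc$ then any $Y_T$ with $\|Y_T - S_T\|_\infty < \varepsilon$ satisfies $Y_T \geq S_T - \varepsilon\mathbf{1}_\Omega$, hence $Y_T \in \acc$ by monotonicity, so the open $\varepsilon$-ball around $S_T$ lies in $\acc$ and $S_T \in \mathrm{int}(\acc)$; conversely, if $S_T \in \mathrm{int}(\acc)$ then some open $\delta$-ball around $S_T$ is in $\acc$, and $\varepsilon = \delta/2$ works because $S_T - \tfrac{\delta}{2}\mathbf{1}_\Omega$ lies in that ball.

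For (3) I would check the two acceptance-set axioms for $\mathrm{int}(\acc)$ and $\bar{\acc}$ separately, then the identity $\mathrm{int}(\acc) = \mathrm{int}(\bar{\acc})$. Monotonicity of $\mathrm{int}(\acc)$ is immediate from (2): $Y_T \geq X_T$ and $X_T - \varepsilon\mathbf{1}_\Omega \in \acc$ give $Y_T - \varepsilon\mathbf{1}_\Omega \in \acc$. Monotonicity of $\bar{\acc}$ follows by shifting an approximating sequence: if $X_n \to X_T$ with $X_n \in \acc$ and $Y_T \geq X_T$, then $X_n + (Y_T - X_T) \in \acc$ by monotonicity and converges to $Y_T$. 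For non-triviality, $\mathrm{int}(\acc) \subset \acc \subsetneq \mathcal{X}$ and, by (1) and (2), all sufficiently large constants lie in $\mathrm{int}(\acc)$; while $\bar{\acc} \neq \mathcal{X}$ holds because, choosing $d$ as in (1) with $d\mathbf{1}_\Omega \notin \acc$, the open unit ball about $(d-1)\mathbf{1}_\Omega$ consists of elements $\leq d\mathbf{1}_\Omega$ and hence, by monotonicity, is disjoint from $\acc$ — and an open set disjoint from $\acc$ is disjoint from $\bar{\acc}$. Finally $\mathrm{int}(\acc) \subset \mathrm{int}(\bar{\acc})$ is clear, and if $X_T \in \mathrm{int}(\bar{\acc})$ then $X_T - \varepsilon\mathbf{1}_\Omega \in \bar{\acc}$ for some $\varepsilon > 0$, so some $Z \in \acc$ lies within $\varepsilon/2$ of it; then $X_T - \tfrac{\varepsilon}{2}\mathbf{1}_\Omega \geq Z$, so $X_T - \tfrac{\varepsilon}{2}\mathbf{1}_\Omega \in \acc$ and (2) gives $X_T \in \mathrm{int}(\acc)$.

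For (4), assume $\acc$ is a cone and let $\lambda > 0$. Since $x \mapsto \lambda x$ is a homeomorphism of $\mathcal{X}$ fixing $0$ and mapping $\acc$ into itself, $\lambda\bar{\acc} = \overline{\lambda\acc} \subset \bar{\acc}$ and $\lambda\,\mathrm{int}(\acc) = \mathrm{int}(\lambda\acc) \subset \mathrm{int}(\acc)$, so both sets are cones (the interior statement also follows from (2), since $X_T - \varepsilon\mathbf{1}_\Omega \in \acc$ gives $\lambda X_T - \lambda\varepsilon\mathbf{1}_\Omega \in \acc$). That $0 \in \bar{\acc}$ follows from $\tfrac1n X_T \to 0$ for any $X_T \in \acc$. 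Finally, $0 \notin \mathrm{int}(\acc)$: otherwise (2) would give $-\varepsilon\mathbf{1}_\Omega \in \acc$, the cone property would make every negative constant acceptable, and monotonicity (every bounded $Y_T$ dominates a negative constant) would force $\acc = \mathcal{X}$, contradicting non-triviality.

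The only step that needs a little care is the non-triviality of $\bar{\acc}$, where one must exhibit a genuinely open neighbourhood missing $\acc$ rather than merely a single point outside $\acc$; once (1) is available this is routine. Everything else reduces to monotonicity combined with the interior description in (2), which is why I would prove the four parts strictly in the stated order.
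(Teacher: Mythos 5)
Your proof is correct and follows essentially the same route as the paper: part (1) via domination by constants, part (2) as the workhorse interior characterisation, and part (4) by combining (2) with conicity and a contradiction for $0 \notin \mathrm{int}(\mathcal{A})$. The only difference is that the paper omits the proof of assertion (3), deferring to Lemma 2.3 of \cite{bib:WPC1}, whereas you supply the full argument — including the one genuinely non-routine step, exhibiting an open ball below an unacceptable constant to show $\bar{\mathcal{A}} \neq \mathcal{X}$ — and that argument is sound.
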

\begin{proof}
\begin{enumerate}[wide=0pt]
\item Since $\acc$ is a nonempty, proper subset of $\mathcal{X}$, monotonicity implies that any constant dominating some $X_T \in \acc$ is contained in $\acc$ and no constant dominated by some $Y_T \notin \acc$ is contained in $\acc$. 
\item Assume $S_T - \varepsilon \mathbf{1}_\Omega \in \mathcal{A}$, for some $\epsilon > 0$. 
Since for $\delta \in (0, \varepsilon)$ and $X_T$ in the ball $B_\delta(S_T) = \{ Y_T \in \mathcal{X} \,|\, \Vert Y_T - S_T \Vert_{L^\infty(\mathbb{P})} < \delta \}$ the inequality $X_T - (S_T -\varepsilon \mathbf{1}_\Omega) \geq (\varepsilon-\delta) \mathbf{1}_\Omega > 0$ holds, monotonicity implies $S_T \in \mathrm{int}(\mathcal{A})$.
The other direction follows directly from the definition of the interior.
\item The proof of \hyperref[LemmaPart:intcl]{Assertion~\ref*{LemmaPart:intcl}} follows the lines of the proofs of Lemma 2.3 (iv) and (v) in \cite{bib:WPC1} and is omitted.
\item Given $S_T \in \mathrm{int}(\acc)$, \hyperref[LemmaPart:intAcc]{Assertion~\ref*{LemmaPart:intAcc}} and the cone property yield $\lambda(S_T - \varepsilon \mathbf{1}_\Omega) \in \mathcal{A}$, for some $\varepsilon > 0$ and all $\lambda > 0$.
Then the other direction of \hyperref[LemmaPart:intAcc]{Assertion~\ref*{LemmaPart:intAcc}} implies $\lambda S_T \in \mathrm{int}(\acc)$.
Given $S_T \in \bar{\acc}$, take a sequence $\{S_T^n\}_{n\in\mathbb{N}} \subset \mathcal{A}$ with limit $S_T$.
Then conicity implies $\{\lambda S_T^n\}_{n\in\mathbb{N}} \subset \acc$, for any $\lambda > 0$, and we conclude that $\lambda S_T$ belongs to $\bar{\mathcal{A}}$.
To show that $0 \notin \mathrm{int}(\acc)$ assume the contrary.
Then by \hyperref[LemmaPart:intAcc]{Assertion~\ref*{LemmaPart:intAcc}}, we find an $\varepsilon > 0$ such that $-\varepsilon \mathbf{1}_\Omega \in \acc$.
The cone property and monotonicity imply $\acc = \mathcal{X}$, a contradiction.
For the last part take a decreasing sequence $\{\lambda_n\}_{n \in \mathbb{N}}$ converging to $0$ and $S_T \in \acc$.
Conicity implies $\{\lambda_n S_T\}_{n \in \mathbb{N}} \subset \acc$ and we conclude that $0 \in \bar{\acc}$.  \qedhere
\end{enumerate}
\end{proof}

\begin{remark}
\hyperref[Lemma:acceptanceSet1]{Lemma \ref*{Lemma:acceptanceSet1}} can be stated in a model-free environment of a measurable space $(\Omega,\mathcal{F})$, see \cite[Lemma 2.3]{bib:WPC1}.
Moreover, for the most part it can be extended to general ordered topological vector spaces.
See for example sections 2 and 3 in \cite{bib:WPC2}.
However, for general spaces it is necessary to substitute the interior by a refined concept such as the core.
\end{remark}

We conclude this subsection with the well-known example of the so-called Value at Risk acceptance set. 
\begin{example}[Value at Risk acceptance] \label{Ex:unmotivatedVaRAccSet}
For any probability level $\alpha \in \big(0,\frac{1}{2}\big)$ the set
\begin{align*}
\acc_\alpha = \{X_T \in \mathcal{X} \,|\,  \mathbb{P}[X_T < 0] \leq \alpha \}
\end{align*}
defines a closed, conic acceptance set which, in general, is not convex.
\end{example}
\begin{proof}
A few short calculations show that $\acc_\alpha$ is a conic acceptance set.
To show that $\acc_\alpha$ is closed in $L^\infty(\mathbb{P})$ consider a sequence $\{X_T^n\}_{n \in \mathbb{N}} \subset \acc_\alpha$ converging to some $X_T$.
For any $\delta > 0$ and any $n \in \mathbb{N}$ the following inequality holds,
\begin{align*}
\mathbb{P}[X_T < -\delta] &= \mathbb{P}[X_T < - \delta \,, X_T^n < -\tfrac{\delta}{2}] + \mathbb{P}[X_T < - \delta \,, X_T^n \geq -\tfrac{\delta}{2}] \\ &\leq \alpha + \mathbb{P}[ |X_T^n-X_T| > \tfrac{\delta}{2}] \,.
\end{align*}
Since norm convergence implies convergence in probability, we can let $n \rightarrow \infty$ and get $\mathbb{P}[X_T < -\delta] \leq \alpha$.
It follows $\mathbb{P}[X_T < 0] = \lim_{\delta \rightarrow 0} \mathbb{P}[X_T < -\delta] \leq \alpha$.
In order to show that $\acc_\alpha$ is not convex, we use conicity to reduce the problem to finding $X_T,Y_T \in \acc_\alpha$ such that $X_T+Y_T \notin \acc_\alpha$.
For two disjoint subsets $A,B \in \mathcal{F}$ with $\mathbb{P}[A] = \mathbb{P}[B] = \alpha$ the choices $X_T = -\mathbf{1}_{A}$ and $Y_T = -\mathbf{1}_{B}$ yield the desired inequality.
\end{proof}

\begin{remark}
This example can directly be extended to $L^p(\Omega,\mathcal{F},\mathbb{P})$, for $p \in [0,\infty)$.
Details can be found in the dissertation of C.~Munari \cite[Section 2.4.1]{bib:Munari}.
\end{remark}

\subsection{Traditional risk measures}  \label{Section:tradRM}
This subsection introduces the notion of traditional risk measures as commonly used by financial institutions.
Acceptance sets determine the meaning of `good' and `bad'. 
Traditional risk measures refine this differentiation and allow us to rank financial positions with respect to their distance in direction $r \mathbf{1}_\Omega$ (or $S_T/S_0$) to the acceptance set.
To clearly distinguish between these risk measures and intrinsic risk measures, we define the class of traditional risk measures following \cite[Definition 2.1]{bib:ADHE}.

\begin{definition}
A \emph{traditional risk measure} is a map from $\mathcal{X}$ into $\mathbb{R}$.
\end{definition} 
\noindent In \hyperref[Sec:IntrinsicRisk]{Section \ref*{Sec:IntrinsicRisk}}, we will see that intrinsic risk measures are defined on $\fspace$.\\

In what follows we recall some well-known traditional risk measures.
For this let $X_T, Y_T$ and $\mathbf{r} = r\mathbf{1}_{\Omega}$ be elements of $\mathcal{X}$, and let $\rho$ denote a traditional risk measure.

\subsubsection{Coherent risk measures}
Coherent risk measures form the historical foundation of the modern risk measure theory.
P.~Artzner, F.~Delbaen, J.~Eber, and D.~Heath \cite{bib:ADHE} define them by the following set of axioms. 
A traditional risk measure is called \emph{coherent} if it satisfies
\begin{itemize}[label = \raisebox{0.13ex}{\footnotesize $\bullet$}]
\item \emph{Monotonicity}: $X_T \geq Y_T$ implies $\rho(X_T) \leq \rho(Y_T)$,
\item \emph{Cash-additivity}: For $m \in \mathbb{R}$ we have $\rho(X_T + m \mathbf{r}) = \rho(X_T) - m$,
\item \emph{Positive Homogeneity}: For $\lambda \geq 0$ we have $\rho(\lambda X_T) = \lambda \rho(X_T)$, and
\item \emph{Subadditivity}: $\rho(X_T+Y_T) \leq \rho(X_T) + \rho(Y_T)$.
\end{itemize}
Decreasing monotonicity allows us to rank financial positions according to their risk.
It is cash-additivity that constitutes the basis for the interpretation of a risk measure as an additionally required amount of capital.
Adding this capital to the financial position, its risk becomes $0$, as by cash-additivity, $\rho(X_T + \rho(X_T) \mathbf{r}) = 0$. 
These assumptions seem natural in the context of capital requirements and they truly characterise the term \emph{monetary risk measures}, as coined by H.~F\"ollmer and A.~Schied in \cite[Definition 4.1]{bib:FS}. 

\subsubsection{Convexity of risk measures} 
Positive homogeneity, however, may not be satisfied, as risk can behave in a non-linear way.
A possible variation is the following property around which H.~F\"ollmer and A.~Schied \cite{bib:FS} base their discussion of risk measures.

\begin{itemize}[label = \raisebox{0.13ex}{\footnotesize $\bullet$}]
\item \emph{Convexity}: If $\lambda \in [0,1]$, then $\rho(\lambda X_T + (1-\lambda)Y_T) \leq \lambda \rho(X_T) + (1-\lambda)\rho(Y_T)$.
\end{itemize}
A short calculation reveals that under positive homogeneity, subadditivity and convexity are equivalent.
H. F\"ollmer and A. Schied \cite[Definition 4.4]{bib:FS} decide to drop the homogeneity axiom and replace subadditivity by convexity, and call the result a \emph{convex measure of risk} -- a measure which becomes coherent if the assumption of positive homogeneity is added.\\

Interestingly, the axioms we have seen so far form a canonical connection to our acceptance sets.

\begin{proposition} \label{Prop:CorrespondenceOfRiskAndAcc}
Any monetary risk measure $\rho: \mathcal{X} \rightarrow \mathbb{R}$ defines an acceptance set 
\begin{align} \label{eq:accViaRho}
\acc_\rho = \{X_T \in \mathcal{X} \,|\, \rho(X_T) \leq 0 \}\,.
\end{align}
Moreover, if $\rho$ is positive homogeneous, then $\acc_\rho$ is a cone, and if $\rho$ is convex, then $\acc_\rho$ is convex.

On the other hand, each acceptance set $\acc$ defines a monetary risk measure via 
\begin{align} \label{eq:rhoViaAcc}
\rho_{\acc}(X_T) = \inf\{m \in \mathbb{R} \,|\, X_T + m \mathbf{r} \in \acc\}\,.
\end{align}
Similarly, if $\acc$ is a cone, then $\rho_\acc$ is positive homogeneous, and if $\acc$ is convex, then $\rho_\acc$ is convex.

In particular, this means $\rho_{\acc_\rho} = \rho$ and $\acc \subseteq \acc_{\rho_\acc}$, with equality $\acc= \acc_{\rho_\acc}$ if the acceptance set is closed.
\end{proposition}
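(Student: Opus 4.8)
The plan is to verify each assertion in turn, treating the two constructions $\rho \mapsto \acc_\rho$ and $\acc \mapsto \rho_\acc$ and then their composition. For the first construction, I would start by checking that $\acc_\rho$ is an acceptance set: non-triviality follows because $\rho$ is real-valued (so neither $\acc_\rho = \emptyset$ nor $\acc_\rho = \mathcal{X}$, using that $\rho$ takes both a value $\le 0$ and a value $> 0$ thanks to monotonicity and cash-additivity), and monotonicity of $\acc_\rho$ is immediate from decreasing monotonicity of $\rho$. If $\rho$ is positive homogeneous and $X_T \in \acc_\rho$, then $\rho(\lambda X_T) = \lambda \rho(X_T) \le 0$ for $\lambda > 0$, so $\acc_\rho$ is a cone; if $\rho$ is convex and $X_T, Y_T \in \acc_\rho$, then $\rho(\lambda X_T + (1-\lambda) Y_T) \le \lambda \rho(X_T) + (1-\lambda)\rho(Y_T) \le 0$, so $\acc_\rho$ is convex.

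For the second construction, I would first argue that $\rho_\acc$ is well-defined (real-valued): finiteness from above uses that $\acc$ contains all sufficiently large constants (Lemma \ref{Lemma:acceptanceSet1}\ref{LemmaPart:allConst}) together with monotonicity, so the defining set is nonempty; finiteness from below uses that no sufficiently small constant lies in $\acc$, so the infimum is not $-\infty$. Monotonicity and cash-additivity of $\rho_\acc$ are the standard translation-of-the-infimum computations: if $X_T \ge Y_T$, every $m$ admissible for $Y_T$ is admissible for $X_T$ by monotonicity of $\acc$, giving $\rho_\acc(X_T) \le \rho_\acc(Y_T)$; and the substitution $m \mapsto m + m'$ in the defining set yields $\rho_\acc(X_T + m'\mathbf{r}) = \rho_\acc(X_T) - m'$. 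The conic case: if $\acc$ is a cone and $\lambda > 0$, rescaling the constraint $X_T + m\mathbf{r} \in \acc$ by $\lambda$ shows $m$ is admissible for $X_T$ iff $\lambda m$ is admissible for $\lambda X_T$, whence $\rho_\acc(\lambda X_T) = \lambda \rho_\acc(X_T)$, and the case $\lambda = 0$ reduces to $\rho_\acc(0)$, which is $0$ by the cone property combined with the previous observations. The convex case: given $X_T, Y_T$ and $\lambda \in [0,1]$, pick near-optimal $m, m'$ for $X_T, Y_T$; then $\lambda(X_T + m\mathbf{r}) + (1-\lambda)(Y_T + m'\mathbf{r}) \in \acc$ by convexity, so $\lambda m + (1-\lambda)m'$ is admissible for $\lambda X_T + (1-\lambda)Y_T$, and passing to the infimum over $m, m'$ gives convexity of $\rho_\acc$.

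Finally I would treat the composition. For $\rho_{\acc_\rho} = \rho$: using cash-additivity, $X_T + m\mathbf{r} \in \acc_\rho$ iff $\rho(X_T + m\mathbf{r}) \le 0$ iff $\rho(X_T) \le m$, so the infimum over such $m$ is exactly $\rho(X_T)$. For $\acc \subseteq \acc_{\rho_\acc}$: if $X_T \in \acc$ then $m = 0$ is admissible, so $\rho_\acc(X_T) \le 0$, i.e. $X_T \in \acc_{\rho_\acc}$. For the reverse inclusion under closedness: if $\rho_\acc(X_T) \le 0$, take a minimizing sequence $m_n \downarrow \rho_\acc(X_T) \le 0$ with $X_T + m_n \mathbf{r} \in \acc$; since $m_n \ge \rho_\acc(X_T)$, monotonicity of $\acc$ lets us replace $m_n$ by $\max(m_n, 0) \to 0^+$ if needed, and then $X_T + m_n\mathbf{r} \to X_T$ in norm, so closedness gives $X_T \in \acc$. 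The main subtlety — the step I would be most careful about — is the well-definedness of $\rho_\acc$ and the handling of the infimum in the closedness argument (ensuring the minimizing sequence can be taken so that $X_T + m_n\mathbf{r} \in \acc$ and converges to $X_T$, which is where monotonicity of $\acc$ is genuinely used alongside closedness); everything else is routine manipulation of infima and the defining inequalities.
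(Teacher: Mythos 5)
Your proposal is correct and is precisely the standard Föllmer--Schied argument (their Propositions 4.6 and 4.7) that the paper's proof simply cites rather than reproducing; every step, including the careful handling of the minimizing sequence via monotonicity in the closedness argument, matches that route. The only point you gloss slightly is $\rho_\acc(0)=0$ in the conic case, but it does follow as you indicate from conicity and monotonicity of $\acc$ together with the finiteness observations.
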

\begin{proof}
The proof is analogous to the proofs of Proposition 4.6 and Proposition 4.7 in \cite{bib:FS} for bounded measurable functions on $(\Omega,\mathcal{F})$.
\end{proof}

\hyperref[Prop:CorrespondenceOfRiskAndAcc]{Proposition \ref*{Prop:CorrespondenceOfRiskAndAcc}} allows us to define acceptance sets via known risk measures and vice versa, as is illustrated in the following example. 
\hyperref[Prop:CorrespondenceOfRiskAndAcc]{Proposition \ref*{Prop:CorrespondenceOfRiskAndAcc}} can be stated for more general spaces $\mathcal{X}$ and eligible assets, for this see \hyperref[Prop:CorrespondenceOfRiskAndAcc2]{Proposition \ref*{Prop:CorrespondenceOfRiskAndAcc2}}.

\begin{example}[Value at Risk acceptance] \label{Def:VaRandES}
For a given probability level $\alpha \in \big(0, \frac{1}{2}\big)$ we define the risk measure \emph{Value at Risk} for all random variables on $(\Omega,\mathcal{F})$ by
\begin{align*}
\text{VaR}_\alpha(X_T) = \inf \{ m \in \mathbb{R} \,|\, \mathbb{P}[ X_T + m < 0] \leq \alpha \}\,,
\end{align*}
the negative of the upper $\alpha$-quantile of $X_T$.
Corresponding to \hyperref[Prop:CorrespondenceOfRiskAndAcc]{Proposition \ref*{Prop:CorrespondenceOfRiskAndAcc}}, the $\text{VaR}_\alpha$-acceptance set is given by
\begin{align*} 
\acc_\alpha := \acc_{\mathrm{VaR}_\alpha} = \{X_T \in \mathcal{X} \,|\, \text{VaR}_\alpha(X_T) \leq 0 \} \,.
\end{align*}
Let us recall the closed, conic set $\{X_T \in \mathcal{X} \,|\,  \mathbb{P}[X_T < 0] \leq \alpha \}$ from \hyperref[Ex:unmotivatedVaRAccSet]{Example \ref*{Ex:unmotivatedVaRAccSet}}. The risk measure defined by this set via \hyperref[eq:rhoViaAcc]{Equation (\ref*{eq:rhoViaAcc})} is just the Value at Risk.
So we conclude that $\acc_\alpha = \{X_T \in \mathcal{X} \,|\,  \mathbb{P}[X_T < 0] \leq \alpha \}$ and that $\mathrm{VaR}_\alpha$ is a positive homogeneous monetary risk measure which, in general, is not convex, and thus, not coherent.
\end{example}

\subsubsection{Cash-subadditivity and quasi-convexity of risk measures} 
N.~El Karoui and C.~Ravanelli \cite{bib:EKR} point out that in presence of stochastic interest rates the axiom of cash-additivity relies on the assumption that the discounting process does not carry additional risk, since a financial position is discounted prior to applying the risk measure.
To relax this restriction they introduce the property of cash-subadditivity, where the equality in the cash-additivity condition is changed to the inequality `$\geq$'.
However, S.~Cerreia-Vioglio, F.~Maccheroni, M.~Marinacci and L.~Montrucchio \cite{bib:CMMM} explain that under cash-subadditivity, convexity is not a rigorous representative of the diversification principle which translates into the following requirement for risk measures.
\begin{itemize}[label = \raisebox{0.13ex}{\footnotesize $\bullet$}]
\item \emph{Diversification Principle}: If $\rho(X_T),\rho(Y_T) \leq \rho(Z_T)$ is satisfied, then for all $\lambda \in [0,1]$ also $\rho(\lambda X_T + (1-\lambda)Y_T) \leq \rho(Z_T)$ holds.
\end{itemize}
Substituting $\rho(Z_T)$ by $\max\{\rho(X_T), \rho(Y_T)\}$ yields the equivalent and recently importance gaining property of  
\begin{itemize}[label = \raisebox{0.13ex}{\footnotesize $\bullet$}]
\item \emph{Quasi-convexity}: If $\lambda \in [0,1]$, then $\rho(\lambda X_T + (1-\lambda)Y_T) \leq \max\{\rho(X_T), \rho(Y_T)\}$.
\end{itemize}
Interestingly, quasi-convexity is equivalent to convexity under cash-additivity, since for any two positions with $\rho(X_T) \leq \rho(Y_T)$ we find an $m \in \mathbb{R}_{\geq 0}$ such that $\rho(X_T - m \mathbf{r}) = \rho(Y_T)$ so that we get for any $\lambda \in (0,1)$
\begin{align*}
\rho(\lambda X_T + (1-\lambda) Y_T) + \lambda m &\leq \max\{ \rho(X_T - m\mathbf{r}),\rho(Y_T)\} \\
 &= \lambda \rho(X_T) + (1-\lambda) \rho(Y_T) + \lambda m \,.
\end{align*}
This equivalence does not hold under cash-subadditivity as shown with all details in \cite[Example 2.10]{bib:AS}, resulting in the necessity to explicitly implement the diversification principle and thus, in the introduction of cash-subadditive, quasi-convex risk measures.

\subsubsection{General monetary risk measures} 
However, stochastic interest rates can be directly addressed with risk measures of the form introduced and treated in \cite{bib:WPC2} and \cite{bib:WPC1},  
\begin{align} \label{Eq:GenMonRisk}
\rho_{\acc,S}(X_T) = \inf \Big\{m \in \mathbb{R} \,\big|\, X_T + \frac{m}{S_0}S_T \in \acc \Big\}\,.
\end{align}
This approach avoids discounting, since the stochastic eligible asset is incorporated into the risk measure. 
Moreover, C.~Munari provides a broad discussion of the discounting argument, revealing further fundamental issues with discounting on the level of acceptance sets in Section 1.3 of \cite{bib:Munari}.

\hyperref[Eq:GenMonRisk]{Equation (\ref*{Eq:GenMonRisk})} defines a generalised monetary risk measures which satisfies the following requirement for a specific eligible asset $S = (S_0, S_T )$,
\begin{itemize}[label = \raisebox{0.13ex}{\footnotesize $\bullet$}]
\item \emph{S-additivity}: If $m \in \mathbb{R}$, then $\rho(X_T + mS_T ) = \rho(X_T) - m S_0$.
\end{itemize}
Also this general setup yields the equivalence of quasi-convexity and convexity, and it exhibits a similar correspondence between acceptance sets and risk measures. 
The following result extends \hyperref[Prop:CorrespondenceOfRiskAndAcc]{Proposition \ref*{Prop:CorrespondenceOfRiskAndAcc}} to stochastic eligible assets.
It will be used in \hyperref[Sec:IntrinsicRiskonPosHom&Cashadditive]{Section \ref*{Sec:IntrinsicRiskonPosHom&Cashadditive}} and \hyperref[Sec:DualRepres]{Section \ref*{Sec:DualRepres}} to relate intrinsic to traditional risk measures. 

\begin{proposition} \label{Prop:CorrespondenceOfRiskAndAcc2}
\hyperref[Prop:CorrespondenceOfRiskAndAcc]{Proposition \emph{\ref*{Prop:CorrespondenceOfRiskAndAcc}}} holds true if we replace $L^\infty(\Omega,\mathcal{F},\mathbb{P})$ by any real ordered topological vector space, cash-additivity by $S$-additivity, and \hyperref[eq:rhoViaAcc]{Equation \emph{(\ref*{eq:rhoViaAcc})}} by \hyperref[Eq:GenMonRisk]{Equation \emph{(\ref*{Eq:GenMonRisk})}}, for any eligible asset $S = (S_0,S_T) \in \mathbb{R}_{>0} \times \acc$.
\end{proposition}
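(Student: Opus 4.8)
The plan is to obtain Proposition~\ref{Prop:CorrespondenceOfRiskAndAcc2} by rerunning the proof of Proposition~\ref{Prop:CorrespondenceOfRiskAndAcc} (equivalently, Propositions~4.6 and~4.7 in \cite{bib:FS}) almost verbatim, with two mechanical substitutions: the deterministic direction $\mathbf r=r\mathbf 1_\Omega$ is replaced everywhere by the normalised payoff $S_T/S_0$, and the cash-additivity identity $\rho(X_T+m\mathbf r)=\rho(X_T)-m$ is replaced by $S$-additivity, $\rho(X_T+mS_T)=\rho(X_T)-mS_0$. Every argument below uses only monotonicity, the linear structure, and -- for the statements involving closures -- the continuity of vector addition and scalar multiplication, so no norm intervenes and the reasoning remains valid on an arbitrary ordered topological vector space.

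For the direction $\rho\mapsto\acc_\rho$ I would first observe that monotonicity of $\acc_\rho=\{\rho\le0\}$ is inherited directly from monotonicity of $\rho$, and that non-triviality follows from $S$-additivity: for any fixed $X_T$ one has $\rho(X_T+mS_T)=\rho(X_T)-mS_0\to-\infty$ and $\rho(X_T-mS_T)=\rho(X_T)+mS_0\to+\infty$ as $m\to+\infty$ (this is where $S_0>0$ is needed), so $\acc_\rho$ is nonempty and properly contained in $\mathcal X$. The two preservation statements are one-liners from the definition of $\acc_\rho$: positive homogeneity of $\rho$ gives $\rho(X_T)\le0\Rightarrow\rho(\lambda X_T)=\lambda\rho(X_T)\le0$ for $\lambda>0$, hence $\acc_\rho$ is a cone; convexity of $\rho$ gives convexity of $\acc_\rho$ by applying the convexity inequality to two acceptable positions.

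For the direction $\acc\mapsto\rho_{\acc,S}$, writing $\rho:=\rho_{\acc,S}$ as in \hyperref[Eq:GenMonRisk]{Equation~(\ref*{Eq:GenMonRisk})}, I would verify in turn: monotonicity of $\rho$ (if $Y_T\ge X_T$, every $m$ admissible for $X_T$ is admissible for $Y_T$ by monotonicity of $\acc$, so the infimum can only decrease); $S$-additivity (the affine change of variable $m\mapsto m+cS_0$ identifies the admissible set of $X_T+cS_T$ with that of $X_T$ shifted by $-cS_0$); positive homogeneity when $\acc$ is a cone (for $\lambda>0$ use the substitution $m\mapsto\lambda m$ and $\lambda\bigl(X_T+\tfrac{m}{S_0}S_T\bigr)\in\acc\Longleftrightarrow X_T+\tfrac{m}{S_0}S_T\in\acc$, together with $\rho(0)=0$, which follows from $S_T\in\acc$ and conicity exactly as in Proposition~\ref{Prop:CorrespondenceOfRiskAndAcc}); and convexity when $\acc$ is convex (choose, for $\varepsilon>0$, reals $m_X,m_Y$ with $X_T+\tfrac{m_X}{S_0}S_T,\,Y_T+\tfrac{m_Y}{S_0}S_T\in\acc$ and $m_X\le\rho(X_T)+\varepsilon$, $m_Y\le\rho(Y_T)+\varepsilon$, convex-combine the two memberships, and let $\varepsilon\downarrow0$). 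The identity $\rho_{\acc_\rho,S}=\rho$ then drops out of $S$-additivity of $\rho$, since $\inf\{m:\rho(X_T)-m\le0\}=\rho(X_T)$, and the inclusion $\acc\subseteq\acc_{\rho_{\acc,S}}$ is witnessed by $m=0$.

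The hard part, and the only step that is more than bookkeeping, will be the reverse inclusion $\acc_{\rho_{\acc,S}}\subseteq\acc$ under closedness. I would take $X_T$ with $c:=\rho_{\acc,S}(X_T)\le0$, pick $m_n\downarrow c$ with $X_T+\tfrac{m_n}{S_0}S_T\in\acc$, and pass to the limit using continuity of the vector-space operations -- this replaces the $L^\infty$-norm argument -- to get $X_T+\tfrac{c}{S_0}S_T\in\bar\acc=\acc$; if $c=0$ this is $X_T$ itself, and if $c<0$ one uses $X_T\ge X_T+\tfrac{c}{S_0}S_T$, which is where positivity of the eligible payoff enters (as $S_T\ge0$ does in the $L^\infty$ model), and monotonicity to conclude $X_T\in\acc$. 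The subtleties to watch are precisely that closedness is needed to make the defining infimum attained in the limit, that $\{m:X_T+\tfrac{m}{S_0}S_T\in\acc\}$ should be an upper half-line for this to function (again a consequence of $S_T$ being positive), and that finiteness of $\rho_{\acc,S}$ and the normalisation $\rho_{\acc,S}(0)=0$ similarly rest on the eligible payoff being suitably positive -- all automatic in the $L^\infty$ financial setting by Lemma~\ref{Lemma:acceptanceSet1}, which is why the statement there reduces cleanly to Proposition~\ref{Prop:CorrespondenceOfRiskAndAcc}.
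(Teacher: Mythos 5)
Your proposal is correct and follows essentially the same route as the paper, which simply delegates these standard verifications to Propositions 3.2.3--3.2.8 of Munari's thesis; you have written out exactly the arguments being cited, with the right substitutions ($\mathbf r \mapsto S_T/S_0$, cash-additivity $\mapsto$ $S$-additivity) and with the correct identification of where closedness, continuity of the vector-space operations, and positivity of $S_T$ enter. Your closing caveat that finiteness of $\rho_{\acc,S}$ and $\rho_{\acc,S}(0)=0$ rest on the eligible payoff being suitably positive is also the right thing to flag, since for a general $S_T\in\acc$ not bounded away from zero these can fail -- a point the paper leaves implicit by citing the references where it is treated.
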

\begin{proof}
See the proofs of propositions 3.2.3, 3.2.4, 3.2.5, and 3.2.8 in \cite{bib:Munari}.
The second claim in \hyperref[Prop:CorrespondenceOfRiskAndAcc]{Proposition \emph{\ref*{Prop:CorrespondenceOfRiskAndAcc}}} follows from two short calculations. 
\end{proof}

\subsection{Back to the financial drawing board}
All risk measures in the previous section have their foundation in the idea to add additional capital through eligible assets or directly as money to the existing financial position.
Consequently, the procedure to make an unacceptable position $X_T$ \mbox{acceptable} is to raise at least the `minimal' required capital $\rho_{\acc,S}(X_T)$ and invest it into $S$.
But the problems of providing capital and the risk of failing to obtain enough are not addressed in this approach.
In the literature, authors even concentrate on ensuring that this amount stays finite. 
Apart from the construal that the financial position cannot reach the acceptance set, infinite values have no practical effect for applications and the operational interpretation as additional capital gets lost.

In the next section, we propose a different action. 
As mentioned in the introduction, since the traditional risk measures are defined via hypothetical additional capital which is not always available in reality, we introduce in what follows a risk measure that only allows the usage of internal capital contained in the financial position.
We propose to use a prespecified amount of available capital, the current value of the financial position, and invest it into an eligible asset.
This approach has two simultaneous effects on the altered position.
Firstly, selling portions of a position, we reduce the potential risk therein. 
Secondly, we invest the acquired capital into an eligible asset which by definition has acceptable risk.
This procedure yields a convex combination of the financial position and a multiple of the eligible asset.
The result is a more direct path towards the acceptance set and a less costly action, its overall  cost being bounded by the initial value of the financial position.

\section{The intrinsic risk measure} \label{Sec:IntrinsicRisk} 
In light of the motivation given above, we introduce the new approach in this section.
\mbox{Following} the framework of general eligible assets we define the intrinsic risk measure in \hyperref[subsec:IntroductionIntrinsicRisk]{Section \ref*{subsec:IntroductionIntrinsicRisk}}. Intuition behind the approach and differences to traditional risk measures are illustrated in \hyperref[fig:visualExample2]{Figure \ref*{fig:visualExample2}}.
In \hyperref[subsec:ClassificationIntrinsicRisk]{Section \ref*{subsec:ClassificationIntrinsicRisk}}, emerging properties of this measure are studied in juxtaposition with the ones of traditional measures. 

\subsection{Introducing the intrinsic risk measure} \label{subsec:IntroductionIntrinsicRisk}
From here on, we explicitly consider a one period economy with inception at time $0$ and maturity at time $T$.
Eligible assets as used in \hyperref[Eq:GenMonRisk]{Equation (\ref*{Eq:GenMonRisk})} are given by pairs $S = (S_0,S_T)$.
This characteristic is now extended to any financial position.

\begin{definition}
Financial positions are defined on the product space $\fspace$.
\begin{enumerate}
\item Call $X = (X_0,X_T) \in \mathbb{R}_{>0} \times \mathcal{X}$ an \emph{extended financial position}.
The number $X_0$ denotes the price of the position at inception and $X_T$ denotes the random payoff or net worth of the position at maturity.
\item Given an acceptance set $\acc$, call $S = (S_0,S_T) \in \mathbb{R}_{>0} \times \acc$ an \emph{extended eligible asset} if $S_T \geq 0$.
\end{enumerate}
We shall keep the terms \emph{financial position} and \emph{eligible asset} to denote their extended versions.
Random variables always have a subscript $T$, so that there is no risk of confusion.
\end{definition}
\vspace{8pt}

On this basis, we can now introduce the announced class.

\begin{definition}[Intrinsic Risk Measure] \label{Def:IntrinsicRiskMeasure}
Let $\mathcal{A} \subset \mathcal{X}$ be an acceptance set and let $S \in \mathbb{R}_{>0} \times \mathcal{A}$ be an eligible asset.  \\
An \emph{intrinsic risk measure} is a map $R_{\mathcal{A},S} : \mathbb{R}_{>0} \times \mathcal{X} \rightarrow [0,1]$ defined by
\begin{align} \label{Eq:IntrinsicRisk}
\riskR{A}{S}{X} = \riskset{X}{S} \,.
\end{align}
\end{definition}

\begin{remark}
The functional introduced above measures risk as the smallest percentage $\riskR{A}{S}{X}$ of a financial position $X$ that has to be sold at inception for the amount $ X_0 \riskR{A}{S}{X}$, and in turn has to be reinvested into the eligible asset $S$ so that the resulting position becomes acceptable. 
\end{remark}

We provide a first intuition for how these measures operate and what the main differences compared to traditional risk measures are in the following example.
\begin{example} \label{Ex:visualExample}
Consider \hyperref[fig:visualExample2]{Figure \ref*{fig:visualExample2}} below and let $\acc$ be an arbitrary closed acceptance set.
The traditional approach illustrated in \hyperref[fig:visualExample2Sub1]{Figure \ref*{fig:visualExample2}(a)} yields an acceptable altered position \mbox{$X_T^\rho :=  X_T + \frac{\rho_{\acc,S}(X_T)}{S_0} S_T$}. The intrinsic risk measure illustrated in \hyperref[fig:visualExample2Sub2]{Figure \ref*{fig:visualExample2}(b)} gives us \mbox{$X_T^R := (1-\riskR{A}{S}{X}) X_T + \riskR{A}{S}{X} \frac{X_0}{S_0} S_T$}.
We point out two characteristics of intrinsic risk measures:
\begin{itemize}[label = \raisebox{0.13ex}{\footnotesize $\bullet$}]
\item The position $X_T^R$ lies on the boundary of $\acc$, just as $X_T^\rho \in \partial\acc$ in the traditional approach.
This will be proven in \hyperref[Prop:ResultingPosition]{Proposition \ref*{Prop:ResultingPosition}}.
\item If additionally $\acc$ is a cone as in \hyperref[fig:visualExample2]{Figure \ref*{fig:visualExample2}}, we see that $X_T^\rho$ is a multiple of $X_T^R$.
Indeed, if $R_{\acc,S}(X) \in (0,1)$, then we have the relation
\begin{align*}
X_T^R = (1-R_{\acc,S}(X)) X_T^\rho \,.
\end{align*}
For more details see \hyperref[Cor:rhoInTermsOfR]{Corollary \ref*{Cor:rhoInTermsOfR}} and \hyperref[Rem:X^RScaledX^rho]{Remark \ref*{Rem:X^RScaledX^rho}}.
\end{itemize} 
\begin{figure}[h] 
\centering
\begin{subfigure}{.5\textwidth}
  \centering
  \includegraphics[scale= 0.32]{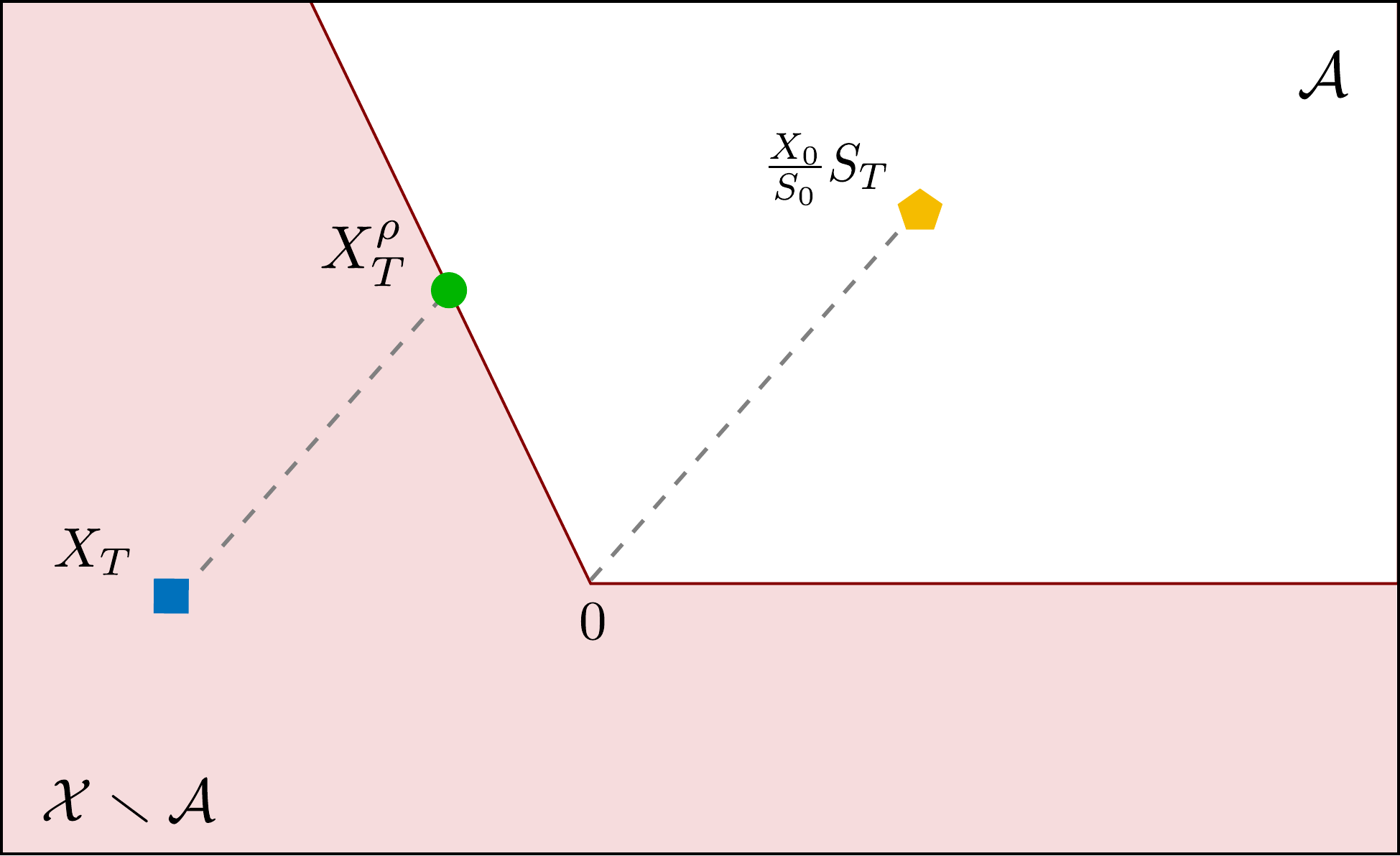}
  \caption{Traditional approach}
  \label{fig:visualExample2Sub1}
\end{subfigure}%
\begin{subfigure}{.5\textwidth}
  \centering
  \includegraphics[scale= 0.32]{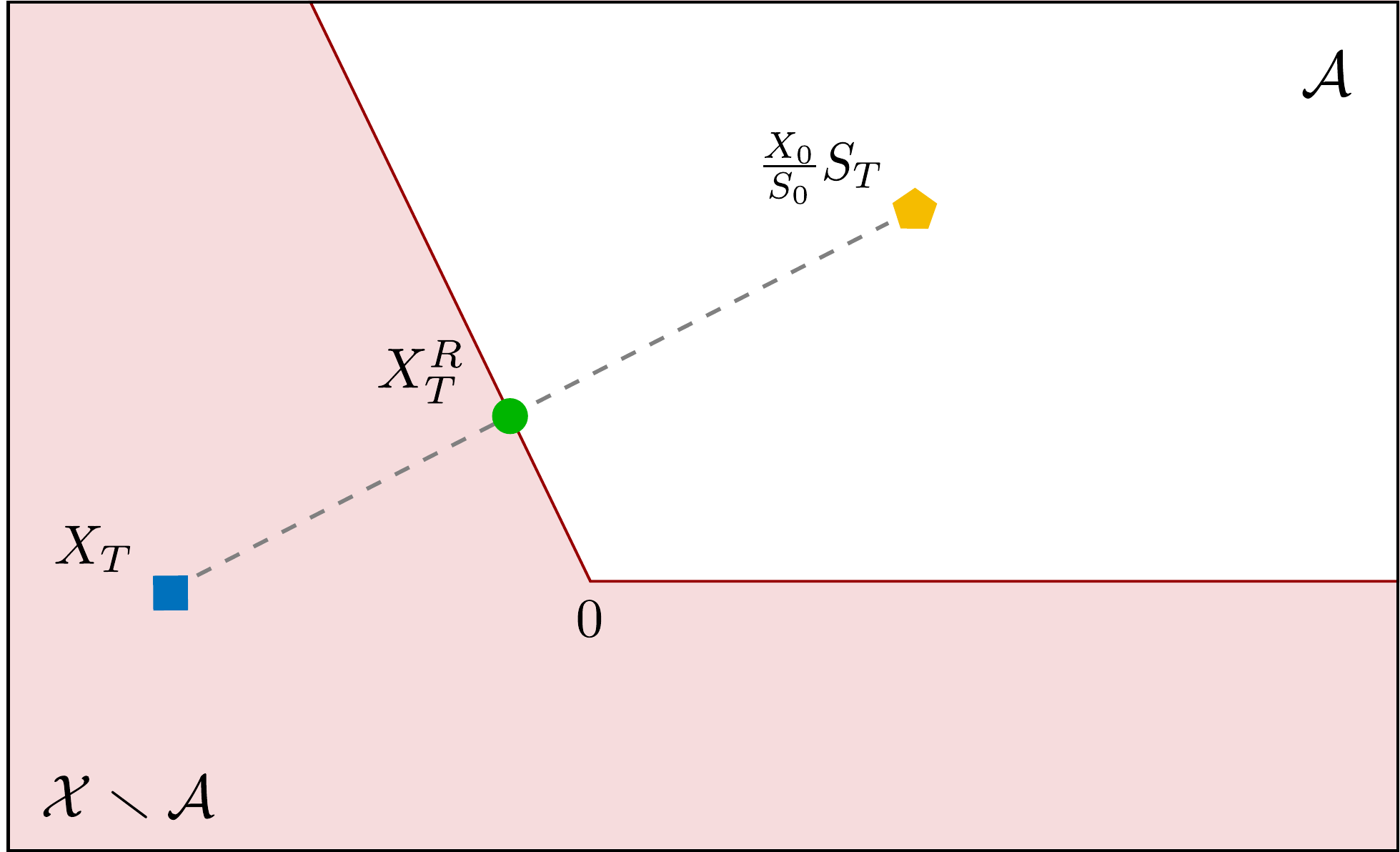}
  \caption{Intrinsic approach}
  \label{fig:visualExample2Sub2}
\end{subfigure}
\caption{The payoff of the eligible asset (yellow  \pentagon) is used to make the unacceptable position (blue \Square) acceptable (green \Circle).}
\label{fig:visualExample2}
\end{figure}
\end{example}

The following proposition provides two conditions for well-definedness.

\begin{proposition} \label{Prop:Existenz}
Let $\acc \subset \mathcal{X}$ be an acceptance set and $S$ an eligible asset.
If $\mathcal{A}$ is a cone or $\acc$ contains $0$, then $R_{\acc,S}$ is well-defined on $\mathbb{R}_{>0} \times \mathcal{X}$.
\end{proposition}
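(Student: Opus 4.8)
The plan is to show that the set over which the infimum in \hyperref[Eq:IntrinsicRisk]{Equation (\ref*{Eq:IntrinsicRisk})} is taken is always nonempty, so that the infimum is a well-defined element of $[0,1]$ (it is automatically bounded below by $0$ and, being a subset of $[0,1]$, bounded above). Concretely, writing $\Lambda(X) = \left\{ \lambda \in [0,1] \,\big|\, (1-\lambda)X_T + \lambda \tfrac{X_0}{S_0} S_T \in \mathcal{A} \right\}$, I would exhibit an explicit $\lambda \in [0,1]$ lying in $\Lambda(X)$ under each of the two hypotheses.

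First, suppose $\mathcal{A}$ is a cone. Since $S \in \mathbb{R}_{>0} \times \mathcal{A}$, we have $S_T \in \mathcal{A}$, and since $X_0, S_0 > 0$ the scalar $\tfrac{X_0}{S_0}$ is strictly positive, so the cone property gives $\tfrac{X_0}{S_0} S_T \in \mathcal{A}$. Taking $\lambda = 1$ makes the convex combination equal to $\tfrac{X_0}{S_0} S_T \in \mathcal{A}$, hence $1 \in \Lambda(X)$ and $\Lambda(X) \neq \emptyset$. Second, suppose $0 \in \mathcal{A}$. Then for $\lambda = 1$ the combination is again $\tfrac{X_0}{S_0} S_T$, but now I cannot immediately conclude it is acceptable without conicity; instead I would use monotonicity together with $0 \in \mathcal{A}$ and $S_T \geq 0$ (which holds by the definition of an eligible asset): since $\tfrac{X_0}{S_0} S_T \geq 0 \in \mathcal{A}$, monotonicity yields $\tfrac{X_0}{S_0} S_T \in \mathcal{A}$, so once more $1 \in \Lambda(X)$. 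In either case the infimum is taken over a nonempty subset of the compact interval $[0,1]$, so $\riskR{A}{S}{X} \in [0,1]$ is well-defined, and this holds for every $X \in \mathbb{R}_{>0} \times \mathcal{X}$.

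I do not anticipate a genuine obstacle here; the statement is essentially a sanity check that the defining set is nonempty. The only subtlety worth stating carefully is why $\tfrac{X_0}{S_0} S_T \in \mathcal{A}$ in the case $0 \in \mathcal{A}$: one must invoke the sign condition $S_T \geq 0$ built into the definition of an eligible asset, together with monotonicity of $\mathcal{A}$, rather than any scaling property. It is also worth remarking that the two hypotheses are the natural minimal conditions: without them, a position $X$ whose payoff is very negative while $\tfrac{X_0}{S_0} S_T$ fails to be acceptable for any admissible scaling could leave $\Lambda(X)$ empty, forcing the conventional value $\inf \emptyset = +\infty \notin [0,1]$.
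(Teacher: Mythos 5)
Your proof is correct and follows essentially the same route as the paper's: show the defining set is nonempty by checking $\lambda = 1$, using conicity with $\tfrac{X_0}{S_0} > 0$ in the first case and monotonicity together with $S_T \geq 0$ and $0 \in \mathcal{A}$ in the second. The remark that the second case needs the sign condition rather than scaling is exactly the distinction the paper's proof draws as well.
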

\begin{proof}
As we search for the smallest $\lambda \in [0, 1]$ such that $(1-\lambda)X_T + \lambda \frac{X_0}{S_0} S_T$ belongs to $\mathcal{A}$, investing all of $X_0$ into the eligible asset, that means choosing $\lambda = 1$, will always result in an acceptable position as long as $\frac{X_0}{S_0} S_T \in \acc$.
\begin{enumerate}[wide=0pt]
\item If $\acc$ is a cone and $S_T \in \acc$, then $\lambda S_T \in \acc$, for any $\lambda > 0$.
\item Since $0 \in \acc$ and $S_T \geq 0$, monotonicity of $\acc$ implies that $\lambda S_T \in \mathcal{A}$, for any $\lambda \geq 0$.
\end{enumerate}
Since by definition $X_0, S_0 > 0$, the assertions follow. 
\end{proof}

\begin{remark} \label{Rem:InitialRemark}
\begin{asparaenum}
\item The assumption $0 \in \acc$ already found its way into the financial literature.
For example, it is equivalent to Axiom 2.1 in \cite{bib:ADHE} that all non-negative random variables are contained in $\acc$.
Furthermore, the \emph{normalisation property} $\rho(0) = 0$ introduced in \cite[below Definition 4.1]{bib:FS} implies $0 \in \acc$ if the acceptance set is closed.
In this case, also conicity of $\acc$ implies $0 \in \acc$. 
\item Usually, at least one of the conditions in \hyperref[Prop:Existenz]{Propositions \ref*{Prop:Existenz}} is required.
More restrictive acceptability criteria which regulate size and reject positive positions must be handled with care.
In this case, it can be of interest to restrict the definition of $R_{\acc,S}$ to $\mathbb{R}_{>c} \times \mathcal{X}$ for some constant $c > 0$ such that $\frac{c}{S_0} S_T \in \acc$. 
\item In contrast to traditional risk measures, intrinsic risk measures cannot attain infinite values, making the question of finiteness superfluous.
If $\acc$ is a cone, \hyperref[Prop:eligibleAssetInfluence]{Proposition \ref*{Prop:eligibleAssetInfluence}} or alternatively \hyperref[Prop:IntrinsicRepresPosHom&S-Additive]{Theorem \ref*{Prop:IntrinsicRepresPosHom&S-Additive}} reveal a connection between traditional risk measures taking the value $+\infty$ and intrinsic risk measures being equal to $1$. 
However, infinite values have little meaning in applications, whereas an intrinsic risk of $1$ suggests a practicable transition from $X_T$ to $\frac{X_0}{S_0} S_T$. 
\item Another salient difference to traditional risk measures is that intrinsic risk measures do not take negative values. 
While traditional risk measures exhibit a form of symmetry around the boundary of the acceptance set, translating financial positions towards that boundary, irrespective of their acceptability, intrinsic risk measures do not alter acceptable positions.
In this way the principal objective in working with acceptability criteria is emphasised, only moving unacceptable positions into the acceptance set.
\item The product $X_0 \riskR{A}{S}{X} \in [0,X_0]$ of the intrinsic risk with the initial value $X_0$ is a monetary amount.
This will be used to compare intrinsic risk measures to monetary risk measures in nominal terms in \hyperref[Cor:MonetaryComparison]{Corollary \ref*{Cor:MonetaryComparison}}.
\item As a direct consequence of the definition, we have for $\alpha X = (\alpha X_0, \alpha X_T)$ with $\alpha > 0$, that $\riskR{A}{S}{\alpha X} = R_{\alpha^{-1}\mathcal{A},S}(X)$.
This equality allows us to define the intrinsic risk measure in terms of returns of financial positions, setting $\alpha = (X_0)^{-1}$.
In particular, if $\acc$ is a cone, then the measure effectively operates on returns, as it is \emph{scale-invariant}. 
We will address this property again in \hyperref[Cor:scaleInv]{Corollary \ref*{Cor:scaleInv}}.
Nevertheless, we consider the first approach to be more transparent and we will explicitly keep track of the initial values $X_0$ and $S_0$.
\end{asparaenum}
\end{remark}

\begin{example} \label{Ex:generalExample}
Consider the Value at Risk acceptance set $\acc_\alpha = \{X_T \in \mathcal{X} \,|\,  \mathbb{P}[X_T < 0] \leq \alpha \}$ from \hyperref[Def:VaRandES]{Example \ref*{Def:VaRandES}}.
Denote by $F_X$ the continuous cumulative distribution function of $X_T$ with inverse $F_X^{-1}$.
Assume $X_T \notin \acc_\alpha$, this means $F_X^{-1}(\alpha) < 0$, and let $S_T  = r S_0 \mathbf{1}_{\Omega} > 0$ be constant, this implies $S_T \in \mathrm{int}(\acc_\alpha)$ (see \cite[Proposition 3.6]{bib:WPC2} or for $L^p$ spaces see \cite[Lemma 4.1]{bib:WPC1}).
We will see in \hyperref[subsec:ClassificationIntrinsicRisk]{Section \ref*{subsec:ClassificationIntrinsicRisk}} that in this case, we can restrict $\lambda$ to $(0,1)$, so that we can write 
\begin{align*}
R_{\acc_\alpha,S}(X) &= \inf \big\{ \lambda \in [0,1] \,|\, X_T^{\lambda,S} \in \acc_\alpha \big\} = \inf \big\{ \lambda \in (0,1) \,|\, \mathbb{P}[X^{\lambda,S}_T < 0] \leq \alpha \big\} \\
&= \inf \big\{ \lambda \in (0,1) \,|\, F_X(-(1-\lambda)^{-1}\lambda r X_0) \leq \alpha \big\} = \frac{F_X^{-1}(\alpha)}{F_X^{-1}(\alpha) -  r X_0} \\
&= \frac{\text{VaR}_\alpha(X_T)}{r X_0 + \text{VaR}_\alpha(X_T)} \,.
\end{align*}
So when we use $\acc_\alpha$ to define the intrinsic risk measure, then we can write it as a function of the corresponding traditional risk measure.
A more direct and general derivation will be presented in \hyperref[Prop:IntrinsicRepresPosHom&S-Additive]{Theorem \ref*{Prop:IntrinsicRepresPosHom&S-Additive}}. 
\end{example}

\subsection{Classification of intrinsic risk measures} \label{subsec:ClassificationIntrinsicRisk}
In this section, we aim to compare intrinsic risk measures to traditional ones by means of their properties.
We begin with geometric properties and their connections to the image of the intrinsic risk measure.
Afterwards, we consider monotonicity, a translation relation, and we conclude this section with a discussion of quasi-convexity.

\vspace{8pt}
For the subsequent study it is convenient to define intermediate positions.
\begin{definition}
For $\alpha \in [0, 1]$ define the \emph{intermediate position (between $X$ and $S$)}
\begin{align*}
X^{\alpha,S} = (X_0, X_T^{\alpha,S}) = (X_0, (1-\alpha) X_T + \alpha \tfrac{X_0}{S_0} S_T ) \in \mathbb{R}_{>0} \times \mathcal{X}\,,
\end{align*}
a position originating from $X$ which has been shifted towards the payoff of the eligible asset $S$. 
The line segment $\{X_T^{\alpha,S}\,|\, \alpha \in [0,1]\}$ is illustrated by the dashed line in \hyperref[fig:visualExample2Sub2]{Figure \ref*{fig:visualExample2}(b)}.
Of particular interest are shifted positions $X^{\riskR{A}{S}{X},S}$ which we abbreviate by $X^{R(X),S}$, whenever the reference to $\acc$ and $S$ is clear.
\end{definition}

\subsubsection{Connections between intrinsic risk measures and acceptance sets}

To find a basis for comparison, we examine the relationship between intrinsic risk measures and their defining acceptance sets. 

\begin{proposition}[Relevance] \label{Prop:GeneralRelevance}
Let $\acc$ be a closed acceptance set.
For any financial position $X \in \mathbb{R}_{>0} \times \mathcal{X}$ it holds that $\riskR{A}{S}{X} > 0$ if and only if $X_T \notin \mathcal{A}$.
\end{proposition}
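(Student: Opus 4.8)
The plan is to split the biconditional into its two implications and prove each by contraposition, so that the statement becomes: $\riskR{A}{S}{X} = 0$ if and only if $X_T \in \mathcal{A}$. The first half is purely definitional; the second is where closedness of $\acc$ enters.

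First I would treat the implication $X_T \in \mathcal{A} \Rightarrow \riskR{A}{S}{X} = 0$ (equivalently, $\riskR{A}{S}{X} > 0 \Rightarrow X_T \notin \mathcal{A}$). Here one simply notes that the intermediate position corresponding to $\lambda = 0$ is $X_T^{0,S} = X_T$; hence if $X_T \in \mathcal{A}$ then $\lambda = 0$ belongs to the set over which the infimum in \hyperref[Eq:IntrinsicRisk]{Equation (\ref*{Eq:IntrinsicRisk})} is taken, and since this set sits inside $[0,1]$ its infimum must be $0$.

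Next I would prove $\riskR{A}{S}{X} = 0 \Rightarrow X_T \in \mathcal{A}$, which is the substantive direction. Since $R_{\acc,S}$ is well-defined on $\fspace$ (\hyperref[Prop:Existenz]{Proposition \ref*{Prop:Existenz}}), the set $\Lambda = \{ \lambda \in [0,1] \mid X_T^{\lambda,S} \in \mathcal{A} \}$ is nonempty with $\inf \Lambda = 0$, so I can choose a sequence $\{\lambda_n\}_{n\in\mathbb{N}} \subset \Lambda$ with $\lambda_n \to 0$. Each $X_T^{\lambda_n,S}$ lies in $\mathcal{A}$, and
\[
\bigl\Vert X_T^{\lambda_n,S} - X_T \bigr\Vert_{L^\infty(\mathbb{P})} = \lambda_n \bigl\Vert \tfrac{X_0}{S_0} S_T - X_T \bigr\Vert_{L^\infty(\mathbb{P})} \longrightarrow 0 \qquad (n \to \infty),
\]
because $X_T, S_T \in \mathcal{X}$ are bounded and $X_0, S_0$ are fixed positive numbers. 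So $X_T$ is a norm-limit of acceptable positions, and closedness gives $X_T \in \bar{\mathcal{A}} = \mathcal{A}$.

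There is no real obstacle — the proof is routine — but I would flag one conceptual point: closedness is genuinely needed. Were $X_T$ merely a boundary point of $\acc$, every strictly positive shift $X_T^{\lambda,S}$ might already lie in $\acc$, giving $\riskR{A}{S}{X} = 0$ while $X_T \notin \acc$. It is also worth observing that only monotonicity (built into the definition of an acceptance set) and closedness are used here; conicity and (quasi-)convexity of $\acc$ are irrelevant, and the degenerate case $0 \in \Lambda$ is subsumed by the sequence argument (with a constant sequence).
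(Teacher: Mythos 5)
Your proof is correct and follows the same route as the paper's: the direction $X_T \in \mathcal{A} \Rightarrow R_{\mathcal{A},S}(X) = 0$ is immediate from $\lambda = 0$ being feasible, and the converse uses closedness of $\mathcal{A}$ via a sequence $\lambda_n \to 0$ of feasible parameters whose intermediate positions converge in norm to $X_T$. The paper compresses that second step into a single assertion, so your version simply makes explicit the limit argument the authors leave implicit.
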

\begin{proof}
Since $\acc$ is closed, $\riskR{A}{S}{X} = 0$ implies $X^{0,S}_T = X_T \in \acc$.
Conversely, if $X_T \in \acc$, then $\inf\{ \lambda \in [0,1] \,|\, (1-\lambda)X_T + \lambda \frac{X_0}{S_0} S_T \in \acc \} = 0$. 
\end{proof}

The next proposition proves the first assertion in \hyperref[Ex:visualExample]{Example \ref*{Ex:visualExample}}.

\begin{proposition} \label{Prop:ResultingPosition}
Let $\acc$ be a closed acceptance set.
For any financial position $X$ with $\riskR{A}{S}{X} \in (0,1)$ we have $X^{R(X),S}_T \in \partial \acc$.
\end{proposition}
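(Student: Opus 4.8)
The plan is to show that $X_T^{R(X),S}$ lies neither in the interior of $\acc$ nor in the exterior (the complement of $\bar\acc = \acc$), hence on the boundary. Write $\lambda^* = \riskR{A}{S}{X} \in (0,1)$ and recall that by definition $\lambda^*$ is the infimum of the set $\Lambda = \{\lambda \in [0,1] \mid X_T^{\lambda,S} \in \acc\}$. The key structural fact I would isolate first is that $\Lambda$ is an \emph{up-set} in $[0,1]$: if $\lambda \in \Lambda$ and $\mu \in (\lambda,1]$, then $X_T^{\mu,S} \geq X_T^{\lambda,S}$ fails in general, so instead I would argue along the segment. Actually the clean way is: for $\mu \in [\lambda,1]$ we can write $X_T^{\mu,S}$ as a convex combination of $X_T^{\lambda,S} \in \acc$ and $\frac{X_0}{S_0}S_T \in \acc$ — but this needs convexity of $\acc$, which is not assumed. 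So instead I would use monotonicity directly together with the well-definedness hypothesis: under the standing assumptions of Proposition~\ref{Prop:Existenz} (cone or $0 \in \acc$), $\frac{X_0}{S_0}S_T \in \acc$ and one shows $X_T^{\mu,S} \geq $ some acceptable position only when the coefficients line up — this is the delicate point, so let me restructure.

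The cleaner route, which I would actually carry out: first establish that $\Lambda$ is closed, using closedness of $\acc$ and continuity of $\lambda \mapsto X_T^{\lambda,S}$ in the $L^\infty$-norm (the map is affine in $\lambda$, hence norm-continuous). Then $\lambda^* \in \Lambda$, i.e. $X_T^{\lambda^*,S} \in \acc = \bar\acc$, so $X_T^{\lambda^*,S} \notin \mathrm{ext}(\acc)$. Second, to show $X_T^{\lambda^*,S} \notin \mathrm{int}(\acc)$, I would argue by contradiction: if $X_T^{\lambda^*,S} \in \mathrm{int}(\acc)$, then by \hyperref[LemmaPart:intAcc]{Lemma~\ref*{Lemma:acceptanceSet1}(\ref*{LemmaPart:intAcc})} there is $\varepsilon > 0$ with $X_T^{\lambda^*,S} - \varepsilon\mathbf{1}_\Omega \in \acc$. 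Now I exploit that $X_T \notin \acc$ (which follows from \hyperref[Prop:GeneralRelevance]{Proposition~\ref*{Prop:GeneralRelevance}} since $\lambda^* > 0$): moving $\lambda$ slightly below $\lambda^*$ changes $X_T^{\lambda,S}$ by an amount controlled in norm by $|\lambda - \lambda^*|\cdot\|X_T - \frac{X_0}{S_0}S_T\|_\infty$, so for $\lambda$ close enough to $\lambda^*$ from below, $\|X_T^{\lambda,S} - X_T^{\lambda^*,S}\|_\infty < \varepsilon$, whence $X_T^{\lambda,S} \geq X_T^{\lambda^*,S} - \varepsilon\mathbf{1}_\Omega$ and monotonicity gives $X_T^{\lambda,S} \in \acc$. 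This contradicts $\lambda^* = \inf\Lambda$.

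Combining the two steps, $X_T^{\lambda^*,S} \in \bar\acc \setminus \mathrm{int}(\acc) = \partial\acc$, which is the claim. The main obstacle — and the step I would spend the most care on — is the perturbation argument in the second part: one must make sure the norm estimate $\|X_T^{\lambda,S} - X_T^{\lambda^*,S}\|_\infty = |\lambda-\lambda^*|\,\|X_T - \tfrac{X_0}{S_0}S_T\|_\infty$ is genuinely useful, i.e. that $X_T \neq \frac{X_0}{S_0}S_T$ as elements of $\mathcal{X}$; but if they were equal then $X_T = \frac{X_0}{S_0}S_T \in \acc$ (by well-definedness), contradicting $X_T \notin \acc$, so this degenerate case cannot occur and the estimate does the job. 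Everything else is routine: affineness and norm-continuity of the intermediate-position map, closedness of $\acc$, and the characterisation of the interior from \hyperref[Lemma:acceptanceSet1]{Lemma~\ref*{Lemma:acceptanceSet1}}.
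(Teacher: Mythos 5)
Your proposal is correct and follows essentially the same route as the paper: membership of $X_T^{R(X),S}$ in $\acc$ via closedness, and exclusion from the interior by perturbing $\lambda$ slightly below the infimum using the norm estimate $\|X_T^{\lambda,S}-X_T^{\lambda^*,S}\|_\infty = |\lambda-\lambda^*|\,\|X_T-\tfrac{X_0}{S_0}S_T\|_\infty$ (the paper phrases this with a ball $B_\delta$ where you invoke Lemma~\ref{Lemma:acceptanceSet1}(\ref{LemmaPart:intAcc}) plus monotonicity, which amounts to the same thing). Your observation that the degenerate case $X_T=\tfrac{X_0}{S_0}S_T$ cannot occur is a welcome refinement the paper leaves implicit, although it is excluded directly by the hypothesis $\riskR{A}{S}{X}\in(0,1)$ rather than by well-definedness.
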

\begin{proof}
Since $\acc$ is closed, $X_T^{R(X),S}$ belongs to $\acc$ by definition.
Now contradictorily assume that $X_T^{R(X),S}$ lies in the interior of $\acc$.
Then there exists a $\delta > 0$ such that the ball $B_\delta(X_T^{R(X),S}) \subset \mathrm{int}(\acc)$.
In particular, for $0<\varepsilon < \delta \Vert X_T - \frac{X_0}{S_0} S_T \Vert^{-1}_\infty$ the element $X_T^{R(X)-\varepsilon,S}$ lies in $B_\delta(X_T^{R(X),S})$, hence, $R_{\acc,S}(X)$ is not the infimum in contradiction to its definition. 
\end{proof}
\noindent Note that for $X$ with $\riskR{A}{S}{X} \in \{0, 1\}$ the intermediate position does not a priori belong to the boundary of $\acc$, since $X^{0,S}_T = X_T$ and $X^{1,S}_T = \frac{X_0}{S_0} S_T$ could belong to $\mathcal{A}$. 
However, we will see in \hyperref[Prop:eligibleAssetInfluence]{Proposition \ref*{Prop:eligibleAssetInfluence}} that if $\acc$ is a cone, the case $\riskR{A}{S}{X} = 1$ can only occur if $S_T \in \partial \acc$.\\
 
In the next result, we show that we can shift $X_T^{R(X),S}$ from the boundary further towards $\frac{X_0}{S_0} S_T$ without leaving a convex or conic acceptance set.
\begin{proposition} \label{Prop:weiterRein}
Let the closed acceptance set $\mathcal{A}$ be either a cone, or convex with $0 \in \acc$ and let $S$ be an eligible asset.
Then $\{ X^{\alpha,S}_T \,|\, \alpha \in [R_{\acc,S}(X),1]\} \subset \acc$.
\end{proposition}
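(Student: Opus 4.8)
The plan is to observe that, for $\alpha$ between $R := \riskR{A}{S}{X}$ and $1$, the intermediate position $X_T^{\alpha,S}$ lies on the line segment joining the two positions $X_T^{R(X),S}$ and $X_T^{1,S} = \tfrac{X_0}{S_0} S_T$, both of which are $\acc$-acceptable, and then to propagate acceptability along this segment: by convexity in the convex case, and by conicity together with monotonicity in the cone case.

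First I would record that the two endpoints lie in $\acc$. That $\tfrac{X_0}{S_0}S_T \in \acc$ is exactly the content of the proof of \hyperref[Prop:Existenz]{Proposition \ref*{Prop:Existenz}}: conicity in the cone case, and $0 \in \acc$ together with $S_T \geq 0$ and monotonicity in the convex case. For $X_T^{R(X),S} \in \acc$ I would argue as already done before \hyperref[Prop:ResultingPosition]{Proposition \ref*{Prop:ResultingPosition}}: by \hyperref[Prop:Existenz]{Proposition \ref*{Prop:Existenz}} the value $\lambda = 1$ is admissible, so there is a sequence $\lambda_n \downarrow R$ with $X_T^{\lambda_n,S} \in \acc$; since $\Vert X_T^{\lambda_n,S} - X_T^{R(X),S}\Vert_\infty = |\lambda_n - R| \cdot \Vert \tfrac{X_0}{S_0}S_T - X_T\Vert_\infty \to 0$ and $\acc$ is closed, $X_T^{R(X),S} \in \acc$. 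If $R = 1$ the set in question is just $\{\tfrac{X_0}{S_0}S_T\}$ and there is nothing left to prove, so I assume henceforth $R < 1$.

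The computational core is an affine identity: for $\alpha \in [R,1]$, setting $\beta := \tfrac{1-\alpha}{1-R} \in [0,1]$, one checks directly that
\begin{align*}
X_T^{\alpha,S} \;=\; \beta\, X_T^{R(X),S} + (1-\beta)\,\tfrac{X_0}{S_0} S_T \;=\; \beta\, X_T^{R(X),S} + \tfrac{\alpha-R}{1-R}\,\tfrac{X_0}{S_0} S_T \,.
\end{align*}
In the convex case the first form exhibits $X_T^{\alpha,S}$ as a convex combination of the two $\acc$-acceptable endpoints, so $X_T^{\alpha,S} \in \acc$ follows from convexity of $\acc$, and that case is finished.

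The cone case is where a little more care is needed, since for a non-convex cone the segment between two of its points may leave the cone. Here I would use the second form of the identity: its last summand is a non-negative multiple of $S_T$, and $S_T \geq 0$ by the definition of an eligible asset, so $X_T^{\alpha,S} \geq \beta\, X_T^{R(X),S}$ pointwise. For $\alpha < 1$ we have $\beta > 0$, hence $\beta\, X_T^{R(X),S} \in \acc$ by conicity, and monotonicity of $\acc$ then upgrades the pointwise domination to $X_T^{\alpha,S} \in \acc$; the single remaining value $\alpha = 1$ yields the endpoint $\tfrac{X_0}{S_0}S_T$, already treated. I expect this non-convex cone case to be the only genuine obstacle, resolved by trading the missing convexity for the combination of conicity, monotonicity and the sign condition $S_T \geq 0$; everything else is the affine identity above together with the bookkeeping of the boundary values $\alpha \in \{R,1\}$ and of the degenerate case $R = 1$.
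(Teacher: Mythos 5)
Your proof is correct and follows essentially the same route as the paper's: the same affine reparametrisation of the segment between $X_T^{R(X),S}$ and $\tfrac{X_0}{S_0}S_T$, convexity of $\acc$ in the convex case, and the pointwise domination $X_T^{\alpha,S} \geq \beta X_T^{R(X),S}$ combined with conicity and monotonicity in the cone case. Your extra care in establishing $X_T^{R(X),S} \in \acc$ via closedness and in handling the boundary values $\alpha \in \{R,1\}$ only makes explicit what the paper delegates to \hyperref[Prop:ResultingPosition]{Proposition \ref*{Prop:ResultingPosition}} and \hyperref[Prop:Existenz]{Proposition \ref*{Prop:Existenz}}.
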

\begin{proof}
From \hyperref[Prop:ResultingPosition]{Proposition \ref*{Prop:ResultingPosition}} we know that $X_T^{R(X),S} \in \acc$, not restricting $\riskR{A}{S}{X}$ to $(0,1)$.
If $\acc$ is convex, then for $\beta \in [0, 1]$ the element $(1 - \beta) X^{R(X),S}_T + \beta \frac{X_0}{S_0} S_T$ lies in $\acc$. 
But this is just a reparametrisation of $X_T^{\alpha,S}$ under the map $[0, 1] \rightarrow [\riskR{A}{S}{X}, 1]$ defined by $\beta \mapsto (1 - \riskR{A}{S}{X}) \beta + \riskR{A}{S}{X} =: \alpha$, so the assertion follows. \\
If $\mathcal{A}$ is a cone and not convex, then $\lambda X^{R(X),S}_T \in \acc$, for all $\lambda \geq 0$.
For any $\alpha \in [\riskR{A}{S}{X}, 1]$ we set $\lambda = \frac{1-\alpha}{1-\riskR{A}{S}{X}}$ and observe $\alpha - \lambda \riskR{A}{S}{X} = \frac{\alpha - \riskR{A}{S}{X}}{1-\riskR{A}{S}{X}} \in [0,1]$.
Since $S_T \geq 0$ we get the inequality 
\begin{align*}
X^{\alpha,S}_T = \lambda X^{R(X), S}_T + (\alpha - \lambda \riskR{A}{S}{X}) \frac{X_0}{S_0} S_T \geq \lambda X^{R(X),S}_T \,, 
\end{align*}
and monotonicity yields the assertion. 
\end{proof}

In \hyperref[Prop:ResultingPosition]{Proposition \ref*{Prop:ResultingPosition}}, we assumed that $\riskR{A}{S}{X} < 1$. 
As mentioned above, this inequality depends on the choice of the eligible asset. 
\hyperref[Prop:eligibleAssetInfluence]{Proposition \ref*{Prop:eligibleAssetInfluence}} is a direct analogy with certain finiteness results for traditional risk measures, see \hyperref[Rem:AnalogyWithFinitenessResults]{Remark \ref*{Rem:AnalogyWithFinitenessResults}}.

\begin{proposition} \label{Prop:eligibleAssetInfluence}
Let $\mathcal{A}$ be a closed, conic acceptance set, and let $S$ be an eligible asset. 
Then $R_{\mathcal{A},S} < 1$ on $\mathbb{R}_{>0} \times \mathcal{X} \setminus{\acc}$ if and only if $S_T \in \mathrm{int}(\mathcal{A})$.
\end{proposition}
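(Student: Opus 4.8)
The plan is to prove the two implications separately, exploiting that $\acc$ is a closed cone, so that by \hyperref[Lemma:acceptanceSet1]{Lemma \ref*{Lemma:acceptanceSet1}(\ref*{LemmaPart:intclAlsoConic})} we have $0 \in \bar\acc = \acc$ and $0 \notin \mathrm{int}(\acc)$, and by \hyperref[Lemma:acceptanceSet1]{Lemma \ref*{Lemma:acceptanceSet1}(\ref*{LemmaPart:intAcc})} the condition $S_T \in \mathrm{int}(\acc)$ is equivalent to $S_T - \varepsilon\mathbf{1}_\Omega \in \acc$ for some $\varepsilon > 0$. Note first that by \hyperref[Prop:Existenz]{Proposition \ref*{Prop:Existenz}} the measure $R_{\acc,S}$ is well-defined here, and that the conclusion $R_{\acc,S}(X) < 1$ is equivalent to the existence of some $\lambda \in [0,1)$ with $(1-\lambda)X_T + \lambda\tfrac{X_0}{S_0}S_T \in \acc$.

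\textbf{Sufficiency ($S_T \in \mathrm{int}(\acc) \Rightarrow R_{\acc,S} < 1$ off $\acc$).} Fix $X$ with $X_T \notin \acc$. The key estimate is that if $S_T - \varepsilon\mathbf{1}_\Omega \in \acc$, then by conicity $\tfrac{X_0}{S_0}(S_T - \varepsilon\mathbf{1}_\Omega) \in \acc$, i.e. $\tfrac{X_0}{S_0}S_T$ is in the interior of $\acc$ with a uniform margin $\tfrac{X_0 \varepsilon}{S_0}$. Now I take $\lambda$ close to $1$: the element $(1-\lambda)X_T + \lambda\tfrac{X_0}{S_0}S_T$ differs from $\lambda\bigl(\tfrac{X_0}{S_0}S_T\bigr) = \tfrac{X_0}{S_0}S_T - (1-\lambda)\tfrac{X_0}{S_0}S_T$ by $(1-\lambda)X_T$, a perturbation of norm at most $(1-\lambda)\lVert X_T \rVert_\infty$; meanwhile $\lambda\tfrac{X_0}{S_0}S_T \geq \lambda\tfrac{X_0}{S_0}(S_T - \varepsilon\mathbf{1}_\Omega) + $ something — more cleanly, write $(1-\lambda)X_T + \lambda\tfrac{X_0}{S_0}S_T = \tfrac{\lambda X_0}{S_0}(S_T - \varepsilon'\mathbf{1}_\Omega) + \bigl[(1-\lambda)X_T + \tfrac{\lambda X_0\varepsilon'}{S_0}\mathbf{1}_\Omega\bigr]$ and choose $\varepsilon' \le \varepsilon$ and $\lambda$ close enough to $1$ so that the bracketed term is $\ge 0$ pointwise (possible since $(1-\lambda)X_T \ge -(1-\lambda)\lVert X_T\rVert_\infty \mathbf{1}_\Omega$). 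Then conicity gives $\tfrac{\lambda X_0}{S_0}(S_T - \varepsilon'\mathbf{1}_\Omega) \in \acc$ and monotonicity gives the whole thing in $\acc$, with $\lambda < 1$. Hence $R_{\acc,S}(X) \le \lambda < 1$.

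\textbf{Necessity (contrapositive: $S_T \notin \mathrm{int}(\acc) \Rightarrow R_{\acc,S}(X) = 1$ for some unacceptable $X$).} Since $\acc \subsetneq \mathcal{X}$, pick $Z_T \notin \acc$; scaling if necessary I can also arrange, using \hyperref[Lemma:acceptanceSet1]{Lemma \ref*{Lemma:acceptanceSet1}(\ref*{LemmaPart:allConst})}, that some negative constant $-c\mathbf{1}_\Omega \notin \acc$ with $c$ large. The idea is to choose $X_T$ so badly (very negative on a suitable event) that no proper convex combination with $\tfrac{X_0}{S_0}S_T$ can enter $\acc$ — and then show that in fact even the limit $\lambda \to 1$ fails, i.e. $\tfrac{X_0}{S_0}S_T$ itself is only on the boundary, not the interior, so a further perturbation by $(1-\lambda)X_T$ with a slightly negative component knocks it back out. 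Concretely: if $S_T \notin \mathrm{int}(\acc)$, then $S_T - \delta\mathbf{1}_\Omega \notin \acc$ for every $\delta > 0$; by conicity $\tfrac{X_0}{S_0}(S_T - \delta\mathbf{1}_\Omega) \notin \acc$ too. Choosing $X_T = -K\mathbf{1}_\Omega$ for large $K > 0$ gives $(1-\lambda)X_T + \lambda\tfrac{X_0}{S_0}S_T = \lambda\tfrac{X_0}{S_0}\bigl(S_T - \tfrac{(1-\lambda)K S_0}{\lambda X_0}\mathbf{1}_\Omega\bigr)$, which by the above is not in $\acc$ for any $\lambda \in (0,1)$; and $\lambda = 0$ gives $X_T = -K\mathbf{1}_\Omega \notin \acc$ provided $K$ is chosen large enough (Lemma \ref*{Lemma:acceptanceSet1}(\ref*{LemmaPart:allConst})). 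Therefore the infimum defining $R_{\acc,S}(X)$ is $1$, while $X_T \notin \acc$, which is the desired failure.

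\textbf{Main obstacle.} The delicate point is the sufficiency direction: making the perturbation argument uniform in $\lambda$ as $\lambda \to 1$. One must bound the ``bad part'' $(1-\lambda)X_T$ in a way that is dominated by the interior margin of $\lambda\tfrac{X_0}{S_0}S_T$, and the cleanest route is the pointwise rewriting above — splitting off a shrunken cone element $\tfrac{\lambda X_0}{S_0}(S_T - \varepsilon'\mathbf{1}_\Omega) \in \acc$ and checking the remainder is $\ge 0$ — rather than a ball/interior argument, since the margin itself depends on $\lambda$. The necessity direction is comparatively routine once one commits to testing against a large negative constant and invoking conicity to convert ``$S_T$ not interior'' into ``$\tfrac{X_0}{S_0}S_T$ not interior with the same deficiency family''.
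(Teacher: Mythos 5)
Your proof is correct and follows essentially the same route as the paper's: for necessity you test the measure against a large negative constant position and, via conicity and Lemma \ref{Lemma:acceptanceSet1}, convert the failure of $S_T - \delta\mathbf{1}_\Omega \in \acc$ into $R_{\acc,S}(X)=1$ (the paper argues the direct implication through Proposition \ref{Prop:ResultingPosition} and the same scaling identity), while for sufficiency you exploit the constant margin $S_T - \varepsilon\mathbf{1}_\Omega \in \acc$ to absorb the perturbation $(1-\lambda)X_T$ for $\lambda$ close to $1$. Your order-theoretic decomposition in the sufficiency step (splitting off a cone element and checking the remainder is nonnegative, then invoking monotonicity) is a clean substitute for the paper's norm-ball argument, but the underlying mechanism is identical.
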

\begin{proof}
Assume $\riskR{A}{S}{X} < 1$, for all $X \in \mathbb{R}_{>0} \times \mathcal{X} \setminus{\acc}$.
As a consequence of \hyperref[LemmaPart:intclAlsoConic]{Lemma \ref*{Lemma:acceptanceSet1} Assertion \ref*{LemmaPart:intclAlsoConic}}, no strictly negative constant is contained in $\acc$.
In particular, the element \mbox{$C = (C_0,C_T) := (c,-c\mathbf{1}_\Omega)$} lies in $\mathbb{R}_{>0} \times \mathcal{X} \setminus \acc$, for some constant $c > 0$. 
\hyperref[Prop:ResultingPosition]{Proposition \ref*{Prop:ResultingPosition}} implies now
\begin{align*}
C_T^{R(C),S} = -c   (1-\riskR{A}{S}{C}) \mathbf{1}_\Omega +  \riskR{A}{S}{C} \frac{c}{S_0} S_T \in \acc\,.
\end{align*}
Using the cone property we know that
\begin{align*}
\frac{S_0}{c \riskR{A}{S}{C}} C_T^{R(C),S} = - \frac{S_0 (1 -\riskR{A}{S}{C})}{\riskR{A}{S}{C}}  \mathbf{1}_\Omega +  S_T \in \acc \,.
\end{align*}
But $S_0 > 0$ and $\riskR{A}{S}{C} \in (0,1)$, so \hyperref[LemmaPart:intAcc]{Lemma \ref*{Lemma:acceptanceSet1} Assertion \ref*{LemmaPart:intAcc}} implies $S_T \in \mathrm{int}(\mathcal{A})$. \\
For the `if' direction assume $S_T \in \mathrm{int}(\mathcal{A})$.
By \hyperref[LemmaPart:intcl]{Lemma \ref*{Lemma:acceptanceSet1} Assertion \ref*{LemmaPart:intcl}}, also $X^{1,S}_T \in \mathrm{int}(\mathcal{A})$.
Similar to the proof of \hyperref[Prop:ResultingPosition]{Proposition \ref*{Prop:ResultingPosition}} we find $\delta > 0$ and $0 < \varepsilon < \delta \Vert X_T - \frac{X_0}{S_0} S_T \Vert_{\infty}^{-1}$ such that $X^{1-\varepsilon , S}_T \in B_\delta(\frac{X_0}{S_0} S_T) \subset \mathrm{int}(\acc)$.
We conclude that $\riskR{A}{S}{X} < 1$. 
\end{proof}

\begin{remark}\label{Rem:AnalogyWithFinitenessResults}
\hyperref[Prop:eligibleAssetInfluence]{Proposition \ref*{Prop:eligibleAssetInfluence}} is an analogue of certain finiteness results for traditional risk measures, as the following equivalence holds,
\begin{align*}
\rho_{\acc,S}(X_T) \text{ is finite if and only if } S_T \in \mathrm{int}(\acc)\,.
\end{align*}
For details see Theorem 3.3 and Corollary 3.4 in \cite{bib:WPC1}, or in greater generality, for arbitrary (pre-)ordered topological vector spaces, a similar result is true as shown in Proposition 3.1 in \cite{bib:WPC2}.
The explicit connection of intrinsic risk measures equal to $1$ and traditional risk measure taking the value $+ \infty$ will be established in \hyperref[Prop:IntrinsicRepresPosHom&S-Additive]{Theorem \ref*{Prop:IntrinsicRepresPosHom&S-Additive}} by an representation result on conic acceptance sets, also providing an alternative proof of \hyperref[Prop:eligibleAssetInfluence]{Proposition \ref*{Prop:eligibleAssetInfluence}} via traditional risk measures.
\end{remark}

\subsubsection{Monotonicity of intrinsic risk measures}
In this section, we show that monotonicity of $\acc$ implies that intrinsic risk measures are decreasing functionals, similarly to the traditional approach.
However, on the product space $\mathbb{R}_{>0} \times \mathcal{X}$ we can consider two compatible orderings,
\begin{itemize}[label = \raisebox{0.13ex}{\footnotesize $\bullet$}]
\item \emph{Element-wise domination:} $X \geqslant_{\text{el}} Y$ if $X_0 \geq Y_0$ and $X_T \geq Y_T$, and
\item \emph{Return-wise domination:} $X \geqslant_{\text{ret}} Y$ if $\frac{X_T}{X_0} \geq  \frac{Y_T}{Y_0}$.
\end{itemize}
The first ordering describes the situation in which position $X$ dominates $Y$ at both inception and maturity.
However, from financial point of view it can be restrictive to penalise an institution whose position has low initial value, but outperforms at maturity.
The second ordering incorporates this case and orders positions with regard to their returns.

\begin{proposition}[Monotonicity] \label{Prop:Monotonicity}
Let $\acc$ be an acceptance set containing $0$, let $S \in \aspace$ be an eligible asset and let $X, Y \in \fspace$.
\begin{enumerate}
\item If $X \geqslant_{\text{el}} Y$, then $\riskR{A}{S}{X} \leq \riskR{A}{S}{Y}$.
\item If $\acc$ is additionally a cone, then $X \geqslant_{\text{ret}} Y$ implies $\riskR{A}{S}{X} \leq \riskR{A}{S}{Y}$.
\end{enumerate}
\end{proposition}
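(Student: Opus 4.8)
The plan is to argue directly from the definition of the intrinsic risk measure as an infimum over the set of admissible reallocation parameters. For a position $Z \in \fspace$ write $\Lambda_Z := \{\lambda \in [0,1] \mid Z_T^{\lambda,S} \in \acc\}$, so that $\riskR{A}{S}{Z} = \inf \Lambda_Z$. Because $0 \in \acc$ and $S_T \geq 0$, \hyperref[Prop:Existenz]{Proposition \ref*{Prop:Existenz}} guarantees $1 \in \Lambda_Z$, so each such set is nonempty and the infimum is genuine. The whole proof then reduces to showing that, under either hypothesis, $\Lambda_Y \subseteq \Lambda_X$: the infimum over the larger set is the smaller one, so $\riskR{A}{S}{X} = \inf \Lambda_X \leq \inf \Lambda_Y = \riskR{A}{S}{Y}$.

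For assertion~(1), fix $\lambda \in \Lambda_Y$ and compare the two intermediate positions term by term. From $X_T \geq Y_T$ and $1-\lambda \geq 0$ one gets $(1-\lambda)X_T \geq (1-\lambda)Y_T$, and from $X_0 \geq Y_0$ together with $\lambda \geq 0$, $S_0 > 0$, and $S_T \geq 0$ one gets $\lambda \tfrac{X_0}{S_0} S_T \geq \lambda \tfrac{Y_0}{S_0} S_T$. Adding these yields $X_T^{\lambda,S} \geq Y_T^{\lambda,S}$, and since $Y_T^{\lambda,S} \in \acc$, monotonicity of $\acc$ forces $X_T^{\lambda,S} \in \acc$, i.e. $\lambda \in \Lambda_X$. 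Note that $0 \in \acc$ is not needed for this inclusion itself, only for well-definedness.

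For assertion~(2), the extra cone hypothesis lets me rescale inside $\acc$. Given $\lambda \in \Lambda_Y$, I would first multiply $Y_T^{\lambda,S} \in \acc$ by $1/Y_0 > 0$, using conicity, to obtain $(1-\lambda)\tfrac{Y_T}{Y_0} + \lambda \tfrac{S_T}{S_0} \in \acc$; then invoke $\tfrac{X_T}{X_0} \geq \tfrac{Y_T}{Y_0}$ and monotonicity to replace $\tfrac{Y_T}{Y_0}$ by $\tfrac{X_T}{X_0}$; and finally multiply by $X_0 > 0$, again by conicity, recovering $X_T^{\lambda,S} \in \acc$, so $\lambda \in \Lambda_X$. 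Equivalently, one can phrase this through the scale-invariance recorded in \hyperref[Rem:InitialRemark]{Remark \ref*{Rem:InitialRemark}}: for conic $\acc$ one has $\riskR{A}{S}{X} = R_{\acc,S}(1,\tfrac{X_T}{X_0})$ and likewise for $Y$, and then $X \geqslant_{\text{ret}} Y$ is exactly $\geqslant_{\text{el}}$-domination of the normalised positions, so (2) follows from (1).

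There is no real obstacle here; the two points that warrant a little care are the direction of the infimum inequality (a larger admissible set yields the smaller infimum, which matches the economic intuition that a dominating position needs no more reallocation) and the nonemptiness of $\Lambda_X,\Lambda_Y$, which is exactly what the hypothesis $0 \in \acc$ supplies via \hyperref[Prop:Existenz]{Proposition \ref*{Prop:Existenz}}. One should also check that the chain of conicity and monotonicity steps in~(2) is valid for every $\lambda \in [0,1]$, including the endpoints, but this is immediate since the scaling factors $1/Y_0$ and $X_0$ are strictly positive and the coefficients $1-\lambda$ and $\lambda$ are nonnegative.
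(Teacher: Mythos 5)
Your proof is correct and follows essentially the same route as the paper: fix $\lambda$ with $Y_T^{\lambda,S} \in \acc$, deduce $X_T^{\lambda,S} \in \acc$ via monotonicity of $\acc$ in case (1) and via conicity plus monotonicity applied to the return-normalised positions in case (2) (the paper rescales by $\tfrac{X_0}{Y_0}$ in one step where you divide by $Y_0$ and multiply by $X_0$, a purely cosmetic difference). Your explicit framing through the inclusion $\Lambda_Y \subseteq \Lambda_X$ and the remark on nonemptiness via $0 \in \acc$ are sound and match the paper's intent.
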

\begin{proof}
Take any $\lambda \in [0, 1]$ such that $Y^{\lambda,S}_T \in \acc$.
\begin{enumerate}[wide=0pt]
\item Since $S_T \geq 0$, element-wise domination implies $X_T^{\lambda,S} \geq Y_T^{\lambda,S}$, and thus by monotonicity of $\acc$, $X^{\lambda,S}_T \in \acc$ which means $\riskR{A}{S}{X} \leq \riskR{A}{S}{Y}$.
\item Assume $X$ dominates $Y$ return-wise. 
Conicity implies $\frac{X_0}{Y_0} Y^{\lambda,S}_T \in \acc$. 
Now
\begin{align*}
\frac{X_0}{Y_0} Y^{\lambda,S}_T = X_0 \Big((1 - \lambda) \frac{Y_T}{Y_0} + \lambda \frac{S_T}{S_0} \Big) \leq X_0 \Big((1 - \lambda)\frac{X_T}{X_0} + \lambda \frac{S_T}{S_0} \Big) = X^{\lambda,S}_T \,,
\end{align*} 
so monotonicity of $\acc$ implies $X^{\lambda,S}_T \in \acc$, and thus, $\riskR{A}{S}{X} \leq \riskR{A}{S}{Y}$.  \qedhere
\end{enumerate}
\end{proof}

\subsubsection{A transition property and quasi-convexity of intrinsic risk measures}

We start this section with a transition property that can be linked to $S$-additivity of traditional risk measures.
However, since intrinsic risk measures operate in terms of percentages and not monetary amounts, a simple addition of units of $S_T$ is not appropriate.
Rather, we compare the risks of an intermediate position $X^{\alpha,S}$ and its original position $X$.
In the second part, we show that intrinsic risk measures are quasi-convex.

\begin{proposition} \label{Prop:IntrinsicInvariance}
For $X \in \mathbb{R}_{>0} \times {\mathcal{X}\setminus \acc}$ and $\alpha \in [0, \riskR{A}{S}{X}]$ we have
\begin{align} \label{Eq:TranslationProp}
R_{\acc,S}(X^{\alpha,S}) = \frac{\riskR{A}{S}{X} - \alpha}{1-\alpha} \,.
\end{align}
\end{proposition}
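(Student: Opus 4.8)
The plan is to show that the infimum defining $R_{\acc,S}(X^{\alpha,S})$ and the infimum defining $R_{\acc,S}(X)$ describe the same set of acceptable positions, just reparametrised along the line segment from $X_T$ to $\tfrac{X_0}{S_0}S_T$. First I would write out the intermediate position of $X^{\alpha,S}$: for $\mu\in[0,1]$,
\begin{align*}
(X^{\alpha,S})^{\mu,S}_T = (1-\mu)X^{\alpha,S}_T + \mu\tfrac{X_0}{S_0}S_T = (1-\mu)(1-\alpha)X_T + \big[(1-\mu)\alpha + \mu\big]\tfrac{X_0}{S_0}S_T\,.
\end{align*}
Comparing this with $X^{\lambda,S}_T = (1-\lambda)X_T + \lambda\tfrac{X_0}{S_0}S_T$, one sees that $(X^{\alpha,S})^{\mu,S}_T = X^{\lambda,S}_T$ precisely when $\lambda = (1-\mu)\alpha + \mu = \alpha + \mu(1-\alpha)$, i.e. $\mu = \tfrac{\lambda-\alpha}{1-\alpha}$ (here $\alpha<1$ because $X_T\notin\acc$ forces $R_{\acc,S}(X)>0$ by Proposition \ref{Prop:GeneralRelevance} applied after noting $\acc$ is closed — actually one should just observe $\alpha\le R_{\acc,S}(X)\le 1$ and handle $\alpha=1$ separately, but $\alpha=1$ would need $R_{\acc,S}(X)=1$; if $\alpha<1$ the reparametrisation $\mu\mapsto\lambda$ is an increasing affine bijection from $[0,1]$ onto $[\alpha,1]$).

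The key step is then: the map $g:[0,1]\to[\alpha,1]$, $g(\mu)=\alpha+\mu(1-\alpha)$, is an increasing homeomorphism, so
\begin{align*}
R_{\acc,S}(X^{\alpha,S}) = \inf\{\mu\in[0,1]\mid (X^{\alpha,S})^{\mu,S}_T\in\acc\} = \inf\{\mu\in[0,1]\mid X^{g(\mu),S}_T\in\acc\}\,.
\end{align*}
Substituting $\lambda=g(\mu)$ and using monotonicity of $g^{-1}$, this infimum equals $g^{-1}\big(\inf\{\lambda\in[\alpha,1]\mid X^{\lambda,S}_T\in\acc\}\big)$. I would then argue that $\inf\{\lambda\in[\alpha,1]\mid X^{\lambda,S}_T\in\acc\} = \inf\{\lambda\in[0,1]\mid X^{\lambda,S}_T\in\acc\} = R_{\acc,S}(X)$: the inclusion $\{\lambda\in[\alpha,1]\mid\cdot\}\subseteq\{\lambda\in[0,1]\mid\cdot\}$ is trivial, and for the reverse one uses that $\alpha\le R_{\acc,S}(X)$, so no acceptable parameter $\lambda$ strictly below $\alpha$ can lower the infimum below $\alpha$ — hence restricting to $[\alpha,1]$ does not change it. Finally $g^{-1}(\lambda)=\tfrac{\lambda-\alpha}{1-\alpha}$ gives $R_{\acc,S}(X^{\alpha,S}) = \tfrac{R_{\acc,S}(X)-\alpha}{1-\alpha}$, which is the claim.

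The main obstacle I anticipate is bookkeeping at the boundary cases and making sure the reparametrisation argument is airtight rather than hand-waved: specifically, (i) justifying $\alpha<1$ so that division by $1-\alpha$ is legitimate — this follows since if $\alpha=1$ then $\alpha\le R_{\acc,S}(X)\le 1$ forces $R_{\acc,S}(X)=1$ and both sides equal the indeterminate-looking but conventionally-handled value, so the statement is really only asserted for $\alpha\in[0,R_{\acc,S}(X)]$ with $R_{\acc,S}(X)$ possibly $1$ but $\alpha$ to be taken in the half-open sense, or one simply notes the formula is vacuous/trivial at $\alpha=1$; (ii) confirming that the infimum over the restricted interval $[\alpha,1]$ genuinely coincides with the infimum over $[0,1]$, which hinges only on $\alpha\le R_{\acc,S}(X)$ and the elementary fact that adding points below an infimum-candidate-threshold cannot decrease the infimum when those added points all lie at or above $\alpha$ — wait, more carefully: any $\lambda\in[0,\alpha)$ with $X^{\lambda,S}_T\in\acc$ would give $R_{\acc,S}(X)\le\lambda<\alpha$, contradicting $\alpha\le R_{\acc,S}(X)$, so in fact no such $\lambda$ exists and the two infima are literally over the same nonempty set of acceptable parameters. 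Everything else is the routine affine algebra sketched above; no deep input beyond the definitions is needed, and in particular closedness of $\acc$ is not required for this proposition (unlike the neighbouring results), though one should double-check the infimum is attained or not as needed — it is not needed here.
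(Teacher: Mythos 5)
Your argument is correct and is essentially the paper's own proof: the same affine reparametrisation $\lambda = \alpha + \mu(1-\alpha)$ of the line segment, followed by the observation that since $\alpha \le R_{\mathcal{A},S}(X)$ no acceptable parameter lies below $\alpha$, so restricting the infimum to $[\alpha,1]$ changes nothing. Your extra care at the boundary case $\alpha=1$ is a minor refinement the paper leaves implicit, but the substance is identical.
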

\begin{proof}
Fix an $\alpha \in [0, \riskR{A}{S}{X}]$.
Using the bijection $[0,1] \rightarrow [\alpha,1]$ given by $\lambda \mapsto (1-\lambda)\alpha + \lambda =: \bar{\lambda}$, a direct calculation yields
\begin{align*}
R_{\acc,S}(X^{\alpha,S}) &= \inf \left\{ \lambda \in [0,1] \,\Big|\, (1-\lambda) X^{\alpha,S}_T + \lambda \frac{X_0}{S_0} S_T \in \acc \right\} \\
&= \inf \left\{ \lambda \in [0,1] \,\Big|\, (1-\lambda)(1-\alpha) X_T + \big((1-\lambda)\alpha + \lambda \big) \frac{X_0}{S_0} S_T \in \acc \right\} \\
&= \frac{1}{1-\alpha} \left( \inf \left\{ \bar{\lambda} \in [\alpha,1] \,\Big|\, (1-\bar{\lambda}) X_T + \bar{\lambda} \frac{X_0}{S_0} S_T \in \acc \right\} - \alpha \right) \\
&= \frac{\riskR{A}{S}{X} - \alpha}{1-\alpha} \,.
\end{align*}
The last equality holds, since $\alpha \leq \riskR{A}{S}{X}$. 
\end{proof}

Next, we discuss convexity and quasi-convexity.
Convex combinations on the product space $\fspace$ are understood element-wise and we write
\begin{align*}
\alpha X + (1-\alpha) Y := ( \alpha X_0 + (1-\alpha) Y_0 \, , \, \alpha X_T + (1-\alpha) Y_T) \in \fspace\,.
\end{align*}
So considering the convex combination of two future positions $X_T$ and $Y_T$ results in taking the convex combination of their initial values $X_0$ and $Y_0$.\\

In contrast to traditional risk measures, convexity of the acceptance set implies quasi-convexity of the intrinsic risk measures. 

\begin{proposition}[Quasi-convexity] \label{Prop:QuasiConvexity}
Let $\acc$ be a closed, convex acceptance set containing $0$, and let $S$ be an eligible asset. 
Then $R_{\acc,S}$ is quasi-convex, that means for all $\alpha \in [0,1]$, and any $X,Y \in \fspace$
\begin{align*}
\riskR{A}{S}{\alpha X + (1-\alpha) Y} \leq \max\{\riskR{A}{S}{X}, \riskR{A}{S}{Y}\} \,,
\end{align*} 
and in general, $R_{\acc,S}$ is not convex.
\end{proposition}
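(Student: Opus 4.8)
The plan is to show that for any $X, Y \in \fspace$ and $\alpha \in [0,1]$, the intrinsic risk of the convex combination $Z := \alpha X + (1-\alpha) Y$ is at most $\mu := \max\{R_{\acc,S}(X), R_{\acc,S}(Y)\}$. The natural strategy is to exhibit an explicit $\lambda \in [0,1]$, namely $\lambda = \mu$, such that the intermediate position $Z^{\mu, S}_T = (1-\mu) Z_T + \mu \tfrac{Z_0}{S_0} S_T$ lies in $\acc$; since $R_{\acc,S}(Z)$ is the infimum of all such $\lambda$, this yields the claim. First I would invoke \hyperref[Prop:weiterRein]{Proposition \ref*{Prop:weiterRein}} (applicable since $\acc$ is closed, convex, and contains $0$): for $X$ we have $X^{\mu, S}_T \in \acc$ because $\mu \geq R_{\acc,S}(X)$ lies in $[R_{\acc,S}(X), 1]$, and likewise $Y^{\mu, S}_T \in \acc$. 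So both $(1-\mu) X_T + \mu \tfrac{X_0}{S_0} S_T$ and $(1-\mu) Y_T + \mu \tfrac{Y_0}{S_0} S_T$ are acceptable.

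The key algebraic step is to recognise $Z^{\mu,S}_T$ as a convex combination of these two acceptable elements. Writing everything out,
\begin{align*}
\alpha X^{\mu,S}_T + (1-\alpha) Y^{\mu,S}_T &= (1-\mu)\big(\alpha X_T + (1-\alpha) Y_T\big) + \mu \tfrac{\alpha X_0 + (1-\alpha) Y_0}{S_0} S_T \\
&= (1-\mu) Z_T + \mu \tfrac{Z_0}{S_0} S_T = Z^{\mu,S}_T \,,
\end{align*}
using precisely the element-wise definition of convex combinations on $\fspace$ (so that $Z_0 = \alpha X_0 + (1-\alpha) Y_0$). Convexity of $\acc$ then gives $Z^{\mu,S}_T \in \acc$, hence $R_{\acc,S}(Z) \leq \mu$, which is quasi-convexity. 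I would also note the edge case $\mu = 1$ is trivial since the intrinsic risk measure takes values in $[0,1]$, and if one of $R_{\acc,S}(X), R_{\acc,S}(Y)$ equals the other the argument is unchanged; well-definedness of $R_{\acc,S}(Z)$ on all of $\fspace$ is guaranteed by \hyperref[Prop:Existenz]{Proposition \ref*{Prop:Existenz}} since $0 \in \acc$.

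For the negative claim that $R_{\acc,S}$ is in general not convex, I would produce a counterexample using the Value at Risk acceptance set $\acc_\alpha$, which is conic (so that the closed-form expression $R_{\acc_\alpha, S}(X) = \mathrm{VaR}_\alpha(X_T)/(rX_0 + \mathrm{VaR}_\alpha(X_T))$ from \hyperref[Ex:generalExample]{Example \ref*{Ex:generalExample}} is available) but not convex. The point is that the map $t \mapsto t/(rX_0 + t)$ is strictly concave in $t > -rX_0$, so even though $\mathrm{VaR}_\alpha$ fails convexity only in one direction, composing with this concave transformation can break convexity of $R_{\acc,S}$; concretely, taking $X_T = -\mathbf{1}_A$, $Y_T = -\mathbf{1}_B$ for disjoint $A, B$ with $\mathbb{P}[A] = \mathbb{P}[B] = \alpha$, fixed common initial value $X_0 = Y_0$, constant eligible asset, and $\alpha = \tfrac12$ in the convex combination, one checks numerically that $R_{\acc_\alpha,S}(\tfrac12 X + \tfrac12 Y) > \tfrac12 R_{\acc_\alpha,S}(X) + \tfrac12 R_{\acc_\alpha,S}(Y)$; I would present the short computation rather than leave it implicit.

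The main obstacle is essentially bookkeeping: keeping the initial values $X_0, Y_0$ straight through the convex combination so that the identity $Z^{\mu,S}_T = \alpha X^{\mu,S}_T + (1-\alpha) Y^{\mu,S}_T$ comes out exactly — this works precisely because convex combinations on $\fspace$ are defined element-wise on the initial value too. If the initial value of the combination were defined differently, the argument would not close, so that is the subtle point to flag. The positive part is otherwise a one-line consequence of \hyperref[Prop:weiterRein]{Proposition \ref*{Prop:weiterRein}} plus convexity of $\acc$; the genuinely fiddly bit is assembling a clean numerical counterexample for non-convexity, where I would lean on the concavity of $t \mapsto t/(c+t)$ to explain why the failure occurs.
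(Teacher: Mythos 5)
Your proof of the quasi-convexity part is correct and is essentially the paper's own argument: set $\mu=\max\{R_{\acc,S}(X),R_{\acc,S}(Y)\}$, use Proposition \ref{Prop:weiterRein} to place $X^{\mu,S}_T$ and $Y^{\mu,S}_T$ in $\acc$, and observe that $Z^{\mu,S}_T$ is exactly their convex combination (the paper phrases this with $\lambda\in[R_{\acc,S}(Y),1]$ rather than evaluating at $\mu$, but the identity and the appeal to convexity of $\acc$ are the same). The bookkeeping point you flag about $Z_0=\alpha X_0+(1-\alpha)Y_0$ is indeed the crux of why the identity closes.

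The non-convexity part has a genuine gap. The claim ``in general, $R_{\acc,S}$ is not convex'' lives inside a proposition whose hypotheses are that $\acc$ is closed, \emph{convex}, and contains $0$; the interesting content is that convexity of $R_{\acc,S}$ fails even then. Your counterexample uses the Value at Risk acceptance set $\acc_\alpha$, which is not convex, so it does not establish the claim under the proposition's hypotheses. Worse, with $X_T=-\mathbf{1}_A$, $Y_T=-\mathbf{1}_B$ both positions are acceptable, so $R_{\acc_\alpha,S}(X)=R_{\acc_\alpha,S}(Y)=0$ while $R_{\acc_\alpha,S}(\tfrac12 X+\tfrac12 Y)>0$: this is a failure of \emph{quasi-convexity} caused purely by non-convexity of $\acc_\alpha$ (it is precisely the paper's Example \ref{Ex:unmotivatedVaRAccSet} redeployed, immediately after the proposition, to show that convexity of $\acc$ is necessary), and the concavity of $t\mapsto t/(c+t)$ that you cite as the mechanism plays no role in it. Your concavity intuition is sound and could be turned into a valid counterexample on a convex cone (e.g.\ $\acc=\{X_T\,|\,\mathbb{E}[X_T]\geq 0\}$, where $\rho_{\acc,S}$ is linear but $R_{\acc,S}=\rho^+/(X_0+\rho)$ is strictly concave in the unacceptable region), but you would need positions with strictly positive intrinsic risk to exhibit it. The paper instead avoids any structural assumption: it takes the convex combination of $X$ with its own intermediate position, notes $(1-\beta)X+\beta X^{\alpha,S}=X^{\alpha\beta,S}$, and uses the translation formula of Proposition \ref{Prop:IntrinsicInvariance} (valid for arbitrary acceptance sets) to get explicit values, e.g.\ $R_{\acc,S}(X)=\tfrac34$, $\alpha=\beta=\tfrac12$ giving $\tfrac23>\tfrac58$.
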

\begin{proof}
\begin{enumerate}[wide=0pt]
\item First we show that $R_{\acc,S}$ is not convex.
To this end consider a position $X$ with $\riskR{A}{S}{X} > 0$ and an $\alpha \in (0,\riskR{A}{S}{X})$.
A short calculation yields that the convex combination $(1-\beta)X + \beta X^{\alpha,S}$ is equal to $X^{\alpha \beta,S}$.
We can now use \hyperref[Eq:TranslationProp]{Equation (\ref*{Eq:TranslationProp})} to show that $(1-\beta) \riskR{A}{S}{X} + \beta \riskR{A}{S}{X^{\alpha,S}} \ngeq \riskR{A}{S}{X^{\alpha \beta,S}}$.
For example, take an $X$ with $\riskR{A}{S}{X} = \frac{3}{4}$, and let $\alpha = \beta = \frac{1}{2}$.
Then we have $\riskR{A}{S}{X^{\alpha,S}} = \frac{\riskR{A}{S}{X}-\alpha}{1-\alpha} = \frac{1}{2}$ and $\riskR{A}{S}{X^{\alpha \beta, S}} = \frac{2}{3}$. 
Hence, 
\begin{align*}
\riskR{A}{S}{(1-\beta)X + \beta X^{\alpha,S}} = \frac{2}{3} > \frac{5}{8} = (1-\beta) \riskR{A}{S}{X} + \beta \riskR{A}{S}{X^{\alpha,S}} \,,
\end{align*} 
showing that $R_{\acc,S}$ is not convex.
\item Now we show quasi-convexity.
Assume without loss of generality $\riskR{A}{S}{X} \leq \riskR{A}{S}{Y}$, or else exchange $X$ and $Y$. 
By \hyperref[Prop:ResultingPosition]{Proposition \ref*{Prop:ResultingPosition}} and since we allow $\riskR{A}{S}{Y} \in \{0,1\}$, we have $Y_T^{R(Y),S} \in \acc$.
Since $\riskR{A}{S}{X} \leq \riskR{A}{S}{Y}$, \hyperref[Prop:weiterRein]{Proposition \ref*{Prop:weiterRein}} yields $X_T^{R(Y),S} \in \acc$, and since $\acc$ is convex, also $\alpha X_T^{R(Y),S} + (1-\alpha) Y_T^{R(Y),S} \in \acc$, for all $\alpha \in [0, 1]$. 
So for any $\alpha \in [0, 1]$, the random variable
\begin{align} \label{Eq:ConvexCombination}
(1-\lambda)(\alpha X_T + (1-\alpha) Y_T) + \lambda \frac{\alpha X_0 + (1-\alpha) Y_0}{S_0} S_T = \alpha X^{\lambda,S}_T + (1-\alpha) Y^{\lambda,S}_T
\end{align}
belongs to $\acc$ whenever $\lambda \in [\riskR{A}{S}{Y},1]$.
Hence, for all $\alpha \in [0,1]$ we have
\begin{align*}
\riskR{A}{S}{\alpha X + (1-\alpha)Y} \stackrel{(\ref{Eq:ConvexCombination})}{=} &\inf \left\{ \lambda \in [0,1]\,|\, \alpha X^{\lambda,S}_T + (1-\alpha) Y^{\lambda,S}_T \in \acc \right\} \\
\leq \;\,\, &\riskR{A}{S}{Y} = \max \{\riskR{A}{S}{X}, \riskR{A}{S}{Y} \} \,,
\end{align*}
showing quasi-convexity of the intrinsic risk measure.  \qedhere
\end{enumerate} 
\end{proof}

We show that convexity of $\acc$ is necessary for quasi-convexity of the intrinsic risk measure, using the non-convex Value at Risk acceptance set from \hyperref[Ex:unmotivatedVaRAccSet]{Example \ref*{Ex:unmotivatedVaRAccSet}}.
\begin{example}
Consider the closed, conic acceptance set $\acc_\alpha$ from \hyperref[Ex:unmotivatedVaRAccSet]{Example \ref*{Ex:unmotivatedVaRAccSet}}.
We showed that $X_T = -\mathbf{1}_A$, $Y_T = -\mathbf{1}_B$ for disjoint $A, B \in \mathcal{F}$ with $\mathbb{P}[A]=\mathbb{P}[B]= \alpha$ are contained in $\acc_\alpha$, whereas $\frac{1}{2}(X_T + Y_T)$ is not.
With \hyperref[Prop:GeneralRelevance]{Proposition \ref*{Prop:GeneralRelevance}} we conclude that $\riskAR{\acc_\alpha}{S}{X} = \riskAR{\acc_\alpha}{S}{Y} = 0$, but $\riskAR{\acc_\alpha}{S}{\frac{1}{2}(X+Y)} > 0$, for any $S, X_0, Y_0$.
In fact, we will see in \hyperref[Sec:IntrinsicRiskonPosHom&Cashadditive]{Section \ref*{Sec:IntrinsicRiskonPosHom&Cashadditive}}, that if for example $\frac{S_T}{S_0} = \mathbf{1}_\Omega$, the intrinsic risk of $\frac{1}{2}(X + Y)$ is $\frac{1}{X_0+Y_0+1}$.
Indeed, setting $\lambda = \frac{1}{X_0 + Y_0 + 1}$ yields the probability $\mathbb{P}[X_T + Y_T < - 1]$, where $-1$ is in fact the largest value $x$ such that $\mathbb{P}[X_T + Y_T < x] \leq \alpha$.
\end{example}

Quasi-convexity also holds true for convex combinations of eligible assets. 

\begin{proposition} \label{Prop:QuasiconvexityEligibleAsset}
Let $\acc$ be a closed, convex acceptance set containing $0$, and fix a financial position $X \in \fspace$.
For any $\alpha \in [0,1]$ and any two eligible assets $S^1,S^2 \in \aspace$ with same initial price $P = S^1_0 = S^2_0$ we have,
\begin{align*}
\riskR{A}{\alpha S^1 + (1-\alpha)S^2}{X} \leq \max\{\riskR{A}{S^1}{X}, \riskR{A}{S^2}{X}\}\,.
\end{align*}
\end{proposition}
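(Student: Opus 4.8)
The plan is to reuse the mechanism behind Proposition~\ref{Prop:QuasiConvexity}, the key structural observation being that, since $S^1$ and $S^2$ carry the \emph{same} initial price $P$, a convex combination of an intermediate position built from $S^1$ and an intermediate position built from $S^2$ (at a common mixing parameter) is itself an intermediate position, namely the one built from the mixed asset $\alpha S^1 + (1-\alpha)S^2$.

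First I would check that the object $\alpha S^1 + (1-\alpha)S^2 := (P, \alpha S^1_T + (1-\alpha) S^2_T)$ is a legitimate eligible asset: its payoff is a convex combination of nonnegative payoffs, hence nonnegative; it lies in $\acc$ because $\acc$ is convex and $S^1_T, S^2_T \in \acc$; and $P > 0$. Since $0 \in \acc$, Proposition~\ref{Prop:Existenz} guarantees that $\riskR{A}{\alpha S^1 + (1-\alpha)S^2}{X}$ is well-defined.

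Next, set $r := \max\{\riskR{A}{S^1}{X}, \riskR{A}{S^2}{X}\}$ and assume without loss of generality $\riskR{A}{S^1}{X} \leq \riskR{A}{S^2}{X} = r$. As $\acc$ is closed and convex with $0 \in \acc$, Proposition~\ref{Prop:weiterRein} applies to each of $S^1$ and $S^2$: from $r \in [\riskR{A}{S^1}{X}, 1]$ we obtain $X_T^{r,S^1} \in \acc$, and from $r = \riskR{A}{S^2}{X}$ (the left endpoint of the relevant interval) we obtain $X_T^{r,S^2} \in \acc$. Convexity of $\acc$ then yields $\alpha X_T^{r,S^1} + (1-\alpha) X_T^{r,S^2} \in \acc$. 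Expanding this and using $S^1_0 = S^2_0 = P$, the $X_T$-contributions collapse to $(1-r)X_T$ and the payoff contributions collapse to $r\tfrac{X_0}{P}\bigl(\alpha S^1_T + (1-\alpha)S^2_T\bigr)$, so the element equals precisely $X_T^{r, \alpha S^1 + (1-\alpha)S^2}$. Hence $X_T^{r, \alpha S^1 + (1-\alpha)S^2} \in \acc$, and by the definition of the intrinsic risk measure this forces $\riskR{A}{\alpha S^1 + (1-\alpha)S^2}{X} \leq r$, which is the claim.

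The computation is routine once this is set up; the only delicate point is the handling of the boundary values $r \in \{0,1\}$. This is why I rely on Proposition~\ref{Prop:weiterRein} — whose statement explicitly covers the endpoints $\riskR{A}{S}{X}$ and $1$ of the interval — rather than on Proposition~\ref{Prop:ResultingPosition}, and why I verify separately that the mixed asset is admissible so that its intrinsic risk is defined at all.
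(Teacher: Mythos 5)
Your proof is correct and follows essentially the same route as the paper's: reduce to the common parameter $r = \max\{\riskR{A}{S^1}{X}, \riskR{A}{S^2}{X}\}$, use Proposition~\ref{Prop:weiterRein} to place both intermediate positions $X_T^{r,S^1}$ and $X_T^{r,S^2}$ in $\acc$, and observe that their convex combination equals $X_T^{r,\alpha S^1+(1-\alpha)S^2}$ because the two assets share the initial price $P$. The extra checks you include (admissibility of the mixed asset and well-definedness via Proposition~\ref{Prop:Existenz}) are left implicit in the paper but are consistent with its argument.
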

\begin{proof}
Similar to the proof of \hyperref[Prop:QuasiConvexity]{Proposition \ref*{Prop:QuasiConvexity}}, assume that $\riskR{A}{S^1}{X} \leq \riskR{A}{S^2}{X}$, so that $X_T^{\riskR{A}{S^2}{X},S^2}, \, X_T^{\riskR{A}{S^2}{X},S^1} \in \acc$.
In particular, any convex combination of the two belongs to $\acc$.
Observing that for any $\lambda \in [0,1]$
\begin{align*}
\alpha X^{\lambda,S^1}_T + (1-\alpha) X^{\lambda,S^2}_T = (1-\lambda) X_T + \lambda \frac{X_0}{P} (\alpha S^1_T + (1-\alpha) S^2_T ) = X^{\lambda, \alpha S^1 + (1-\alpha)S^2}_T \,,
\end{align*}
the assertion follows as in the proof of \hyperref[Prop:QuasiConvexity]{Proposition \ref*{Prop:QuasiConvexity}}. 
\end{proof}

\begin{remark}
These results describe a symmetry of the risk measure, in the sense that convex combinations can be taken both inside and outside the acceptance set without losing control over the risk.
\hyperref[Prop:QuasiConvexity]{Proposition \ref*{Prop:QuasiConvexity}} and \hyperref[Prop:QuasiconvexityEligibleAsset]{Proposition \ref*{Prop:QuasiconvexityEligibleAsset}} together imply that the risk of any convex combination of $X,Y$ using any convex combination of eligible assets $S^1,S^2$ is bounded by the maximum of the four values $\riskR{A}{S^1}{X}$, $\riskR{A}{S^2}{X}$, $\riskR{A}{S^1}{Y}$, and $\riskR{A}{S^2}{Y}$. 
\end{remark}

\section{Representation and efficiency on conic acceptance sets} \label{Sec:IntrinsicRiskonPosHom&Cashadditive}
The intrinsic risk measure derived in \hyperref[Ex:generalExample]{Example \ref{Ex:generalExample}} can be written as a function of its traditional counterpart.
We will see in this section that if the underlying acceptance set is a closed cone, \hyperref[Prop:CorrespondenceOfRiskAndAcc]{Proposition \ref*{Prop:CorrespondenceOfRiskAndAcc}} and \hyperref[Prop:CorrespondenceOfRiskAndAcc2]{Proposition \ref*{Prop:CorrespondenceOfRiskAndAcc2}} yield an alternative representation.
Using this representation we show that the intrinsic approach requires overall less capital compared to the traditional approach and at the same time it yields positions with same performance.

\subsection{Alternative representation of intrinsic risk measures}

We will now see how the correspondence between traditional risk measures and their underlying acceptance sets allows for an alternative representation of intrinsic risk measures. 

\begin{theorem}[Representation on cones] \label{Prop:IntrinsicRepresPosHom&S-Additive}
Let $\acc$ be a closed, conic acceptance set and let $\rho_{\acc,S}(X_T) = \inf \{ m \in \mathbb{R} \,|\, X_T + \frac{m}{S_0} S_T \in \acc \}$. 
The intrinsic risk measure with respect to the same acceptance set and eligible asset can be written as
\begin{align} \label{Eq:IntrinsicRepresPosHom&S-Additive}
\riskAR{\acc}{S}{X} = \frac{(\rho_{\acc,S}(X_T))^+}{X_0 + \rho_{\acc,S}(X_T)} \,.
\end{align}
\end{theorem}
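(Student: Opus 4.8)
The plan is to reduce the intrinsic risk measure to the traditional one by rewriting the condition $(1-\lambda)X_T + \lambda \frac{X_0}{S_0} S_T \in \acc$ in a form to which $S$-additivity of $\rho_{\acc,S}$ applies. First I would dispose of the degenerate cases. If $X_T \in \acc$, then $\rho_{\acc,S}(X_T) \leq 0$ by \hyperref[Prop:CorrespondenceOfRiskAndAcc2]{Proposition \ref*{Prop:CorrespondenceOfRiskAndAcc2}} (using that $\acc$ is closed so $\acc = \acc_{\rho_{\acc,S}}$), hence $(\rho_{\acc,S}(X_T))^+ = 0$ and the right-hand side of \hyperref[Eq:IntrinsicRepresPosHom&S-Additive]{Equation (\ref*{Eq:IntrinsicRepresPosHom&S-Additive})} is $0$; on the other hand \hyperref[Prop:GeneralRelevance]{Proposition \ref*{Prop:GeneralRelevance}} gives $\riskAR{\acc}{S}{X} = 0$, so both sides vanish. (One should also note $X_0 + \rho_{\acc,S}(X_T) > 0$ is not an issue here since the numerator is $0$; strictly one interprets $0/(\text{anything} \geq 0)$ suitably, or argues directly.) So assume from now on $X_T \notin \acc$, i.e. $\rho_{\acc,S}(X_T) > 0$, and in particular $X_0 + \rho_{\acc,S}(X_T) > 0$ so the right-hand side is a well-defined element of $(0,1]$.

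Next, the core computation. For $\lambda \in [0,1)$ I would use conicity of $\acc$ to scale: $(1-\lambda)X_T + \lambda \frac{X_0}{S_0} S_T \in \acc$ if and only if $X_T + \frac{\lambda X_0}{1-\lambda}\cdot\frac{1}{S_0} S_T \in \acc$, after dividing by $1-\lambda > 0$. By \hyperref[Prop:CorrespondenceOfRiskAndAcc2]{Proposition \ref*{Prop:CorrespondenceOfRiskAndAcc2}} and closedness of $\acc$, this membership is equivalent to $\rho_{\acc,S}(X_T + \frac{\lambda X_0}{1-\lambda}\cdot\frac{1}{S_0} S_T) \leq 0$, which by $S$-additivity equals $\rho_{\acc,S}(X_T) - \frac{\lambda X_0}{1-\lambda} \leq 0$, i.e. $\frac{\lambda X_0}{1-\lambda} \geq \rho_{\acc,S}(X_T)$. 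Since $\rho_{\acc,S}(X_T) > 0$ and $X_0 > 0$, the map $\lambda \mapsto \frac{\lambda X_0}{1-\lambda}$ is a strictly increasing bijection from $[0,1)$ onto $[0,\infty)$, so the smallest such $\lambda$ solves $\frac{\lambda X_0}{1-\lambda} = \rho_{\acc,S}(X_T)$, giving $\lambda = \frac{\rho_{\acc,S}(X_T)}{X_0 + \rho_{\acc,S}(X_T)}$. This candidate lies in $(0,1)$ exactly when $\rho_{\acc,S}(X_T)$ is finite; I would then invoke \hyperref[Prop:ResultingPosition]{Proposition \ref*{Prop:ResultingPosition}} (or directly, closedness) to confirm the infimum is attained and equals this value, so that $\riskAR{\acc}{S}{X}$ equals the right-hand side.

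Finally I would handle the boundary sub-case $\rho_{\acc,S}(X_T) = +\infty$ (which by \hyperref[Rem:AnalogyWithFinitenessResults]{Remark \ref*{Rem:AnalogyWithFinitenessResults}} happens exactly when $S_T \notin \mathrm{int}(\acc)$): then the right-hand side of \hyperref[Eq:IntrinsicRepresPosHom&S-Additive]{Equation (\ref*{Eq:IntrinsicRepresPosHom&S-Additive})} is $\lim_{m\to\infty}\frac{m}{X_0+m} = 1$, and on the left, no $\lambda < 1$ works (by the equivalence above, since $\frac{\lambda X_0}{1-\lambda} < \infty$ can never dominate $+\infty$), so $\riskAR{\acc}{S}{X} = 1$ as well — this also recovers \hyperref[Prop:eligibleAssetInfluence]{Proposition \ref*{Prop:eligibleAssetInfluence}}. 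The main obstacle, and the place where I expect to spend the most care, is the bookkeeping around the cases $\rho_{\acc,S}(X_T) \in \{0, +\infty\}$ and the attainment of the infimum: the scaling trick only directly handles $\lambda \in [0,1)$, so one must separately argue that $\lambda = 1$ is never the infimum when $\rho$ is finite (covered by \hyperref[Prop:ResultingPosition]{Proposition \ref*{Prop:ResultingPosition}} and \hyperref[Prop:eligibleAssetInfluence]{Proposition \ref*{Prop:eligibleAssetInfluence}}), and that the formula's right-hand side is read correctly as a limit in the infinite case. Everything else is the routine monotone-bijection computation sketched above.
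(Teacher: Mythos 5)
Your proposal is correct and follows essentially the same route as the paper's proof: both reduce membership in $\acc$ to the condition $\rho_{\acc,S}(X_T^{\lambda,S})\leq 0$ via closedness (so that $\acc=\acc_{\rho_{\acc,S}}$), then apply positive homogeneity (your division by $1-\lambda$) and $S$-additivity to arrive at the inequality $\rho_{\acc,S}(X_T)\leq\lambda\bigl(X_0+\rho_{\acc,S}(X_T)\bigr)$ and solve for $\lambda$. Your extra care with the cases $\rho_{\acc,S}(X_T)\in\{0,+\infty\}$ and with $\lambda=1$ only makes explicit what the paper leaves implicit.
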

\begin{proof}
Consider \hyperref[Prop:CorrespondenceOfRiskAndAcc2]{Proposition \ref{Prop:CorrespondenceOfRiskAndAcc2}}.
Since $\acc$ is closed, we have $\acc = \acc_{\rho_{\acc,S}}$, implying
\begin{align*}
\riskAR{\acc}{S}{X} &= \inf\{\lambda \in [0,1] \,|\, X_T^{\lambda,S} \in \acc \} =  \inf\{\lambda \in [0,1] \,|\, \rho_{\acc,S}(X_T^{\lambda,S}) \leq 0 \} \,.
\end{align*}
Now $\rho_{\acc,S}$ is $S$-additive and positive homogeneous. 
Using this we can rearrange to get
\begin{align*}
\riskAR{\acc}{S}{X} = \inf \big\{ \lambda \in [0,1] \,|\, \rho_{\acc,S}(X_T) \leq \lambda \big(X_0 + \rho_{\acc,S}(X_T)\big) \big\} \,.
\end{align*}
Observe that $\rho_{\acc,S}(X_T) \leq 0$ implies $\riskAR{\acc}{S}{X} = 0$.
If on the other hand $\riskA{\acc}{S}{X_T} > 0$, then we can solve for $\lambda$ to get the form in \hyperref[Eq:IntrinsicRepresPosHom&S-Additive]{Equation (\ref*{Eq:IntrinsicRepresPosHom&S-Additive})}. 
\end{proof}

\begin{remark} \label{Rem:HebbareSingularity}
Note that the singularity at $\rho_{\acc,S}(X_T) = - X_0$ in \hyperref[Eq:IntrinsicRepresPosHom&S-Additive]{Equation (\ref*{Eq:IntrinsicRepresPosHom&S-Additive})} is removable. We can either explicitly define $R_{\acc,S}(X) = 0$ whenever $X_T \in \acc$ or write 
\begin{align*}
\riskAR{\acc}{S}{X} = \frac{(\rho_{\acc,S}(X_T))^+}{X_0 + (\rho_{\acc,S}(X_T))^+} \,.
\end{align*}
For this reason we keep the more concise notation in \hyperref[Eq:IntrinsicRepresPosHom&S-Additive]{Equation (\ref*{Eq:IntrinsicRepresPosHom&S-Additive})}.
\end{remark}

\begin{corollary} \label{Cor:RepresentingIntrinsicRiskviaPosHomandMonInv}
\begin{enumerate}[wide=0pt]
\item Considering $\rho_{\acc,\bar{r}}(X_T) = \inf \{ m \in \mathbb{R}\,|\, X_T + m \bar{r} \mathbf{1}_\Omega \in \acc\}$ and $S=(S_0,r S_0)$, the intrinsic risk measure takes the form 
\begin{align} \label{intrinsicRiskOtherRepresentation}
\riskAR{\acc}{S}{X} = \frac{(\rho_{\acc,\bar{r}}(X_T))^+}{\frac{r}{\bar{r}} X_0 + \rho_{\acc,\bar{r}}(X_T)} \,.
\end{align}
\item Value at Risk is defined by $\rho_{\acc_\alpha}(X) = \inf \{ m \in  \mathbb{R} \,|\, X_T + m \mathbf{1}_\Omega \in \acc_\alpha \}$, with $\bar{r} = 1$ in \hyperref[intrinsicRiskOtherRepresentation]{Equation (\ref*{intrinsicRiskOtherRepresentation})}, so that it takes the form derived in \hyperref[Ex:generalExample]{Example \emph{\ref*{Ex:generalExample}}}.
\end{enumerate}
\end{corollary}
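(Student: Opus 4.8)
The plan is to read both parts straight off \hyperref[Prop:IntrinsicRepresPosHom&S-Additive]{Theorem~\ref*{Prop:IntrinsicRepresPosHom&S-Additive}}, which already supplies the identity $\riskAR{\acc}{S}{X} = (\rho_{\acc,S}(X_T))^+ / (X_0 + \rho_{\acc,S}(X_T))$ on any closed, conic acceptance set; all that is left is to rescale the return rate of the reference instrument. The hypotheses of the theorem are met throughout: in part~1 a general closed cone $\acc$ is assumed, and for part~2 the set $\acc_\alpha$ is closed and conic by \hyperref[Def:VaRandES]{Example~\ref*{Def:VaRandES}}.

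First I would specialise the monetary risk measure $\rho_{\acc,S}$ appearing in the theorem to the asset $S = (S_0, rS_0\mathbf{1}_\Omega)$: since $\tfrac{m}{S_0}S_T = mr\mathbf{1}_\Omega$, one has $\rho_{\acc,S}(X_T) = \inf\{m\in\mathbb{R} \,|\, X_T + mr\mathbf{1}_\Omega \in \acc\}$. Comparing this with $\rho_{\acc,\bar r}(X_T) = \inf\{m\in\mathbb{R} \,|\, X_T + m\bar r\mathbf{1}_\Omega \in \acc\}$ and reparametrising the infimum by the increasing bijection $m \mapsto \tfrac{r}{\bar r}m$ of $\mathbb{R}$ (which therefore commutes with $\inf$) gives the scaling identity $\rho_{\acc,S}(X_T) = \tfrac{\bar r}{r}\,\rho_{\acc,\bar r}(X_T)$; this is the only genuine computation in the argument. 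Substituting it into the theorem's representation, pulling the positive factor $\tfrac{\bar r}{r}$ out of the positive part (so that $(\tfrac{\bar r}{r}\rho_{\acc,\bar r}(X_T))^+ = \tfrac{\bar r}{r}(\rho_{\acc,\bar r}(X_T))^+$), and clearing this factor from numerator and denominator yields exactly \hyperref[intrinsicRiskOtherRepresentation]{Equation~(\ref*{intrinsicRiskOtherRepresentation})}. As in \hyperref[Rem:HebbareSingularity]{Remark~\ref*{Rem:HebbareSingularity}}, the apparent pole at $\rho_{\acc,\bar r}(X_T) = -\tfrac{r}{\bar r}X_0$ is removable, since the numerator vanishes there.

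For part~2 I would set $\bar r = 1$. By \hyperref[Def:VaRandES]{Example~\ref*{Def:VaRandES}} the monetary risk measure induced by $\acc_\alpha$ in the direction $\mathbf{1}_\Omega$ is $\mathrm{VaR}_\alpha$, so \hyperref[intrinsicRiskOtherRepresentation]{Equation~(\ref*{intrinsicRiskOtherRepresentation})} becomes $\riskAR{\acc_\alpha}{S}{X} = (\mathrm{VaR}_\alpha(X_T))^+ / (rX_0 + \mathrm{VaR}_\alpha(X_T))$. On the unacceptable positions treated in \hyperref[Ex:generalExample]{Example~\ref*{Ex:generalExample}} one has $\mathrm{VaR}_\alpha(X_T) > 0$, so the positive part is inactive and this reduces to $\mathrm{VaR}_\alpha(X_T)/(rX_0 + \mathrm{VaR}_\alpha(X_T))$, the formula obtained there; on acceptable positions both expressions equal $0$ (using \hyperref[Prop:GeneralRelevance]{Proposition~\ref*{Prop:GeneralRelevance}} for the one from the example), so the two coincide throughout.

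I do not expect a real obstacle: the substance sits in \hyperref[Prop:IntrinsicRepresPosHom&S-Additive]{Theorem~\ref*{Prop:IntrinsicRepresPosHom&S-Additive}} and the corollary is essentially bookkeeping about rescaling $\bar r$. The only points that ask for a moment of care are keeping the factor $r/\bar r$ on the correct side when clearing denominators and checking that $(\,\cdot\,)^+$ is unaffected by multiplication by a positive scalar; both are immediate.
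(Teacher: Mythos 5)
Your argument is correct and is exactly the intended one: the paper states this corollary without proof as an immediate consequence of \hyperref[Prop:IntrinsicRepresPosHom&S-Additive]{Theorem \ref*{Prop:IntrinsicRepresPosHom&S-Additive}}, and the scaling identity $\rho_{\acc,S}(X_T) = \tfrac{\bar r}{r}\rho_{\acc,\bar r}(X_T)$ followed by clearing the factor $\tfrac{\bar r}{r}$ is precisely the bookkeeping the authors leave to the reader. Your verification of part 2 against \hyperref[Ex:generalExample]{Example \ref*{Ex:generalExample}}, including the acceptable/unacceptable case split, is likewise consistent with the paper.
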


\vspace{6pt}
In our opinion, \hyperref[Prop:IntrinsicRepresPosHom&S-Additive]{Theorem \ref*{Prop:IntrinsicRepresPosHom&S-Additive}} plays a central role, since it allows us to draw direct connections to traditional risk measures and use results developed in this field.
In particular, the subsequent results hold for the most frequently used acceptance sets in practice, namely the Value at Risk and Expected Shortfall acceptance sets, since they are cones.\\
As an example we show when the intrinsic risk measure defined on cones is a continuous functional and less then 1.
\begin{corollary}\label{Cor:Consequences}
Let $\acc$ be a closed, conic acceptance set.  
\begin{enumerate}
\item $R_{\acc,S} < 1$ on $\fspace \setminus \acc$ if and only if $S_T \in \mathrm{int}(\acc)$.
\item If $S_T \in \mathrm{int}(\mathcal{X}_+)$, then $R_{\acc,S}$ is continuous on $\fspace$.
\item If $\acc$ is additionally convex, then $S_T \in \mathrm{int}(\acc)$ implies continuity of $R_{\acc,S}$.
\end{enumerate}
\end{corollary}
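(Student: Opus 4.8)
The plan is to derive all three assertions from the representation $R_{\acc,S}(X)=(\rho_{\acc,S}(X_T))^{+}/\bigl(X_0+(\rho_{\acc,S}(X_T))^{+}\bigr)$ of Theorem \ref{Prop:IntrinsicRepresPosHom&S-Additive}, in the form of Remark \ref{Rem:HebbareSingularity}, which turns each claim into a statement about the traditional risk measure $\rho_{\acc,S}$. For Assertion 1, I would fix $X$ with $X_T\notin\acc$; closedness of $\acc$ gives $\acc=\acc_{\rho_{\acc,S}}$ by Proposition \ref{Prop:CorrespondenceOfRiskAndAcc2}, so $\rho_{\acc,S}(X_T)>0$. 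If $\rho_{\acc,S}(X_T)<\infty$, the representation yields $R_{\acc,S}(X)=\rho_{\acc,S}(X_T)/(X_0+\rho_{\acc,S}(X_T))<1$ since $X_0>0$; if $\rho_{\acc,S}(X_T)=+\infty$, then no $m\in\mathbb{R}$ has $X_T+\tfrac{m}{S_0}S_T\in\acc$, so by conicity no $\lambda\in[0,1)$ has $X_T^{\lambda,S}\in\acc$ (divide $X_T^{\lambda,S}\in\acc$ by $1-\lambda$), whence $R_{\acc,S}(X)=1$. Hence $R_{\acc,S}<1$ on $\fspace\setminus\acc$ precisely when $\rho_{\acc,S}$ is finite off $\acc$, which by the finiteness characterisation recalled in Remark \ref{Rem:AnalogyWithFinitenessResults} (see \cite[Theorem 3.3 and Corollary 3.4]{bib:WPC1}) is equivalent to $S_T\in\mathrm{int}(\acc)$; this also re-proves Proposition \ref{Prop:eligibleAssetInfluence}.

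For Assertions 2 and 3 I would first record the common reduction: the function $g(a,t):=t^{+}/(a+t^{+})$ is continuous on $\mathbb{R}_{>0}\times\mathbb{R}$ because its denominator is at least $a>0$, and $R_{\acc,S}=g\circ\Phi$ with $\Phi(X_0,X_T)=(X_0,\rho_{\acc,S}(X_T))$; so it suffices to show that $\rho_{\acc,S}\colon\mathcal{X}\to\mathbb{R}$ is finite-valued and norm-continuous, after which continuity of $R_{\acc,S}$ on $\fspace$ follows by composition. For Assertion 2, $S_T\in\mathrm{int}(\mathcal{X}_{+})$ means $S_T\geq\varepsilon\mathbf{1}_\Omega$ a.s.\ for some $\varepsilon>0$; since $\acc$ is a closed cone we have $0\in\acc$ by Lemma \ref{Lemma:acceptanceSet1}, Assertion \ref{LemmaPart:intclAlsoConic}, so $\mathcal{X}_{+}\subseteq\acc$ and in particular $S_T-\tfrac{\varepsilon}{2}\mathbf{1}_\Omega\in\acc$, which gives $S_T\in\mathrm{int}(\acc)$ by Lemma \ref{Lemma:acceptanceSet1}, Assertion \ref{LemmaPart:intAcc}, and hence finiteness of $\rho_{\acc,S}$. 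Then for any $X_T,Y_T$ the bound $X_T\leq Y_T+\varepsilon^{-1}\Vert X_T-Y_T\Vert_{\infty}S_T$ together with monotonicity and $S$-additivity gives $\rho_{\acc,S}(X_T)\geq\rho_{\acc,S}(Y_T)-\varepsilon^{-1}S_0\Vert X_T-Y_T\Vert_{\infty}$; symmetrising shows $\rho_{\acc,S}$ is $(S_0/\varepsilon)$-Lipschitz, and the reduction concludes.

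For Assertion 3, $\acc$ is closed, conic and convex with $S_T\in\mathrm{int}(\acc)$, so $\rho_{\acc,S}$ is finite-valued (finiteness characterisation) and convex (Proposition \ref{Prop:CorrespondenceOfRiskAndAcc2}). Being finite, $\rho_{\acc,S}(-\mathbf{1}_\Omega)<\infty$, and since $\rho_{\acc,S}$ is decreasing it is bounded above by $\rho_{\acc,S}(-\mathbf{1}_\Omega)$ on the unit ball $B_1(0)$. A finite convex functional on the Banach space $L^{\infty}(\mathbb{P})$ that is bounded above on a nonempty open set is locally Lipschitz, hence norm-continuous; therefore $\rho_{\acc,S}$ is continuous and the reduction finishes the proof.

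I expect the essential difficulty to lie in this last step for Assertion 3: when the eligible payoff is not bounded away from zero there is no Lipschitz estimate, so continuity of $\rho_{\acc,S}$ must be extracted from convexity via the ``bounded above implies continuous'' principle for finite convex functions, which crucially relies on $\rho_{\acc,S}$ being finite \emph{everywhere} --- hence on the hypothesis $S_T\in\mathrm{int}(\acc)$. A minor point to handle with care is the indeterminate case $\rho_{\acc,S}(X_T)=+\infty$ in Assertion 1, where the value $R_{\acc,S}(X)=1$ must be read off the definition of the intrinsic risk measure rather than from the representation formula.
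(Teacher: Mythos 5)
Your proof is correct and follows the same route as the paper's: each assertion is reduced, via the representation $R_{\acc,S}=f\circ\big(X_0,\rho_{\acc,S}(X_T)\big)$ with $f(x_0,x)=x^+/(x_0+x^+)$, to finiteness and continuity properties of the traditional risk measure $\rho_{\acc,S}$. The only difference is one of self-containedness: where the paper simply cites Proposition \ref{Prop:eligibleAssetInfluence} for part 1 (mentioning your route via Theorem \ref{Prop:IntrinsicRepresPosHom&S-Additive} and Theorem 3.3 of \cite{bib:WPC1} as the alternative) and invokes Proposition 3.1 and Theorem 3.16 of \cite{bib:WPC2} for the continuity of $\rho_{\acc,S}$ in parts 2 and 3, you supply those arguments yourself --- the Lipschitz estimate from $S_T\geq\varepsilon\mathbf{1}_\Omega$, the bounded-above-convex-implies-continuous principle, and a careful treatment of the case $\rho_{\acc,S}(X_T)=+\infty$ directly from the definition of the intrinsic risk measure.
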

\begin{proof}
\begin{enumerate}[wide=0pt]
\item We have already seen this result in \hyperref[Prop:eligibleAssetInfluence]{Proposition \ref*{Prop:eligibleAssetInfluence}}.
\hyperref[Prop:IntrinsicRepresPosHom&S-Additive]{Theorem \ref*{Prop:IntrinsicRepresPosHom&S-Additive}} can be used in combination with Theorem 3.3 in \cite{bib:WPC1} for an alternative proof.
\item \label{proof:continuity} The map $f: (x_0,x) \mapsto \frac{x^+}{x_0 + x}$ is jointly continuous on $\mathbb{R}_{>0} \times \mathbb{R}$, see also \hyperref[Rem:HebbareSingularity]{Remark \ref*{Rem:HebbareSingularity}}.
By Proposition 3.1 in \cite{bib:WPC2}, $\rho_{\acc,S}$ is (Lipschitz-) continuous on $\mathcal{X}$ if $S_T \in \mathrm{int}(\mathcal{X}_+)$.
As the composition of two continuous maps $X \mapsto (X_0, \rho_{\acc,S}(X_T)) \mapsto f(X_0,\rho_{\acc,S}(X_T))$ the intrinsic risk measures is continuous on $\fspace$.
\item Theorem 3.16 in \cite{bib:WPC2} gives us continuity of $\rho_{\acc,S}$ on $\mathcal{X}$. 
The assertion follows as in part \ref{proof:continuity}.  \qedhere
\end{enumerate} 
\end{proof}

\begin{remark}
Note that if state the first assertion in \hyperref[Cor:Consequences]{Corollary \ref*{Cor:Consequences}} not as an equivalence, we can use the whole space $\fspace$ as: $S_T \in \mathrm{int}(\acc)$, then $R_{\acc,S} < 1$ on $\fspace$. 
The applied results from \cite{bib:WPC2} are true for general ordered topological vector spaces, and thus also \hyperref[Cor:Consequences]{Corollary \ref*{Cor:Consequences}} can be extended in this respect.
\end{remark}

As mentioned in \hyperref[Rem:InitialRemark]{Remark \ref*{Rem:InitialRemark}}, \hyperref[Prop:IntrinsicRepresPosHom&S-Additive]{Theorem \ref*{Prop:IntrinsicRepresPosHom&S-Additive}} directly yields that intrinsic risk measures defined by conic acceptance sets are scale-invariant.
\begin{corollary} \label{Cor:scaleInv}
Let $\acc$ be a closed conic acceptance set. 
Then $R_{\acc,S}$ is scale-invariant.
\end{corollary}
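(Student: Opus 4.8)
The plan is to reduce everything to the cone property of $\acc$, and then to note that the representation from \hyperref[Prop:IntrinsicRepresPosHom&S-Additive]{Theorem \ref*{Prop:IntrinsicRepresPosHom&S-Additive}} supplies a second, equally short argument. Recall from \hyperref[Rem:InitialRemark]{Remark \ref*{Rem:InitialRemark}} that \emph{scale-invariance} of $R_{\acc,S}$ means $\riskR{A}{S}{\alpha X} = \riskR{A}{S}{X}$ for every scalar $\alpha > 0$, where $\alpha X = (\alpha X_0, \alpha X_T)$. So I only have to show this identity.

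First I would unfold the defining infimum of $\riskR{A}{S}{\alpha X}$ and pull out the common factor $\alpha$:
\begin{align*}
\riskR{A}{S}{\alpha X}
&= \inf\left\{ \lambda \in [0,1] \,\big|\, (1-\lambda)\alpha X_T + \lambda \tfrac{\alpha X_0}{S_0} S_T \in \acc \right\} \\
&= \inf\left\{ \lambda \in [0,1] \,\big|\, \alpha\big( (1-\lambda) X_T + \lambda \tfrac{X_0}{S_0} S_T \big) \in \acc \right\}.
\end{align*}
The key step is the observation that, because $\acc$ is a cone, for any $Z \in \mathcal{X}$ and any $\alpha > 0$ one has $\alpha Z \in \acc$ if and only if $Z \in \acc$: the forward implication is the cone property applied with scalar $\alpha$, and the converse is the cone property applied to $\alpha Z$ with scalar $\alpha^{-1} > 0$. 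Applying this two-sided equivalence with $Z = (1-\lambda) X_T + \lambda \tfrac{X_0}{S_0} S_T = X_T^{\lambda,S}$, the index set in the last display coincides with $\{\lambda \in [0,1] \mid X_T^{\lambda,S} \in \acc\}$, whence $\riskR{A}{S}{\alpha X} = \riskR{A}{S}{X}$.

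Alternatively, since $\acc$ is closed and conic, $\rho_{\acc,S}$ is positive homogeneous by \hyperref[Prop:CorrespondenceOfRiskAndAcc2]{Proposition \ref*{Prop:CorrespondenceOfRiskAndAcc2}}, and \hyperref[Prop:IntrinsicRepresPosHom&S-Additive]{Theorem \ref*{Prop:IntrinsicRepresPosHom&S-Additive}} gives
\begin{align*}
\riskAR{\acc}{S}{\alpha X} = \frac{(\rho_{\acc,S}(\alpha X_T))^+}{\alpha X_0 + \rho_{\acc,S}(\alpha X_T)} = \frac{\alpha\,(\rho_{\acc,S}(X_T))^+}{\alpha\big(X_0 + \rho_{\acc,S}(X_T)\big)} = \riskAR{\acc}{S}{X}.
\end{align*}
There is essentially no obstacle here; the only points worth stating with care are the two-sided equivalence $\alpha Z \in \acc \Leftrightarrow Z \in \acc$, which genuinely uses the cone property in both directions rather than just once, and — in the second argument — that the removable singularity at $\rho_{\acc,S}(X_T) = -X_0$ flagged in \hyperref[Rem:HebbareSingularity]{Remark \ref*{Rem:HebbareSingularity}} is handled consistently, e.g. by working on $\{X_T \in \acc\}$ and its complement separately, where on the former side both risks vanish by \hyperref[Prop:GeneralRelevance]{Proposition \ref*{Prop:GeneralRelevance}}.
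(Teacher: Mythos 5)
Your proposal is correct, and your second argument is essentially the paper's own proof verbatim: the paper splits into the cases $X_T \in \acc$ (where both risks vanish) and $X_T \notin \acc$ (where it applies \hyperref[Prop:IntrinsicRepresPosHom&S-Additive]{Theorem \ref*{Prop:IntrinsicRepresPosHom&S-Additive}} and cancels the factor $\alpha$ using positive homogeneity of $\rho_{\acc,S}$), exactly as you do, including the implicit handling of the removable singularity via the case split. Your first argument is a genuinely different and more elementary route, which the paper only gestures at in \hyperref[Rem:InitialRemark]{Remark \ref*{Rem:InitialRemark}} (via the identity $\riskR{A}{S}{\alpha X} = R_{\alpha^{-1}\acc,S}(X)$) without writing out: you factor $\alpha$ out of $X_T^{\lambda,S}$ inside the defining infimum and use the two-sided equivalence $\alpha Z \in \acc \Leftrightarrow Z \in \acc$ to conclude that the two index sets coincide. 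That version buys you something the paper's proof does not: it needs neither the representation theorem nor closedness of $\acc$, only conicity, whereas the paper's route inherits the closedness hypothesis through $\acc = \acc_{\rho_{\acc,S}}$ in \hyperref[Prop:IntrinsicRepresPosHom&S-Additive]{Theorem \ref*{Prop:IntrinsicRepresPosHom&S-Additive}}. Both arguments are sound as written.
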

\begin{proof}
Since $\acc$ is a cone, $\rho_{\acc,S}$ is positive homogeneous.
If $X_T \notin \acc$, then $\rho_{\acc,S}(X_T) > 0$, and for any $\alpha > 0$ we get with \hyperref[Prop:IntrinsicRepresPosHom&S-Additive]{Theorem \ref*{Prop:IntrinsicRepresPosHom&S-Additive}} 
\begin{align*}
\riskAR{\acc}{S}{\alpha X} = \frac{\rho_{\acc,S}(\alpha X_T)}{\alpha X_0 + \rho_{\acc,S}(\alpha X_T)} = \frac{\alpha \rho_{\acc,S}(X_T)}{\alpha(X_0 + \rho_{\acc,S}(X_T))} = \riskAR{\acc}{S}{X} \,.
\end{align*}
If $X_T \in \acc$, then $\alpha X_T \in \acc$, resulting in $\riskAR{\acc}{S}{X} = \riskAR{\acc}{S}{\alpha X} = 0$. 
\end{proof}

We can turn things around and represent monetary risk measures in terms of intrinsic risk measures.
\begin{corollary} \label{Cor:rhoInTermsOfR}
Let $\acc$ be a closed and conic acceptance set and let $S$ be an eligible asset such that $S_T \in \mathrm{int}(\acc)$.
On $\mathcal{X}\setminus \acc$, the traditional risk measure $\rho_{\acc,S}$ can be written in terms of the intrinsic risk measure $R_{\acc,S}$ as 
\begin{align} \label{Eq:rhoInTermsOfR}
\rho_{\acc,S}(X_T) = \frac{X_0 R_{\acc,S}(X)}{1-R_{\acc,S}(X)} \,,
\end{align}
for $X = (X_0,X_T) \in \mathbb{R}_{>0} \times \mathcal{X} \setminus \acc$.
\end{corollary}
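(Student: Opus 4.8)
The plan is to invert the representation formula from Theorem~\ref{Prop:IntrinsicRepresPosHom&S-Additive} algebraically. Since $\acc$ is closed and conic and $S_T \in \mathrm{int}(\acc)$, Remark~\ref{Rem:AnalogyWithFinitenessResults} guarantees that $\rho_{\acc,S}(X_T)$ is finite for every $X_T \in \mathcal{X}$, so no issues with infinite values arise. For $X_T \notin \acc$ we moreover have $\rho_{\acc,S}(X_T) > 0$ by the correspondence in Proposition~\ref{Prop:CorrespondenceOfRiskAndAcc2} (since $\acc = \acc_{\rho_{\acc,S}}$ by closedness), so that $(\rho_{\acc,S}(X_T))^+ = \rho_{\acc,S}(X_T)$ and Theorem~\ref{Prop:IntrinsicRepresPosHom&S-Additive} reads
\begin{align*}
R_{\acc,S}(X) = \frac{\rho_{\acc,S}(X_T)}{X_0 + \rho_{\acc,S}(X_T)} \,.
\end{align*}

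First I would note that $R_{\acc,S}(X) < 1$ on $\mathcal{X}\setminus\acc$ by Corollary~\ref{Cor:Consequences}(1) (using $S_T \in \mathrm{int}(\acc)$), so the denominator $1 - R_{\acc,S}(X)$ in \eqref{Eq:rhoInTermsOfR} is strictly positive and the right-hand side is well-defined. Then I would simply solve the displayed identity for $\rho_{\acc,S}(X_T)$: writing $R = R_{\acc,S}(X)$ and $\rho = \rho_{\acc,S}(X_T)$, the equation $R(X_0 + \rho) = \rho$ rearranges to $R X_0 = \rho(1 - R)$, hence $\rho = X_0 R/(1-R)$, which is exactly \eqref{Eq:rhoInTermsOfR}.

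This proof is essentially a one-line algebraic manipulation, so there is no real obstacle; the only thing requiring care is the bookkeeping of which hypotheses are invoked where. Specifically, closedness of $\acc$ is needed so that $\acc = \acc_{\rho_{\acc,S}}$ and hence $X_T \notin \acc \iff \rho_{\acc,S}(X_T) > 0$; conicity is needed to apply Theorem~\ref{Prop:IntrinsicRepresPosHom&S-Additive} at all; and $S_T \in \mathrm{int}(\acc)$ is needed both for finiteness of $\rho_{\acc,S}$ and, via Corollary~\ref{Cor:Consequences}(1), for the strict inequality $R_{\acc,S}(X) < 1$ that makes the denominator in \eqref{Eq:rhoInTermsOfR} nonzero. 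With these three points recorded, the computation closes immediately.
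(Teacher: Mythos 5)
Your proposal is correct and follows essentially the same route as the paper: establish $\rho_{\acc,S}(X_T)>0$ on $\mathcal{X}\setminus\acc$ and $R_{\acc,S}(X)<1$ via $S_T\in\mathrm{int}(\acc)$, then rearrange the representation formula of Theorem \ref{Prop:IntrinsicRepresPosHom&S-Additive}. Your bookkeeping of where closedness, conicity, and the interior condition enter is, if anything, slightly more explicit than the paper's.
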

\begin{proof}
For any $X_T \in \mathcal{X} \setminus \acc$ we have $\risk{A}{S}{X_T} > 0$ and by \hyperref[Prop:eligibleAssetInfluence]{Proposition \ref*{Prop:eligibleAssetInfluence}}, $S_T \in \mathrm{int}(\acc)$ means $R_{\acc,S} < 1$ on $\mathbb{R}_{>0} \times \mathcal{X} \setminus \acc$.
Setting $X = (X_0,X_T)$, for any $X_0 > 0$, and rearranging \hyperref[Eq:IntrinsicRepresPosHom&S-Additive]{Equation (\ref*{Eq:IntrinsicRepresPosHom&S-Additive})} yields the assertion. 
\end{proof}

\hyperref[Cor:rhoInTermsOfR]{Corollary \ref*{Cor:rhoInTermsOfR}} allows us to prove our claim in \hyperref[Ex:visualExample]{Example \ref*{Ex:visualExample}} that $X_T^\rho := X_T + \frac{\rho_{\acc,S}(X_T)}{S_0} S_T$ is a multiple of $X_T^{R(X),S}$.
\begin{corollary} \label{Rem:X^RScaledX^rho}
In the setting of \hyperref[Cor:rhoInTermsOfR]{Corollary \ref*{Cor:rhoInTermsOfR}}, consider \hyperref[Ex:visualExample]{Example \ref*{Ex:visualExample}}.
We have 
\begin{align*}
X_T^{R(X),S} = (1-R_{\acc,S}(X)) X_T^\rho \,.
\end{align*} 
\end{corollary}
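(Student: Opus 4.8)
The plan is to substitute the representation from Theorem \ref{Prop:IntrinsicRepresPosHom&S-Additive} into the definitions of $X_T^{R(X),S}$ and $X_T^\rho$ and verify the identity by direct algebraic manipulation. Since $X_T \notin \acc$ and $S_T \in \mathrm{int}(\acc)$, Corollary \ref{Cor:rhoInTermsOfR} applies and gives us the clean relation $\rho_{\acc,S}(X_T) = \frac{X_0 R_{\acc,S}(X)}{1-R_{\acc,S}(X)}$, with $R_{\acc,S}(X) \in (0,1)$, so all the quantities appearing are finite and the denominators are nonzero.

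First I would write out $X_T^{R(X),S} = (1-R_{\acc,S}(X)) X_T + R_{\acc,S}(X) \frac{X_0}{S_0} S_T$ directly from the definition of the intermediate position. Then I would multiply $X_T^\rho = X_T + \frac{\rho_{\acc,S}(X_T)}{S_0} S_T$ by $(1-R_{\acc,S}(X))$ and substitute the expression for $\rho_{\acc,S}(X_T)$ from Corollary \ref{Cor:rhoInTermsOfR}. The coefficient of $X_T$ becomes $1-R_{\acc,S}(X)$ immediately, and the coefficient of $\frac{S_T}{S_0}$ becomes $(1-R_{\acc,S}(X)) \cdot \frac{X_0 R_{\acc,S}(X)}{1-R_{\acc,S}(X)} = X_0 R_{\acc,S}(X)$, matching $X_T^{R(X),S}$ term by term. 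This completes the verification.

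There is no real obstacle here — the result is an immediate corollary of Corollary \ref{Cor:rhoInTermsOfR}, which has already done the substantive work of inverting the representation formula. The only point deserving a word of care is ensuring we are in the regime where $R_{\acc,S}(X) \in (0,1)$ so that $X_T^{R(X),S}$ is genuinely the boundary point described in Example \ref{Ex:visualExample} and $X_T^\rho$ is well-defined with $\rho_{\acc,S}(X_T)$ finite; both follow from the hypotheses $X_T \in \mathcal{X}\setminus\acc$ and $S_T \in \mathrm{int}(\acc)$ inherited from Corollary \ref{Cor:rhoInTermsOfR} together with Propositions \ref{Prop:GeneralRelevance} and \ref{Prop:eligibleAssetInfluence}.
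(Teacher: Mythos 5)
Your proposal is correct and is essentially the paper's own argument: the paper divides $X_T^{R(X),S}$ by $1-R_{\acc,S}(X)$ and substitutes Equation (\ref{Eq:rhoInTermsOfR}) to recover $X_T^\rho$, which is the same term-by-term computation you perform in the reverse direction. Your extra remarks on why $R_{\acc,S}(X)\in(0,1)$ are a harmless (and sensible) addition, not a deviation.
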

\begin{proof}
Dividing $X_T^{R(X),S}$ by $1-R_{\acc,S}(X)$ and using \hyperref[Eq:rhoInTermsOfR]{Equation (\ref*{Eq:rhoInTermsOfR})} we get
\begin{align*}
\frac{1}{1-R_{\acc,S}(X)} X_T^{R(X),S} = X_T + \frac{X_0 R_{\acc,S}(X)}{(1-R_{\acc,S}(X))S_0} S_T = X_T + \frac{\rho_{\acc,S}(X_T)}{S_0} S_T = X_T^\rho \,,
\end{align*} 
the desired relation. 
\end{proof}

We raise the question whether the representation in \eqref{Eq:IntrinsicRepresPosHom&S-Additive} holds for convex acceptance sets which are not cones.
The next proposition shows that due to the loss of positive homogeneity the expression is only an upper bound of the intrinsic risk measure. 

\begin{proposition} \label{Prop:IntrinsicInequWhenConvex}
Let $\acc$ be a closed, convex acceptance set which is not a cone. 
Assume that $0 \in \acc$.
Then the following inequality holds,
\begin{align} \label{Eq:OnlyUpperBound}
\riskAR{\acc}{S}{X} \leq \frac{(\rho_{\acc,S}(X_T))^+}{X_0 + \rho_{\acc,S}(X_T)} \,,
\end{align}
with equality if $X_T \in \acc$.
\end{proposition}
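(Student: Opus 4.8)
The plan is to handle the two assertions separately, dealing with the equality claim first. If $X_T \in \acc$, then closedness of $\acc$ together with \hyperref[Prop:GeneralRelevance]{Proposition \ref*{Prop:GeneralRelevance}} gives $\riskAR{\acc}{S}{X} = 0$, while $m = 0$ being admissible in the infimum defining $\rho_{\acc,S}(X_T)$ forces $\rho_{\acc,S}(X_T) \leq 0$, so the right-hand side of \eqref{Eq:OnlyUpperBound} vanishes once it is read in the equivalent removable form $\tfrac{(\rho_{\acc,S}(X_T))^+}{X_0 + (\rho_{\acc,S}(X_T))^+}$ from \hyperref[Rem:HebbareSingularity]{Remark \ref*{Rem:HebbareSingularity}}. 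Hence equality holds, and this also disposes of the inequality whenever $X_T \in \acc$.

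For the inequality it therefore suffices to treat $X_T \notin \acc$. Since $\acc$ is closed, \hyperref[Prop:CorrespondenceOfRiskAndAcc2]{Proposition \ref*{Prop:CorrespondenceOfRiskAndAcc2}} gives $\acc = \acc_{\rho_{\acc,S}}$, so $m^\ast := \rho_{\acc,S}(X_T) > 0$; if $m^\ast = +\infty$ the estimate is trivial, because $\riskAR{\acc}{S}{X} \leq 1$ and the right-hand side tends to $1$, so assume $m^\ast < \infty$. Closedness also ensures that the infimum defining $\rho_{\acc,S}$ is attained (a norm-convergent minimising sequence $X_T + \tfrac{m_n}{S_0}S_T \in \acc$ with $m_n \downarrow m^\ast$ has its limit in $\acc$), i.e. the traditionally altered position $X_T^\rho := X_T + \tfrac{m^\ast}{S_0} S_T$ lies in $\acc$.

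The key step is then to combine $X_T^\rho$ with $0 \in \acc$ using convexity: for every $\mu \in [0,1]$ one has $\mu X_T^\rho = \mu X_T^\rho + (1-\mu)\cdot 0 \in \acc$. Choosing $\mu = \tfrac{X_0}{X_0 + m^\ast} \in (0,1)$, a direct computation gives $\mu X_T^\rho = (1-\lambda^\ast) X_T + \lambda^\ast \tfrac{X_0}{S_0} S_T = X_T^{\lambda^\ast, S}$ with $\lambda^\ast = \tfrac{m^\ast}{X_0 + m^\ast} \in (0,1)$. Thus $X_T^{\lambda^\ast,S} \in \acc$, and \hyperref[Def:IntrinsicRiskMeasure]{Definition \ref*{Def:IntrinsicRiskMeasure}} yields $\riskAR{\acc}{S}{X} \leq \lambda^\ast = \tfrac{(\rho_{\acc,S}(X_T))^+}{X_0 + \rho_{\acc,S}(X_T)}$, as claimed. (Alternatively one may avoid the attainment argument by running the same convex combination with $X_T + \tfrac{m^\ast+\varepsilon}{S_0}S_T \in \acc$ and letting $\varepsilon \downarrow 0$.)

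There is no serious obstacle here; the only points requiring a line of care are the attainment of the infimum defining $\rho_{\acc,S}$ (a routine consequence of closedness) and, conceptually, spotting that rescaling $X_T^\rho$ by $\mu$ lands exactly on the segment $\{X_T^{\lambda,S}\,|\,\lambda\in[0,1]\}$ between $X_T$ and $\tfrac{X_0}{S_0}S_T$. It is worth noting that the hypothesis that $\acc$ is not a cone is not used in establishing $\leq$; it merely flags that equality — which does hold on cones by \hyperref[Prop:IntrinsicRepresPosHom&S-Additive]{Theorem \ref*{Prop:IntrinsicRepresPosHom&S-Additive}} — may genuinely fail in the purely convex setting, the obstruction being precisely the absence of positive homogeneity of $\rho_{\acc,S}$.
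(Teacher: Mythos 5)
Your proof is correct, but it travels a genuinely different road from the paper's. The paper stays at the level of the induced risk measure: it invokes \hyperref[Prop:CorrespondenceOfRiskAndAcc2]{Proposition \ref*{Prop:CorrespondenceOfRiskAndAcc2}} to get that $\rho_{\acc,S}$ is convex with $\acc = \acc_{\rho_{\acc,S}}$, bounds $\rho_{\acc,S}(X_T^{\lambda,S}) \leq (1-\lambda)\rho_{\acc,S}(X_T) - \lambda X_0$ using convexity, $S$-additivity and $\rho_{\acc,S}(0)\leq 0$, and then reads off the inequality from an inclusion of sublevel sets in $\lambda$. You instead argue geometrically inside $\acc$ itself: you produce the single witness $\lambda^\ast = \tfrac{m^\ast}{X_0+m^\ast}$ by observing that rescaling the traditionally altered position $X_T^\rho$ towards $0 \in \acc$ by the factor $\mu = \tfrac{X_0}{X_0+m^\ast}$ lands exactly on the segment $\{X_T^{\lambda,S}\}$. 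The two arguments encode the same two ingredients (convexity of $\acc$ and $0\in\acc$, which in the paper appear as convexity of $\rho_{\acc,S}$ and $\rho_{\acc,S}(0)\leq 0$), but yours is more elementary in that it never needs the correspondence $\acc=\acc_{\rho_{\acc,S}}$ or the functional inequalities; the price is the extra (correctly handled) attainment step for the infimum defining $\rho_{\acc,S}$, which the $\varepsilon$-perturbation you mention renders harmless. A further small merit of your write-up is that it actually verifies the ``equality if $X_T\in\acc$'' clause, which the paper's proof leaves implicit, and correctly flags that the ``not a cone'' hypothesis plays no role in establishing the inequality.
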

\begin{proof}
By \hyperref[Prop:CorrespondenceOfRiskAndAcc2]{Proposition \ref*{Prop:CorrespondenceOfRiskAndAcc2}}, we know that $\rho_{\acc,S}$ is convex and $\acc = \acc_{\rho_{\acc,S}}$.
Convexity, $S$-additivity and the fact that $\rho_{\acc,S}(0) \leq 0$ imply the inequality
\begin{align*}
\rho_{\acc,S}(X_T^{\lambda,S}) \leq (1-\lambda) \rho_{\acc,S}(X_T) + \lambda \rho_{\acc,S}\big(\tfrac{X_0}{S_0} S_T\big) \leq (1-\lambda) \rho_{\acc,S}(X_T) - \lambda X_0 \,.
\end{align*}
From this we establish the following inclusion,
\begin{align*}
\{\lambda \in [0,1] \,|\, (1-\lambda)\rho_{\acc,S}(X_T) - \lambda X_0 \leq 0 \} \subseteq \{\lambda \in [0,1] \,|\, \rho_{\acc,S}(X_T^{\lambda,S}) \leq 0 \}\,,
\end{align*}
implying \eqref{Eq:OnlyUpperBound}. 
\end{proof}

\subsection{Efficiency of the intrinsic approach} \label{subsec:Efficiency}

The alternative representation in terms of traditional risk measures allows us to compare the costs of management actions of the intrinsic and the traditional approach.
For conic or convex acceptance sets the intrinsic risk measure suggests an action that requires overall less capital to be invested into the eligible asset, since part of the original position is sold.

\begin{corollary} \label{Cor:MonetaryComparison}
Let $\acc$ be a conic or convex acceptance set and closed in either case.  
Let $X$ be an unacceptable financial position and let $S$ be an eligible asset.
Then the inequality $X_0 \riskR{A}{S}{X} \leq \risk{A}{S}{X_T}$ holds.
\end{corollary}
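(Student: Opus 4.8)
The plan is to bound $\riskR{A}{S}{X}$ using the explicit representation of the intrinsic risk measure on conic acceptance sets and its one‑sided counterpart on convex ones, and then to apply the elementary estimate $\tfrac{t}{X_0+t}<1$, valid for every $t>0$. Throughout, write $\rho:=\risk{A}{S}{X_T}$.

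First I would dispose of the trivial case: since $\riskR{A}{S}{X}$ always lies in $[0,1]$, we have $X_0\riskR{A}{S}{X}\le X_0$, so if $\rho=+\infty$ there is nothing to prove. Assume therefore $\rho<\infty$. Because $X$ is unacceptable, $X_T\notin\acc$; as $\acc$ is closed, \hyperref[Prop:CorrespondenceOfRiskAndAcc2]{Proposition~\ref*{Prop:CorrespondenceOfRiskAndAcc2}} shows $X_T\notin\acc$ is equivalent to $\risk{A}{S}{X_T}>0$, so $\rho>0$ and in particular $(\rho)^+=\rho$.

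Next I would split on the structure of $\acc$. If $\acc$ is conic, \hyperref[Prop:IntrinsicRepresPosHom&S-Additive]{Theorem~\ref*{Prop:IntrinsicRepresPosHom&S-Additive}} gives the identity $\riskR{A}{S}{X}=\tfrac{\rho}{X_0+\rho}$; if $\acc$ is convex but not a cone (and contains $0$, which is the standing hypothesis making $\riskR{A}{S}{\cdot}$ well-defined, cf.\ \hyperref[Prop:Existenz]{Proposition~\ref*{Prop:Existenz}}), then \hyperref[Prop:IntrinsicInequWhenConvex]{Proposition~\ref*{Prop:IntrinsicInequWhenConvex}} gives the inequality $\riskR{A}{S}{X}\le\tfrac{\rho}{X_0+\rho}$. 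In either case, multiplying by $X_0>0$ and using $\rho>0$, I obtain $X_0\riskR{A}{S}{X}\le \tfrac{X_0\rho}{X_0+\rho}=\rho\cdot\tfrac{X_0}{X_0+\rho}<\rho=\risk{A}{S}{X_T}$, which is the claim (in fact with strict inequality).

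Alternatively, one can argue without invoking the two cited results by exhibiting an admissible percentage directly: put $\lambda^\ast:=\tfrac{\rho}{X_0+\rho}\in(0,1)$. Closedness of $\acc$ gives $X_T+\tfrac{\rho}{S_0}S_T\in\acc$ (pass to the limit along a minimising sequence for $\rho$); in the conic case conicity, together with $1-\lambda^\ast>0$, yields $(1-\lambda^\ast)\bigl(X_T+\tfrac{\rho}{S_0}S_T\bigr)\in\acc$, while in the convex case the same element is $(1-\lambda^\ast)\bigl(X_T+\tfrac{\rho}{S_0}S_T\bigr)+\lambda^\ast\cdot 0\in\acc$ by convexity and $0\in\acc$. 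The algebraic identity $(1-\lambda^\ast)\rho=\lambda^\ast X_0$ shows this element equals $X_T^{\lambda^\ast,S}$, so $\riskR{A}{S}{X}\le\lambda^\ast$ and the conclusion follows as before. There is no real obstacle here; the only care needed is bookkeeping — handling $\rho=+\infty$ separately, noting that $X_T\notin\acc$ together with closedness genuinely yields $\rho>0$ rather than merely $\rho\ge 0$, and, in the convex branch, remembering that $0\in\acc$ is precisely what makes the intrinsic risk measure available.
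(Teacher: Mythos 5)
Your main argument is exactly the paper's proof: split on conic versus convex, invoke Theorem \ref{Prop:IntrinsicRepresPosHom&S-Additive} (equality) respectively Proposition \ref{Prop:IntrinsicInequWhenConvex} (upper bound), and finish with the elementary estimate $X_0\rho/(X_0+\rho)\le\rho$. Your extra care with the cases $\rho=+\infty$ and $\rho>0$, and the self-contained alternative via the admissible percentage $\lambda^\ast=\rho/(X_0+\rho)$, are correct but not needed beyond what the paper records.
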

\begin{proof}
If $\acc$ is convex, then \hyperref[Prop:IntrinsicInequWhenConvex]{Proposition \ref*{Prop:IntrinsicInequWhenConvex}} implies $X_0 \riskR{A}{S}{X} \leq X_0 \frac{\risk{A}{S}{X_T}}{X_0 + \risk{A}{S}{X_T}}$.
If $\acc$ is a cone, we have an equality by \hyperref[Prop:IntrinsicRepresPosHom&S-Additive]{Theorem \ref*{Prop:IntrinsicRepresPosHom&S-Additive}}.
The inequality $X_0 \frac{\risk{A}{S}{X_T}}{X_0 + \risk{A}{S}{X_T}} \leq \risk{A}{S}{X_T}$ always holds, proving the assertion. 
\end{proof}

\begin{remark}
We see that the required monetary amount to shift the financial position into the acceptance set is controlled by the initial value $X_0$.
But independent of the magnitude of $X_0$, the amount $X_0 \riskR{A}{S}{X}$, which is received for part of the initial position and invested into $S$, is always less than the amount $\risk{A}{S}{X_T}$, which is necessary if nothing of the initial position is sold.
\end{remark}

Since part of the position is sold, so in particular, possible profits are reduced, and less money is invested into the eligible asset, it is important to ascertain that the intrinsic risk measure yields altered positions that are not worse than those of the traditional approach. 
In the following, we discuss this matter in terms of returns.
For this let $X = (X_0,X_T)$ be a given financial position. 
Consulting the traditional risk measure we get the altered position $X_T^\rho := X_T + \frac{\risk{A}{S}{X_T}}{S_0} S_T \in \acc$.
So at inception, the total value of this position is \mbox{$X_0^\rho := X_0 + \risk{A}{S}{X_T}$}.
The return is then given by the fraction $X_T^\rho / X_0^\rho$.
If on the other hand we consult the intrinsic risk measure, the initial value $X_0$ is not changed and the return of the altered position is $X_T^{R(X),S} / X_0$.
The following table summarises these relations.

\begin{center}
\begin{tabular}{l | l l}
					& Intrinsic approach & Traditional approach \\ \hline
initial value		& $X_0$ 	& $X_0 + \rho_{\acc,S}(X_T)$ \\
altered position 	& $(1-R_{\acc,S}(X))X_T + R_{\acc,S}(X) \frac{X_0}{S_0} S_T$  & $X_T + \frac{\risk{A}{S}{X_T}}{S_0} S_T$ \\
return				& $X_T^{R(X),S} / X_0$	&	$X_T^\rho / X_0^\rho $\\
\end{tabular}
\end{center}
Interestingly, these two returns are equal if $\acc$ is conic.
\begin{corollary} \label{Prop:ReturnsEqual}
Let $\acc$ be a closed cone.  
Let $X$ be an unacceptable financial position and let $S$ be an eligible asset.
Then the returns of the positions $(X_0, X_T^{R(X),S})$ and $(X_0^\rho, X_T^{\rho})$ are equal.
\end{corollary}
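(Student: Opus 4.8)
The plan is to reduce the statement to the representation formula of Theorem \ref{Prop:IntrinsicRepresPosHom&S-Additive} plus a one-line algebraic identity. First I would fix notation: write $\rho := \risk{A}{S}{X_T}$ and recall that, by definition, $X_0^\rho = X_0 + \rho$ and $X_T^\rho = X_T + \tfrac{\rho}{S_0} S_T$. Since $\acc$ is closed, unacceptability of $X$ means $X_T \notin \acc$, hence $\rho > 0$ (indeed $\rho \in (0,+\infty]$). The goal is then simply to show $X_T^{R(X),S}/X_0 = X_T^\rho/X_0^\rho$.

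Because $\acc$ is a closed cone, Theorem \ref{Prop:IntrinsicRepresPosHom&S-Additive} applies, and as $\rho>0$ it gives $\riskR{A}{S}{X} = \tfrac{\rho}{X_0+\rho}$, so that $1 - \riskR{A}{S}{X} = \tfrac{X_0}{X_0+\rho}$. Substituting these into the definition of the intermediate position yields
\[
X_T^{R(X),S} = (1-\riskR{A}{S}{X})\, X_T + \riskR{A}{S}{X}\, \tfrac{X_0}{S_0} S_T = \frac{X_0}{X_0+\rho}\Big(X_T + \tfrac{\rho}{S_0} S_T\Big) = \frac{X_0}{X_0+\rho}\, X_T^\rho \,.
\]
Dividing by $X_0$ and using $X_0^\rho = X_0+\rho$ gives $X_T^{R(X),S}/X_0 = X_T^\rho/X_0^\rho$, which is the asserted equality of returns. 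When $S_T \in \mathrm{int}(\acc)$ this can be phrased even more quickly by combining Corollary \ref{Cor:rhoInTermsOfR}, which gives $X_0^\rho = X_0 + \tfrac{X_0 R_{\acc,S}(X)}{1-R_{\acc,S}(X)} = \tfrac{X_0}{1-R_{\acc,S}(X)}$, with Corollary \ref{Rem:X^RScaledX^rho}, which gives $X_T^{R(X),S} = (1-R_{\acc,S}(X)) X_T^\rho$, and then dividing.

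There is essentially no real obstacle here; the content is entirely carried by Theorem \ref{Prop:IntrinsicRepresPosHom&S-Additive}, and the rest is bookkeeping. The only point worth flagging is the degenerate case $\rho = +\infty$, i.e. $S_T \notin \mathrm{int}(\acc)$: then Proposition \ref{Prop:eligibleAssetInfluence} forces $\riskR{A}{S}{X} = 1$, the traditional ``altered position'' $(X_0^\rho, X_T^\rho)$ is not a genuine finite position, and the identity should be read as a limit (the intrinsic return equals $S_T/S_0$, the limit of $X_T^\rho/X_0^\rho$ as $\rho\to\infty$), or one simply restricts attention to eligible assets with $S_T \in \mathrm{int}(\acc)$, in which case both sides are well defined and the computation above applies verbatim.
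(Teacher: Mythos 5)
Your proof is correct and follows essentially the same route as the paper: both rest on the representation $R_{\acc,S}(X)=\rho_{\acc,S}(X_T)/(X_0+\rho_{\acc,S}(X_T))$ from Theorem \ref{Prop:IntrinsicRepresPosHom&S-Additive} (equivalently, the combination of Corollaries \ref{Cor:rhoInTermsOfR} and \ref{Rem:X^RScaledX^rho} that you give as the alternative phrasing) followed by the same one-line algebra; your remark on the degenerate case $\rho=+\infty$ is a sensible extra precaution the paper leaves implicit.
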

\begin{proof}
By \hyperref[Rem:X^RScaledX^rho]{Corollary \ref*{Rem:X^RScaledX^rho}}, we have $X_T^{R(X),S} = (1-\riskR{A}{S}{X}) X_T^\rho$, and by \hyperref[Prop:IntrinsicRepresPosHom&S-Additive]{Theorem \ref*{Prop:IntrinsicRepresPosHom&S-Additive}}, we know that $1-\riskR{A}{S}{X} = \frac{X_0}{X_0 + \risk{A}{S}{X_T}}$.
Dividing both sides of the first equation by $X_0$ and using the second equality yield the assertion. 
\end{proof}

A popular way to examine the performance of an investment is to consider the so-called \emph{(revised) Sharpe ratio}.
\begin{example}
The revised Sharpe ratio is defined as the fraction of the expectation and the standard deviation of the excess return of an investment $X$ over a benchmark asset $B$,
\begin{align*}
\text{SR}_B(X) = \frac{\mathbb{E}[\frac{X_T}{X_0} - \frac{B_T}{B_0}]}{\sqrt{\mathbb{V}[\frac{X_T}{X_0} - \frac{B_T}{B_0}]}} \,.
\end{align*}
Since, by \hyperref[Prop:ReturnsEqual]{Corollary \ref*{Prop:ReturnsEqual}}, the returns of $(X_0, X_T^{R(X),S})$ and $(X_0^\rho, X_T^{\rho})$ are equal, also their Sharpe ratios are equal.
\end{example}

\section{Dual representations on convex acceptance sets} \label{Sec:DualRepres}
In this section, we derive a dual representation of intrinsic risk measures.
Firstly, we recall duality results of convex risk measures as in \cite[Section 4.3]{bib:FS} and derive a representation of coherent risk measures as in \cite[Section 4.1]{bib:ADHE} or its extended version due to F.~Delbaen \cite[Section 3]{bib:Delbaen2002} as a special case.
In the second part of this section, we derive an alternative representation of the acceptance set $\acc$, similar to that of S.~Drapeau and M.~Kupper \cite[Section 2]{bib:Drapeau}, which leads to the representation of intrinsic risk measures. 

\subsection{Duality of convex and coherent risk measures}
The standard approach for the derivation of a dual representation starts in a model-free environment of a measurable space $(\Omega,\mathcal{F})$ via finitely additive probability measures.
Under suitable continuity conditions the results can then be restricted to $\sigma$-additive probability measures.
The dual representation can then be translated to a probability space $(\Omega,\mathcal{F},\mathbb{P})$, see Section 4.2 and Section 4.3 in \cite{bib:FS}.
We skip these steps and directly consider $\mathcal{M}_{\sigma}(\mathbb{P}):= \mathcal{M}_{\sigma}(\Omega,\mathcal{F},\mathbb{P})$, the set of all $\sigma$-additive probability measures on $\mathcal{F}$ which are absolutely continuous with respect to $\mathbb{P}$ and financial positions in $\mathcal{X} = L^\infty(\Omega,\mathcal{F},\mathbb{P})$.\\

We start by recalling Theorem 4.31 in \cite{bib:FS}.
\begin{theorem} \label{Thm:ConvexEquivalencesOnProbSpace}
Let $\acc$ be a convex, $\text{weak}^*$-closed acceptance set.
Let $\rho_\acc$ be defined as in \hyperref[eq:rhoViaAcc]{Equation (\ref*{eq:rhoViaAcc})} with $\mathbf{r}=\mathbf{1}_\Omega$.
The risk measure has the representation 
\begin{align} \label{Eq:ConvexEquivalencesOnProbSpace}
\rho_\acc(X_T) = \sup_{\mathbb{Q}\in \mathcal{M}_{\sigma}(\mathbb{P})} (\mathbb{E}_{\mathbb{Q}}[-X_T] - \alpha_\mathrm{min}(\mathbb{Q},\acc))\,,
\end{align}
with the so-called \emph{minimal penalty function} $\alpha_{\mathrm{min}}$ defined for all $\mathbb{Q} \in \mathcal{M}_{\sigma}(\mathbb{P})$ by
\begin{align*}
\alpha_{\mathrm{min}}(\mathbb{Q},\acc) = \sup_{X_T \in \acc} \mathbb{E}_\mathbb{Q}[-X_T]\,.
\end{align*}
\end{theorem}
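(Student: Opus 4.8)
The plan is to prove the stated identity by establishing the two bounds $\sup_{\mathbb{Q}}(\cdots)\le\rho_\acc(X_T)$ and $\sup_{\mathbb{Q}}(\cdots)\ge\rho_\acc(X_T)$ separately. At the outset one records the structural facts to be used: since $\acc$ is convex, \hyperref[Prop:CorrespondenceOfRiskAndAcc]{Proposition \ref*{Prop:CorrespondenceOfRiskAndAcc}} shows that $\rho_\acc$ is convex, monotone, and cash-additive with respect to $\mathbf{1}_\Omega$; moreover $\rho_\acc$ is \emph{real-valued}, because by \hyperref[Lemma:acceptanceSet1]{Lemma \ref*{Lemma:acceptanceSet1}} the set $\acc$ contains all sufficiently large constants — so the infimum defining $\rho_\acc(X_T)$ is over a nonempty set, finite from above for $X_T\in\mathcal{X}$ — and no sufficiently small constants, so that by monotonicity $X_T+m\mathbf{1}_\Omega\notin\acc$ once $m$ is small enough and the infimum is $>-\infty$. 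Two elementary consequences of monotonicity and the definition will be used repeatedly: $X_T+c\mathbf{1}_\Omega\notin\acc$ whenever $c<\rho_\acc(X_T)$, and $X_T+m\mathbf{1}_\Omega\in\acc$ whenever $m>\rho_\acc(X_T)$.

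The bound $\sup_{\mathbb{Q}}(\cdots)\le\rho_\acc(X_T)$ uses only that $\acc$ is an acceptance set. Fix $X_T\in\mathcal{X}$, $\mathbb{Q}\in\mathcal{M}_\sigma(\mathbb{P})$ and $\varepsilon>0$. Then $X_T+(\rho_\acc(X_T)+\varepsilon)\mathbf{1}_\Omega\in\acc$, hence
\begin{align*}
\alpha_{\mathrm{min}}(\mathbb{Q},\acc)\ge\mathbb{E}_\mathbb{Q}\big[-X_T-(\rho_\acc(X_T)+\varepsilon)\mathbf{1}_\Omega\big]=\mathbb{E}_\mathbb{Q}[-X_T]-\rho_\acc(X_T)-\varepsilon\,,
\end{align*}
so $\mathbb{E}_\mathbb{Q}[-X_T]-\alpha_{\mathrm{min}}(\mathbb{Q},\acc)\le\rho_\acc(X_T)+\varepsilon$; letting $\varepsilon\downarrow 0$ and passing to the supremum over $\mathbb{Q}$ gives the claim.

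For the reverse bound $\sup_{\mathbb{Q}}(\cdots)\ge\rho_\acc(X_T)$ — the direction where convexity and weak$^*$-closedness are essential — fix $X_T$ and $c<\rho_\acc(X_T)$; it suffices to exhibit $\mathbb{Q}\in\mathcal{M}_\sigma(\mathbb{P})$ with $\mathbb{E}_\mathbb{Q}[-X_T]-\alpha_{\mathrm{min}}(\mathbb{Q},\acc)>c$. Since $X_T+c\mathbf{1}_\Omega\notin\acc$ and $\acc$ is convex and closed in the locally convex topology $\sigma(\mathcal{X},L^1(\mathbb{P}))$, the strict separation theorem yields $Z$ in the predual $L^1(\mathbb{P})$ and $\beta\in\mathbb{R}$ with
\begin{align*}
\mathbb{E}\big[Z(X_T+c\mathbf{1}_\Omega)\big]<\beta\le\mathbb{E}[ZY]\quad\text{for all }Y\in\acc\,.
\end{align*}
Upward-closedness $\acc+\mathcal{X}_+\subseteq\acc$ forces $\mathbb{E}[ZW]\ge 0$ for every $W\in\mathcal{X}_+$ — otherwise the right-hand infimum would be $-\infty$ — hence $Z\ge 0$, and $Z\ne 0$ by the strict inequality, so $a:=\mathbb{E}[Z]>0$ and $\mathbb{Q}:=(Z/a)\,\mathbb{P}$ lies in $\mathcal{M}_\sigma(\mathbb{P})$ (absolute continuity being automatic since $Z\in L^1(\mathbb{P})$). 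Dividing by $a$ and rearranging, $\mathbb{E}_\mathbb{Q}[-X_T]-c>\sup_{Y\in\acc}\mathbb{E}_\mathbb{Q}[-Y]=\alpha_{\mathrm{min}}(\mathbb{Q},\acc)$, as required; since $c<\rho_\acc(X_T)$ was arbitrary, the supremum is at least $\rho_\acc(X_T)$. The main obstacle is precisely this separation step: the $\sigma$-additive representation works — in contrast to the coarser finitely additive version on a bare measurable space — exactly because the separating functional can be chosen in the predual $L^1(\mathbb{P})$, i.e.\ is given by a genuine density, which is the analytic content of weak$^*$-closedness of $\acc$ (equivalently, of the Fatou property of $\rho_\acc$); under mere norm-closedness Hahn--Banach would only deliver an element of $(L^\infty)^*$, i.e.\ a finitely additive measure. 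Everything else — real-valuedness of $\rho_\acc$, the sign and normalisation of $Z$, and the reduction of the conjugate functional to a supremum over $\acc$ via cash-additivity — is routine.
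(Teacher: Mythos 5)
Your proof is correct, but it is worth noting that the paper does not prove this statement at all: Theorem \ref{Thm:ConvexEquivalencesOnProbSpace} is merely \emph{recalled} from F\"ollmer--Schied (Theorem 4.31 in \cite{bib:FS}), with no argument given. What you supply is a self-contained derivation, and it is essentially the standard one underlying the cited result, namely the implication ``$\acc$ weak$^*$-closed $\Rightarrow$ $\sigma$-additive representation'' proved by direct Hahn--Banach separation in $(L^\infty,\sigma(L^\infty,L^1))$. Your two bounds are both sound: the easy inequality $\sup_{\mathbb{Q}}(\cdots)\le\rho_\acc(X_T)$ uses only monotonicity and the definition of the infimum, and in the reverse direction the separation of the point $X_T+c\mathbf{1}_\Omega\notin\acc$ from the convex weak$^*$-closed set $\acc$ correctly produces a functional in the predual $L^1(\mathbb{P})$, whose positivity ($Z\ge 0$ via $\acc+\mathcal{X}_+\subseteq\acc$), nontriviality (from the strict inequality), and normalisation give the required $\mathbb{Q}\in\mathcal{M}_\sigma(\mathbb{P})$; the intermediate constant $\beta$ even gives you the strict inequality $\alpha_{\mathrm{min}}(\mathbb{Q},\acc)<\mathbb{E}_\mathbb{Q}[-X_T]-c$ you assert. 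This is also the same separation mechanism the paper itself deploys later in Proposition \ref{Lemma:EquivOfXinAcc} (citing Dunford--Schwartz V.2.10), so your argument is consistent with the paper's toolkit. The one presentational difference from the textbook route is that F\"ollmer--Schied first establish the representation over finitely additive measures on a bare measurable space and then restrict to $\mathcal{M}_\sigma(\mathbb{P})$ under the Fatou/weak$^*$ hypothesis, whereas you go directly to the $\sigma$-additive statement; your closing remark correctly identifies weak$^*$-closedness as exactly what licenses this shortcut.
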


Since coherent risk measures are convex risk measures which are positively homogeneous, we immediately get the following corollary as a special case of \hyperref[Thm:ConvexEquivalencesOnProbSpace]{Theorem \ref*{Thm:ConvexEquivalencesOnProbSpace}}.
For further details see Corollary 4.18 and Corollary 4.34 in \cite{bib:FS}.
\begin{corollary} \label{Thm:DualRepresOfCoherentRisk}
Let $\acc$ be a conic, convex, $\text{weak}^*$-closed acceptance set.
Then, restricting probability measures to the subset $\mathcal{M} = \{\mathbb{Q} \in \mathcal{M}_{\sigma}(\mathbb{P}) \,|\, \alpha_{\mathrm{min}}(\mathbb{Q},\acc) = 0  \}$, the coherent risk measure $\rho_\acc:\mathcal{X} \rightarrow \mathbb{R}$ can be written as 
\begin{align*} 
\rho_\acc(X_T) = \sup_{\mathbb{Q} \in \mathcal{M}} \mathbb{E}_{\mathbb{Q}}[-X_T] \,. 
\end{align*}
\end{corollary}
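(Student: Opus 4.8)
The plan is to obtain this corollary directly as a specialization of \hyperref[Thm:ConvexEquivalencesOnProbSpace]{Theorem \ref*{Thm:ConvexEquivalencesOnProbSpace}}, exploiting the extra structure coming from conicity of $\acc$. The starting point is the observation that if $\acc$ is a cone, then the map $\mathbb{Q} \mapsto \alpha_{\mathrm{min}}(\mathbb{Q},\acc) = \sup_{X_T \in \acc} \mathbb{E}_\mathbb{Q}[-X_T]$ takes only the values $0$ and $+\infty$. Indeed, if there is some $X_T \in \acc$ with $\mathbb{E}_\mathbb{Q}[-X_T] > 0$, then $\lambda X_T \in \acc$ for all $\lambda > 0$ by conicity, and letting $\lambda \to \infty$ forces $\alpha_{\mathrm{min}}(\mathbb{Q},\acc) = +\infty$; otherwise $\mathbb{E}_\mathbb{Q}[-X_T] \le 0$ for every acceptable $X_T$, and since $0 \in \bar{\acc} = \acc$ (using \hyperref[LemmaPart:intclAlsoConic]{Lemma \ref*{Lemma:acceptanceSet1} Assertion \ref*{LemmaPart:intclAlsoConic}} together with weak$^*$-closedness) the supremum equals $\mathbb{E}_\mathbb{Q}[0] = 0$. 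Hence the penalty function is the $\{0,\infty\}$-indicator of the set $\mathcal{M} = \{\mathbb{Q} \in \mathcal{M}_\sigma(\mathbb{P}) \mid \alpha_{\mathrm{min}}(\mathbb{Q},\acc) = 0\}$.

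Next I would substitute this dichotomy into the representation \eqref{Eq:ConvexEquivalencesOnProbSpace}. For any $\mathbb{Q} \notin \mathcal{M}$ the term $\mathbb{E}_\mathbb{Q}[-X_T] - \alpha_{\mathrm{min}}(\mathbb{Q},\acc) = -\infty$ contributes nothing to the supremum (as long as $\mathcal{M}$ is nonempty, which holds because $\rho_\acc$ is real-valued and finite, so the supremum in \eqref{Eq:ConvexEquivalencesOnProbSpace} cannot be $-\infty$). Therefore
\begin{align*}
\rho_\acc(X_T) = \sup_{\mathbb{Q} \in \mathcal{M}_\sigma(\mathbb{P})} \big(\mathbb{E}_\mathbb{Q}[-X_T] - \alpha_{\mathrm{min}}(\mathbb{Q},\acc)\big) = \sup_{\mathbb{Q} \in \mathcal{M}} \mathbb{E}_\mathbb{Q}[-X_T]\,,
\end{align*}
which is exactly the claimed formula. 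The positive homogeneity of $\rho_\acc$ (equivalently, conicity of $\acc$ via \hyperref[Prop:CorrespondenceOfRiskAndAcc]{Proposition \ref*{Prop:CorrespondenceOfRiskAndAcc}}) is precisely what collapses the penalty term, so no separate argument about homogeneity of the functional is needed beyond this.

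I expect the only genuinely delicate point to be the bookkeeping around possibly infinite values: one must check that $\mathcal{M} \neq \emptyset$ so that the reduced supremum is well-defined and still returns the finite value $\rho_\acc(X_T)$, and one must make sure the argument that $\alpha_{\mathrm{min}} \in \{0,\infty\}$ is airtight — in particular that $0 \in \acc$, for which weak$^*$-closedness of the cone is used. All of this is routine given \hyperref[Thm:ConvexEquivalencesOnProbSpace]{Theorem \ref*{Thm:ConvexEquivalencesOnProbSpace}} and \hyperref[Lemma:acceptanceSet1]{Lemma \ref*{Lemma:acceptanceSet1}}, so in the write-up I would simply cite \hyperref[Thm:ConvexEquivalencesOnProbSpace]{Theorem \ref*{Thm:ConvexEquivalencesOnProbSpace}} and \cite{bib:FS} (Corollaries 4.18 and 4.34) and present the penalty-function dichotomy as the single substantive step.
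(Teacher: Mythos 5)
Your proof is correct and follows the same route the paper takes: the paper simply declares the corollary a special case of \hyperref[Thm:ConvexEquivalencesOnProbSpace]{Theorem \ref*{Thm:ConvexEquivalencesOnProbSpace}} and cites Corollaries 4.18 and 4.34 of \cite{bib:FS}, whose content is exactly the penalty-function dichotomy ($\alpha_{\mathrm{min}}\in\{0,+\infty\}$ on a cone) that you spell out. The same dichotomy argument is in fact reused explicitly later in the paper, in the proof of \hyperref[Cor:DualRepresIntrinsicCoherent]{Corollary \ref*{Cor:DualRepresIntrinsicCoherent}}, so your write-up just makes visible the details the paper delegates to the references.
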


\subsection{Duality of intrinsic risk measures}

In this section, we will start in a more general setting to illustrate the approach to this duality.
Let $\mathcal{X}$ be a locally convex topological vector space equipped with a partial order~$\leq$.
Let $\mathcal{X}_+ = \{x \in \mathcal{X}\,|\, x \geq 0\}$ be the positive cone. 
We denote the topological dual of $\mathcal{X}$ by $\mathcal{X}^* = \{ x^*: \mathcal{X} \rightarrow \mathbb{R} \,|\, x^* \text{ is linear, continuous}\}$ and let $\mathcal{X}^*_+ = \{x^* \in \mathcal{X}^* \,|\, \forall x \in \mathcal{X}_+ : x^*(x) \geq 0 \}$ be the dual cone of $\mathcal{X}_+$. 

The key result for duality is the following proposition which yields an alternative representation of the acceptance set by a set of probability measures.

\begin{proposition} \label{Lemma:EquivOfXinAcc}
Let $\acc \subset \mathcal{X}$ be a $\text{weak}^*$-closed, convex acceptance set.
Then $x \in \acc$ if and only if for all $x^* \in \mathcal{X}^*_+$ the inequality $\alpha(x^*) := \inf_{y \in \acc} x^*(y) \leq x^*(x)$ holds.
\end{proposition}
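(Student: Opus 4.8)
The plan is to prove the two implications separately, using a separation argument for the non-trivial direction.

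\textbf{The easy direction.}
First I would show that if $x \in \acc$, then $\alpha(x^*) \leq x^*(x)$ for every $x^* \in \mathcal{X}^*_+$. This is immediate: $\alpha(x^*)$ is defined as $\inf_{y \in \acc} x^*(y)$, and since $x$ itself lies in $\acc$, the value $x^*(x)$ is one of the quantities over which the infimum is taken, hence $\alpha(x^*) \leq x^*(x)$. No use of convexity, closedness, or positivity of $x^*$ is needed here.

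\textbf{The hard direction.}
For the converse, suppose $x \notin \acc$; I want to produce some $x^* \in \mathcal{X}^*_+$ with $\alpha(x^*) > x^*(x)$. Since $\acc$ is weak$^*$-closed and convex and $\{x\}$ is a (trivially convex, compact) set disjoint from $\acc$, the Hahn--Banach separation theorem yields a continuous linear functional $x^* \in \mathcal{X}^*$ and a real number $c$ with $x^*(x) < c \leq x^*(y)$ for all $y \in \acc$; in particular $x^*(x) < c \leq \alpha(x^*)$, which is the desired strict inequality \emph{provided} we can guarantee $x^* \in \mathcal{X}^*_+$. The main obstacle is exactly this: the separating functional delivered by Hahn--Banach need not a priori be positive, and we must exploit monotonicity of $\acc$ to fix this. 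The argument is that $\acc$ is directionally unbounded in every positive direction: for any $y_0 \in \acc$ and any $z \in \mathcal{X}_+$ we have $y_0 + t z \in \acc$ for all $t \geq 0$ by monotonicity. Plugging $y_0 + tz$ into the inequality $x^*(\,\cdot\,) \geq c$ gives $x^*(y_0) + t\,x^*(z) \geq c$ for all $t \geq 0$, which forces $x^*(z) \geq 0$. Since $z \in \mathcal{X}_+$ was arbitrary, $x^* \in \mathcal{X}^*_+$, completing the proof.

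\textbf{Points requiring care.}
A couple of technical checks should be made explicit. First, the separation theorem must be applied in the weak$^*$ topology, i.e. the topology $\sigma(\mathcal{X}, \mathcal{X}^*)$ (or in the concrete $L^\infty$ setting, $\sigma(L^\infty, L^1)$), so that the closedness hypothesis on $\acc$ matches the topology in which separation is performed and so that the resulting functional genuinely lies in $\mathcal{X}^*$; one uses the version of Hahn--Banach separating a point from a closed convex set in a locally convex space. Second, one should note $x^* \neq 0$ from the strict inequality $x^*(x) < \alpha(x^*)$, so the functional is nondegenerate, though this is not strictly needed for the statement. Finally, to connect with the ``probability measures'' phrasing in the surrounding text, one would normalise $x^*$ (when $\mathcal{X} = L^\infty(\mathbb{P})$ and $x^* \geq 0$ is identified with a finitely additive measure) so that $x^*(\mathbf{1}_\Omega) = 1$; this normalisation is harmless since scaling $x^*$ by a positive constant scales both $\alpha(x^*)$ and $x^*(x)$ equally and preserves the strict inequality.
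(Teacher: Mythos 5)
Your proposal is correct and follows essentially the same route as the paper's proof: the easy direction is the trivial infimum bound, and the hard direction separates $x$ from the weak$^*$-closed convex set $\acc$ by Hahn--Banach and then uses monotonicity of $\acc$ (unboundedness in every positive direction, with the scaling-to-infinity argument) to force the separating functional into $\mathcal{X}^*_+$. Your additional remarks on matching the separation theorem to the weak$^*$ topology and on normalising the functional are sensible refinements of the same argument.
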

\begin{proof}
The `only if' implication follows directly, since the infimum of any functional $x^*$ over $\acc$ is always less or equal than the value $x^*(x)$ for some arbitrary $x \in \acc$.
For the `if' direction we use a version of the Hahn-Banach Separation Theorem on locally convex topological vector spaces, see for example Theorem V.2.10 in N.~Dunford and J.~T.~Schwartz \cite{bib:DS}.
It yields for any $x \in \mathcal{X} \setminus \acc$ a linear functional $\ell \in \mathcal{X}^*$ such that $\ell(-x) > \sup_{y \in \acc} \ell(-y)$.
By linearity of $\ell$, $\inf_{y \in \acc} \ell(y) > \ell(x)$ follows.
To show that $\ell \in \mathcal{X}^*_+$, take some $y \in \acc$, then monotonicity of $\acc$ implies that for any $z_+ \in \mathcal{X}_+$ also $y + z_+ \in \acc$.
So linearity of $\ell$ implies that $\ell(y) + \ell(z_+) > \ell(x)$.
But this can only be true if $\ell(z_+) \geq 0$ for all $z_+ \in \mathcal{X}_+$ which means $\ell \in \mathcal{X}^*_+$. 
Indeed, assume $\ell(z_+) < 0$ for some $z_+ \in \mathcal{X}_+$.
Since $\lambda z_+ \in \mathcal{X}_+$, for any positive real $\lambda$, we get $\ell(\lambda z_+) = \lambda \ell(z_+) \rightarrow -\infty$ as $\lambda \rightarrow \infty$, leading to a contradiction. 
\end{proof}

In a slightly different setting, the proof can be found in Lemma C.3 of~\cite{bib:Drapeau}.

\begin{remark} \label{Remark:DualsOfLinfty}
Choosing the $\text{weak}^*$-topology on $L^\infty(\mathbb{P})$ denoted by $\sigma(L^{\infty},L^1)$, the dual of $L^{\infty}(\mathbb{P})$ is $L^1(\mathbb{P})$, see for example Theorem V.3.9 in \cite{bib:DS}.
A short calculation shows that each $f \in L^1_+(\mathbb{P})$ defines a $\sigma$-additive measure $\mathbb{Q} \ll \mathbb{P}$ via the integral 
\begin{align} \label{Eq:DefineIntegralL1}
\mathbb{Q}[A] := \frac{1}{\Vert f \Vert_{L^1(\mathbb{P})}} \int_{\Omega} f \mathbf{1}_A \,\mathrm{d}\mathbb{P} \,,\; \text{ for } A \in \mathcal{F}\,,
\end{align}
such that $\mathbb{E}_{\mathbb{Q}}[g] \geq 0$, for all $g \in L^{\infty}_+(\mathbb{P})$.
Using the Radon-Nikod\'ym Theorem, as for example stated in \cite[Theorem III.10.2]{bib:DS}, we can verify the other direction that for each \mbox{$\mathbb{Q} \in \mathcal{M}_{\sigma}(\mathbb{P})$} there exists an $f \in L^1_+(\mathbb{P})$ such that \hyperref[Eq:DefineIntegralL1]{Equation (\ref*{Eq:DefineIntegralL1})} holds for all $A \in \mathcal{F}$.
\end{remark}

The following lemma shows that the expectation with respect to probability measures $\mathbb{Q} \in \mathcal{M}_{\sigma}(\mathbb{P})$ can be used to represent our acceptance sets in $L^\infty(\mathbb{P})$.

\begin{lemma} \label{Lemma:XinADual}
Let $\acc \subset \mathcal{X} =  L^{\infty}(\Omega,\mathcal{F},\mathbb{P})$ be a $\sigma(L^{\infty},L^1)$-closed, convex acceptance set.
Then $X_T \in \acc$ if and only if for all probability measures $\mathbb{Q} \in \mathcal{M}_{\sigma}(\mathbb{P})$
\begin{align*}
 \inf_{Y_T \in \acc} \mathbb{E}_{\mathbb{Q}}[Y_T] \leq \mathbb{E}_{\mathbb{Q}}[X_T]\,.
\end{align*}
\end{lemma}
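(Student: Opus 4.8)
The plan is to deduce this statement directly from Proposition \ref{Lemma:EquivOfXinAcc} by specialising the abstract locally convex space $\mathcal{X}$ there to $L^\infty(\Omega,\mathcal{F},\mathbb{P})$ endowed with the weak* topology $\sigma(L^\infty,L^1)$, and then translating the quantification over the dual cone into a quantification over probability measures by means of the identifications recorded in Remark \ref{Remark:DualsOfLinfty}.

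First I would invoke Proposition \ref{Lemma:EquivOfXinAcc}: since $\acc$ is $\sigma(L^\infty,L^1)$-closed and convex, $X_T \in \acc$ holds if and only if $\alpha(x^*) := \inf_{Y_T \in \acc} x^*(Y_T) \leq x^*(X_T)$ for every $x^* \in \mathcal{X}^*_+$. By Remark \ref{Remark:DualsOfLinfty}, the dual of $L^\infty(\mathbb{P})$ under this topology is $L^1(\mathbb{P})$, the dual cone $\mathcal{X}^*_+$ is $L^1_+(\mathbb{P})$, and the functional associated to $f \in L^1_+(\mathbb{P})$ acts by $Y_T \mapsto \int_\Omega f Y_T \, \mathrm{d}\mathbb{P}$.

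Next I would match the two quantifiers. For the zero functional $f = 0$ the condition reads $0 \leq 0$ and is vacuous, so it may be discarded. For $f \in L^1_+(\mathbb{P})$ with $f \neq 0$, set $c = \Vert f \Vert_{L^1(\mathbb{P})} > 0$ and let $\mathbb{Q} \in \mathcal{M}_{\sigma}(\mathbb{P})$ be the probability measure defined from $f$ via Equation (\ref{Eq:DefineIntegralL1}); then $\int_\Omega f Y_T \, \mathrm{d}\mathbb{P} = c\,\mathbb{E}_{\mathbb{Q}}[Y_T]$ for every $Y_T \in \mathcal{X}$, so dividing by the strictly positive constant $c$ shows that $\inf_{Y_T \in \acc} \int_\Omega f Y_T \, \mathrm{d}\mathbb{P} \leq \int_\Omega f X_T \, \mathrm{d}\mathbb{P}$ is equivalent to $\inf_{Y_T \in \acc} \mathbb{E}_{\mathbb{Q}}[Y_T] \leq \mathbb{E}_{\mathbb{Q}}[X_T]$. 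Conversely, by the Radon--Nikod\'ym theorem (again as in Remark \ref{Remark:DualsOfLinfty}) every $\mathbb{Q} \in \mathcal{M}_{\sigma}(\mathbb{P})$ arises in this way from its density $f = \mathrm{d}\mathbb{Q}/\mathrm{d}\mathbb{P} \in L^1_+(\mathbb{P})$. Hence the assertion ``$\alpha(x^*) \leq x^*(X_T)$ for all $x^* \in \mathcal{X}^*_+$'' is equivalent to ``$\inf_{Y_T \in \acc} \mathbb{E}_{\mathbb{Q}}[Y_T] \leq \mathbb{E}_{\mathbb{Q}}[X_T]$ for all $\mathbb{Q} \in \mathcal{M}_{\sigma}(\mathbb{P})$'', which combined with Proposition \ref{Lemma:EquivOfXinAcc} gives the claim.

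The only point demanding care --- and it is bookkeeping rather than a genuine obstacle --- is this last translation: one must note that positive rescaling of a dual functional leaves the defining inequality unchanged, so normalising a density to a probability measure costs nothing; that the zero functional imposes no constraint; and that $f \leftrightarrow \mathbb{Q}$ is onto $\mathcal{M}_{\sigma}(\mathbb{P})$ by Radon--Nikod\'ym. Everything else is an immediate specialisation of Proposition \ref{Lemma:EquivOfXinAcc}.
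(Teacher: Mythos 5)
Your proposal is correct and follows exactly the route of the paper's own (one-line) proof: specialise Proposition \ref{Lemma:EquivOfXinAcc} to $\mathcal{X}=L^\infty(\mathbb{P})$ with the $\sigma(L^\infty,L^1)$ topology and use the identifications of Remark \ref{Remark:DualsOfLinfty} to pass between $L^1_+(\mathbb{P})$ and $\mathcal{M}_\sigma(\mathbb{P})$. You merely make explicit the bookkeeping (discarding the zero functional, scale-invariance of the inequality, surjectivity via Radon--Nikod\'ym) that the paper leaves implicit.
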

\begin{proof}
\hyperref[Lemma:EquivOfXinAcc]{Proposition \ref*{Lemma:EquivOfXinAcc}} with $\mathcal{X} = L^{\infty}(\Omega,\mathcal{F},\mathbb{P})$ and \hyperref[Remark:DualsOfLinfty]{Remark \ref*{Remark:DualsOfLinfty}} yield the assertion. 
\end{proof}

Using this result we can now derive a dual representation for intrinsic risk measures.

\begin{theorem}[Dual representation] \label{Thm:DualRepresofIntrinsicRisk}
Let $\acc \subset \mathcal{X} = L^\infty(\Omega,\mathcal{F},\mathbb{P})$ be a $\sigma(L^{\infty},L^1)$-closed, convex acceptance set containing $0$ and let $S$ be an eligible asset.
For $\mathbb{Q} \in \mathcal{M}_{\sigma}(\mathbb{P})$ define $\alpha(\mathbb{Q},\acc) = \inf_{X_T \in \acc} \mathbb{E}_{\mathbb{Q}}[X_T]$.
Then the intrinsic risk measure takes the form 
\begin{align} \label{Eq:ConvexRepresentationIntrinsicRisk}
\riskAR{\acc}{S}{X} = \sup_{\mathbb{Q} \in \mathcal{M}_{\sigma}(\mathbb{P})} \frac{(\alpha(\mathbb{Q},\acc) - \mathbb{E}_{\mathbb{Q}}[X_T])^+}{ \frac{X_0}{S_0}\mathbb{E}_{\mathbb{Q}}[S_T] - \mathbb{E}_{\mathbb{Q}}[X_T]} \,.
\end{align}
\end{theorem}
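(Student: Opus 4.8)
The plan is to feed the definition of the intrinsic risk measure into the dual membership criterion of \hyperref[Lemma:XinADual]{Lemma \ref*{Lemma:XinADual}}. For $\mathbb{Q}\in\mathcal{M}_\sigma(\mathbb{P})$ abbreviate the numerator and denominator occurring in \eqref{Eq:ConvexRepresentationIntrinsicRisk} by $n_\mathbb{Q}:=\alpha(\mathbb{Q},\acc)-\mathbb{E}_\mathbb{Q}[X_T]$ and $d_\mathbb{Q}:=\tfrac{X_0}{S_0}\mathbb{E}_\mathbb{Q}[S_T]-\mathbb{E}_\mathbb{Q}[X_T]$. For a fixed $\lambda\in[0,1]$, linearity of the expectation gives $\mathbb{E}_\mathbb{Q}[X_T^{\lambda,S}]=(1-\lambda)\mathbb{E}_\mathbb{Q}[X_T]+\lambda\tfrac{X_0}{S_0}\mathbb{E}_\mathbb{Q}[S_T]$, so \hyperref[Lemma:XinADual]{Lemma \ref*{Lemma:XinADual}} states that $X_T^{\lambda,S}\in\acc$ if and only if $\alpha(\mathbb{Q},\acc)\le\mathbb{E}_\mathbb{Q}[X_T^{\lambda,S}]$ for all $\mathbb{Q}$, which rearranges to $n_\mathbb{Q}\le\lambda d_\mathbb{Q}$ for all $\mathbb{Q}$. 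Hence, writing $\Lambda_\mathbb{Q}:=\{\lambda\in[0,1]\,:\,n_\mathbb{Q}\le\lambda d_\mathbb{Q}\}$, we obtain
\[
\riskAR{\acc}{S}{X}\;=\;\inf\big\{\lambda\in[0,1]\,:\,\forall\,\mathbb{Q}\in\mathcal{M}_\sigma(\mathbb{P}),\ n_\mathbb{Q}\le\lambda d_\mathbb{Q}\big\}\;=\;\inf\!\!\bigcap_{\mathbb{Q}\in\mathcal{M}_\sigma(\mathbb{P})}\!\!\Lambda_\mathbb{Q}\,,
\]
the infimum being over a nonempty set, since $\lambda=1$ always works by \hyperref[Prop:Existenz]{Proposition \ref*{Prop:Existenz}}.

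The second step is a sign analysis showing that each $\Lambda_\mathbb{Q}$ is an upper interval $[\ell_\mathbb{Q},1]$ with $\ell_\mathbb{Q}=(n_\mathbb{Q})^+/d_\mathbb{Q}\in[0,1]$. The key point is $\tfrac{X_0}{S_0}S_T\in\acc$: since $S_T\ge 0$, $X_0,S_0>0$ and $0\in\acc$, monotonicity forces $\tfrac{X_0}{S_0}S_T\in\acc$, exactly as in the proof of \hyperref[Prop:Existenz]{Proposition \ref*{Prop:Existenz}}. Evaluating the definition of $\alpha(\mathbb{Q},\acc)$ at this element gives $d_\mathbb{Q}\ge n_\mathbb{Q}$ for every $\mathbb{Q}$; in particular $1\in\Lambda_\mathbb{Q}$ always. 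If $n_\mathbb{Q}>0$, then $d_\mathbb{Q}\ge n_\mathbb{Q}>0$ and $n_\mathbb{Q}\le\lambda d_\mathbb{Q}\iff\lambda\ge n_\mathbb{Q}/d_\mathbb{Q}$, so $\Lambda_\mathbb{Q}=[n_\mathbb{Q}/d_\mathbb{Q},1]$ with $n_\mathbb{Q}/d_\mathbb{Q}\in(0,1]$. If $n_\mathbb{Q}\le 0$, then either $d_\mathbb{Q}\ge 0$, where $n_\mathbb{Q}\le 0\le\lambda d_\mathbb{Q}$ holds for all $\lambda$, or $d_\mathbb{Q}<0$, where $n_\mathbb{Q}\le\lambda d_\mathbb{Q}\iff\lambda\le n_\mathbb{Q}/d_\mathbb{Q}$ and $n_\mathbb{Q}/d_\mathbb{Q}\ge 1$ because $n_\mathbb{Q}\le d_\mathbb{Q}<0$; either way $\Lambda_\mathbb{Q}=[0,1]$. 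Thus $\ell_\mathbb{Q}=n_\mathbb{Q}/d_\mathbb{Q}$ when $n_\mathbb{Q}>0$ and $\ell_\mathbb{Q}=0$ when $n_\mathbb{Q}\le 0$, which in both cases equals $(n_\mathbb{Q})^+/d_\mathbb{Q}$ under the convention $0/0:=0$, needed only in the degenerate case $n_\mathbb{Q}=d_\mathbb{Q}=0$.

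The last step is to intersect and take the infimum. Since each $\Lambda_\mathbb{Q}=[\ell_\mathbb{Q},1]$ and $\ell_\mathbb{Q}\le 1$, we have $\bigcap_\mathbb{Q}\Lambda_\mathbb{Q}=[\sup_\mathbb{Q}\ell_\mathbb{Q},\,1]$, a nonempty closed subinterval of $[0,1]$, so
\[
\riskAR{\acc}{S}{X}\;=\;\sup_{\mathbb{Q}\in\mathcal{M}_\sigma(\mathbb{P})}\ell_\mathbb{Q}\;=\;\sup_{\mathbb{Q}\in\mathcal{M}_\sigma(\mathbb{P})}\frac{(\alpha(\mathbb{Q},\acc)-\mathbb{E}_\mathbb{Q}[X_T])^+}{\tfrac{X_0}{S_0}\mathbb{E}_\mathbb{Q}[S_T]-\mathbb{E}_\mathbb{Q}[X_T]}\,,
\]
which is \eqref{Eq:ConvexRepresentationIntrinsicRisk}. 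As a sanity check, if $X_T\in\acc$ then $n_\mathbb{Q}\le 0$ for all $\mathbb{Q}$ by \hyperref[Lemma:XinADual]{Lemma \ref*{Lemma:XinADual}}, so the supremum is $0$, in line with \hyperref[Prop:GeneralRelevance]{Proposition \ref*{Prop:GeneralRelevance}}; one may also note that \eqref{Eq:ConvexRepresentationIntrinsicRisk} is the natural convex analogue of the conic representation in \hyperref[Prop:IntrinsicRepresPosHom&S-Additive]{Theorem \ref*{Prop:IntrinsicRepresPosHom&S-Additive}}, the denominator playing the role of $X_0+\rho_{\acc,S}(X_T)$.

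I expect the only genuine obstacle to be the bookkeeping of the sign of $d_\mathbb{Q}$ together with the positive part in the numerator: one must check that the two always line up, so that every term of the supremum is a well-defined number in $[0,1]$, and one must fix and justify the value of the quotient at $d_\mathbb{Q}=0$. Everything else — the linear rewriting of $\mathbb{E}_\mathbb{Q}[X_T^{\lambda,S}]$, and the interchange of $\inf_\lambda$ with the quantifier over $\mathbb{Q}$ via the intersection of the intervals $\Lambda_\mathbb{Q}$ — is routine once \hyperref[Lemma:XinADual]{Lemma \ref*{Lemma:XinADual}} is available.
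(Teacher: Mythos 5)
Your proof is correct and follows essentially the same route as the paper: both apply \hyperref[Lemma:XinADual]{Lemma \ref*{Lemma:XinADual}} to turn membership of $X_T^{\lambda,S}$ in $\acc$ into the family of linear inequalities $n_\mathbb{Q}\le\lambda d_\mathbb{Q}$ and then solve for $\lambda$ to pass from the infimum to the supremum over $\mathcal{M}_\sigma(\mathbb{P})$. Your per-$\mathbb{Q}$ sign analysis (using $\tfrac{X_0}{S_0}S_T\in\acc$ to get $d_\mathbb{Q}\ge n_\mathbb{Q}$, and treating $d_\mathbb{Q}\le 0$ explicitly) is in fact somewhat more careful than the paper's global split into $X_T\in\acc$ versus $X_T\notin\acc$, which relegates the degenerate denominator to \hyperref[Rem:HebbareSingularity]{Remark \ref*{Rem:HebbareSingularity}}.
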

\begin{proof}
With help of \hyperref[Lemma:XinADual]{Lemma \ref*{Lemma:XinADual}} we can rewrite the defining equation of the intrinsic risk measure as
\begin{align*}
\riskR{A}{S}{X} &= \inf \Big\{\lambda \in [0,1] \,|\, (1-\lambda)X_T + \lambda \tfrac{X_0}{S_0} S_T \in \acc \Big\} \\
 &= \inf \Big\{\lambda \in [0,1] \,|\, \forall \mathbb{Q} \in \mathcal{M}_{\sigma}(\mathbb{P}) :\, \mathbb{E}_\mathbb{Q}\big[(1-\lambda)X_T + \lambda \tfrac{X_0}{S_0} S_T \big] \geq \alpha(\mathbb{Q},\acc) \Big\} \\
 &= \inf \Big\{\lambda \in [0,1] \,|\, \forall \mathbb{Q} \in \mathcal{M}_{\sigma}(\mathbb{P}) :\, \lambda  \mathbb{E}_\mathbb{Q}\big[\tfrac{X_0}{S_0} S_T - X_T \big]  \geq \alpha(\mathbb{Q},\acc) - \mathbb{E}_\mathbb{Q}[X_T] \Big\} \,.
\end{align*}
Note that if $X_T \in \acc$, then \hyperref[Lemma:XinADual]{Lemma \ref*{Lemma:XinADual}} implies that for all $\mathbb{Q} \in \mathcal{M}_{\sigma}(\mathbb{P})$ the expression $\alpha(\mathbb{Q},\acc) - \mathbb{E}_\mathbb{Q}[X_T]$ is negative and thus, the infimum over $\lambda$ is equal to $0$.
Assume now that $X_T \notin \acc$, then $\mathbb{E}_\mathbb{Q}[\frac{X_0}{S_0} S_T] - \mathbb{E}_\mathbb{Q}[X_T] \geq \alpha(\mathbb{Q},\acc) - \mathbb{E}_\mathbb{Q}[X_T] > 0$ and we can rearrange to get
\begin{align*}
\riskR{A}{S}{X} &= \inf \Bigg\{\lambda \in [0,1] \,\Big|\, \forall \mathbb{Q} \in \mathcal{M}_{\sigma}(\mathbb{P}) :\, \lambda \geq  \frac{\alpha(\mathbb{Q},\acc) - \mathbb{E}_\mathbb{Q}[X_T]}{\frac{X_0}{S_0} \mathbb{E}_\mathbb{Q}[ S_T] - \mathbb{E}_\mathbb{Q}[X_T]} \Bigg\} \\
&= \inf \Bigg\{\lambda \in [0,1] \,\Big|\, \, \lambda \geq \sup_{\mathbb{Q} \in \mathcal{M}_{\sigma}(\mathbb{P})} \frac{\alpha(\mathbb{Q},\acc) - \mathbb{E}_\mathbb{Q}[X_T]}{\frac{X_0}{S_0} \mathbb{E}_\mathbb{Q}[ S_T] - \mathbb{E}_\mathbb{Q}[X_T]} \Bigg\} \\
&= \sup_{\mathbb{Q} \in \mathcal{M}_{\sigma}(\mathbb{P})} \frac{\alpha(\mathbb{Q},\acc) - \mathbb{E}_\mathbb{Q}[X_T]}{\frac{X_0}{S_0} \mathbb{E}_\mathbb{Q}[ S_T] - \mathbb{E}_\mathbb{Q}[X_T]} \,.
\end{align*}
From here the representation in \eqref{Eq:ConvexRepresentationIntrinsicRisk} follows. 
\end{proof}
\noindent Also for this result recall \hyperref[Rem:HebbareSingularity]{Remark \ref*{Rem:HebbareSingularity}} for well-definedness.

{It is interesting to compare this representation to the representation of convex risk measure given in \hyperref[Eq:ConvexEquivalencesOnProbSpace]{Equation (\ref*{Eq:ConvexEquivalencesOnProbSpace})}.
We notice that the numerator in \hyperref[Eq:ConvexRepresentationIntrinsicRisk]{Equation (\ref*{Eq:ConvexRepresentationIntrinsicRisk})} contains the same terms as the expression in \hyperref[Eq:ConvexEquivalencesOnProbSpace]{Equation (\ref*{Eq:ConvexEquivalencesOnProbSpace})}, since $\alpha(\mathbb{Q},\acc) = -\alpha_\mathrm{min}(\mathbb{Q},\acc)$.
However, before the supremum is taken over $\mathcal{M}_{\sigma}(\mathbb{P})$ the numerator is normalised by an expected distance of financial position and eligible asset.
}

{If the acceptance set is a cone, then we can even link} \hyperref[Thm:DualRepresofIntrinsicRisk]{Theorem \ref*{Thm:DualRepresofIntrinsicRisk}} via the dual representation of coherent risk measures in \hyperref[Thm:DualRepresOfCoherentRisk]{Corollary \ref*{Thm:DualRepresOfCoherentRisk}} to \hyperref[Prop:IntrinsicRepresPosHom&S-Additive]{Theorem \ref*{Prop:IntrinsicRepresPosHom&S-Additive}}.

\begin{corollary} \label{Cor:DualRepresIntrinsicCoherent}
If $\acc$ is a $\sigma(L^{\infty},L^1)$-closed, convex cone and $\frac{S_T}{S_0} = \mathbf{1}_\Omega$, then we recover the representation 
\begin{align*}
\riskAR{\acc}{S}{X} = \frac{(\risk{A}{S}{X_T})^+}{X_0 + \risk{A}{S}{X_T}} \,.
\end{align*}
\end{corollary}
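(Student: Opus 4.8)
The plan is to start from the dual representation \eqref{Eq:ConvexRepresentationIntrinsicRisk} in \hyperref[Thm:DualRepresofIntrinsicRisk]{Theorem~\ref*{Thm:DualRepresofIntrinsicRisk}}, specialise it to $\tfrac{S_T}{S_0}=\mathbf{1}_\Omega$, and then collapse the supremum using conicity together with \hyperref[Thm:DualRepresOfCoherentRisk]{Corollary~\ref*{Thm:DualRepresOfCoherentRisk}}. (One could also finish in one line, since $\sigma(L^\infty,L^1)$ is coarser than the norm topology, so a $\sigma(L^\infty,L^1)$-closed convex cone is in particular a norm-closed conic acceptance set containing $0$ by \hyperref[Lemma:acceptanceSet1]{Lemma~\ref*{Lemma:acceptanceSet1}}, whence \hyperref[Prop:IntrinsicRepresPosHom&S-Additive]{Theorem~\ref*{Prop:IntrinsicRepresPosHom&S-Additive}} applies verbatim; but the purpose of the corollary is precisely to recover the formula from the dual side.) With $\tfrac{S_T}{S_0}=\mathbf{1}_\Omega$ one has $\tfrac{X_0}{S_0}\mathbb{E}_\mathbb{Q}[S_T]=X_0$ for every $\mathbb{Q}\in\mathcal{M}_{\sigma}(\mathbb{P})$, so \eqref{Eq:ConvexRepresentationIntrinsicRisk} reduces to $\riskAR{\acc}{S}{X}=\sup_{\mathbb{Q}\in\mathcal{M}_{\sigma}(\mathbb{P})}\bigl(\alpha(\mathbb{Q},\acc)-\mathbb{E}_\mathbb{Q}[X_T]\bigr)^{+}\big/\bigl(X_0-\mathbb{E}_\mathbb{Q}[X_T]\bigr)$.

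Next I would exploit that $\acc$ is a cone. Since $0\in\acc$, we have $\alpha(\mathbb{Q},\acc)=\inf_{Y_T\in\acc}\mathbb{E}_\mathbb{Q}[Y_T]\le 0$; and if some $Y_T\in\acc$ had $\mathbb{E}_\mathbb{Q}[Y_T]<0$, then $\lambda Y_T\in\acc$ for all $\lambda>0$ forces $\alpha(\mathbb{Q},\acc)=-\infty$. Hence $\alpha(\mathbb{Q},\acc)\in\{0,-\infty\}$, with $\alpha(\mathbb{Q},\acc)=0$ precisely on the set $\mathcal{M}$ of \hyperref[Thm:DualRepresOfCoherentRisk]{Corollary~\ref*{Thm:DualRepresOfCoherentRisk}} (recall $\alpha_{\mathrm{min}}(\mathbb{Q},\acc)=-\alpha(\mathbb{Q},\acc)$). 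For $\mathbb{Q}\notin\mathcal{M}$ the numerator is $(-\infty)^{+}=0$, so such $\mathbb{Q}$ do not affect a supremum that is $\ge 0$, and I may restrict to $\mathbb{Q}\in\mathcal{M}$, where each summand equals $g(\mathbb{E}_\mathbb{Q}[-X_T])$ with $g(t)=t^{+}/(X_0+t)$ — the very function appearing in \eqref{Eq:IntrinsicRepresPosHom&S-Additive}, see \hyperref[Rem:HebbareSingularity]{Remark~\ref*{Rem:HebbareSingularity}} for its well-definedness.

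It then remains to interchange the supremum with $g$. The function $g$ is nondecreasing and continuous on $\{t>-X_0\}$ and vanishes for $t\le 0$, so $\sup_{\mathbb{Q}\in\mathcal{M}}g(\mathbb{E}_\mathbb{Q}[-X_T])=g\bigl(\sup_{\mathbb{Q}\in\mathcal{M}}\mathbb{E}_\mathbb{Q}[-X_T]\bigr)$; and by \hyperref[Thm:DualRepresOfCoherentRisk]{Corollary~\ref*{Thm:DualRepresOfCoherentRisk}} applied with $\mathbf{r}=\mathbf{1}_\Omega=\tfrac{S_T}{S_0}$ (so $\rho_\acc$ and $\rho_{\acc,S}$ coincide) one has $\sup_{\mathbb{Q}\in\mathcal{M}}\mathbb{E}_\mathbb{Q}[-X_T]=\risk{A}{S}{X_T}$. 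Substituting gives $\riskAR{\acc}{S}{X}=g\bigl(\risk{A}{S}{X_T}\bigr)=\bigl(\risk{A}{S}{X_T}\bigr)^{+}\big/\bigl(X_0+\risk{A}{S}{X_T}\bigr)$, which is the claim.

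The step requiring genuine care — the main obstacle — is the interchange of $\sup_\mathbb{Q}$ with $g$ together with the boundary cases. When $X_T\in\acc$, \hyperref[Lemma:XinADual]{Lemma~\ref*{Lemma:XinADual}} gives $\alpha(\mathbb{Q},\acc)\le\mathbb{E}_\mathbb{Q}[X_T]$ for all $\mathbb{Q}$, so every numerator vanishes and both sides equal $0$; when $X_T\notin\acc$ one has $\risk{A}{S}{X_T}>0>-X_0$, the $\mathbb{Q}\in\mathcal{M}$ with positive numerator satisfy $\mathbb{E}_\mathbb{Q}[-X_T]>0>-X_0$ so their denominators are strictly positive and $g$ is continuous there, while any $\mathbb{Q}$ with $\mathbb{E}_\mathbb{Q}[-X_T]\le 0$ contribute $0$ under the convention of \hyperref[Rem:HebbareSingularity]{Remark~\ref*{Rem:HebbareSingularity}}. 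The remaining manipulations are routine bookkeeping, largely already carried out inside the proof of \hyperref[Thm:DualRepresofIntrinsicRisk]{Theorem~\ref*{Thm:DualRepresofIntrinsicRisk}}.
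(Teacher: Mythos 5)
Your proof is correct and follows essentially the same route as the paper's: specialise the dual representation of Theorem \ref{Thm:DualRepresofIntrinsicRisk}, use conicity to reduce the penalty function to $\{0,-\infty\}$ and restrict to $\mathcal{M}$, then pull the supremum through the increasing map $t\mapsto t^{+}/(X_0+t)$ and identify $\sup_{\mathbb{Q}\in\mathcal{M}}\mathbb{E}_\mathbb{Q}[-X_T]$ with the coherent risk measure via Corollary \ref{Thm:DualRepresOfCoherentRisk}. Your handling of the sign convention for $\alpha(\mathbb{Q},\acc)$ and of the boundary cases is in fact slightly more careful than the paper's, but the argument is the same.
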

\begin{proof}
Since $\acc$ is a cone, the (minimal) penalty function in \hyperref[Thm:DualRepresofIntrinsicRisk]{Theorem \ref*{Thm:DualRepresofIntrinsicRisk}} can only take the values $0$ and $\pm\infty$ as for all $\mathbb{Q} \in \mathcal{M}_{\sigma}(\mathbb{P})$ and all $\lambda > 0$ the following equality holds,
\begin{align*}
\alpha(\mathbb{Q},\acc) = \sup_{Y_T \in \acc} \mathbb{E}_{\mathbb{Q}}[-Y_T] = \sup_{X_T \in \lambda \acc} \mathbb{E}_{\mathbb{Q}}[-X_T] = \sup_{Y_T \in \acc} \mathbb{E}_{\mathbb{Q}}[-\lambda Y_T] = \lambda \alpha(\mathbb{Q},\acc)\,.
\end{align*}
Restricting $\mathcal{M}_{\sigma}(\mathbb{P})$  to $\mathcal{M} =  \{\mathbb{Q} \in \mathcal{M}_{\sigma}(\mathbb{P}) \,|\, \alpha(\mathbb{Q},\acc) = 0 \}$, \hyperref[Thm:DualRepresofIntrinsicRisk]{Theorem \ref*{Thm:DualRepresofIntrinsicRisk}} yields the representation
\begin{align*}
\riskR{A}{S}{X} = \sup_{\mathbb{Q} \in \mathcal{M}} \frac{(\mathbb{E}_\mathbb{Q}[-X_T])^+}{X_0 + \mathbb{E}_\mathbb{Q}[-X_T]} \,.
\end{align*}
Since for any constant $c > 0$ the map $x \mapsto \frac{x}{c + x}$ is increasing on $\mathbb{R}_{>0}$, we can split the supremum to get
\begin{align*}
\riskR{A}{S}{X} = \frac{\sup_{\mathbb{Q} \in \mathcal{M}}  (\mathbb{E}_\mathbb{Q}[-X_T])^+}{  X_0 + \sup_{\mathbb{Q} \in \mathcal{M}}  \mathbb{E}_\mathbb{Q}[-X_T]} \,.
\end{align*}
But $\sup_{\mathbb{Q} \in \mathcal{M}}  \mathbb{E}_\mathbb{Q}[-X_T]$ is the representation of coherent risk measures from \hyperref[Thm:DualRepresOfCoherentRisk]{Corollary \ref*{Thm:DualRepresOfCoherentRisk}}, and thus, we recover the form
\begin{align*}
\riskR{A}{S}{X} = \frac{(\rho_{\acc,S}(X_T))^+}{X_0 + \rho_{\acc,S}(X_T)} \,,
\end{align*}
the representation of intrinsic risk measures with respect to conic acceptance sets derived in \hyperref[Sec:IntrinsicRiskonPosHom&Cashadditive]{Section \ref*{Sec:IntrinsicRiskonPosHom&Cashadditive}}. 
\end{proof}

\section{Conclusion} \label{Sec:outlook}
In this paper we have developed a new type of risk measure: intrinsic risk measures.
We argued that since traditional risk measures are defined via hypothetical additional capital, which is not always available in reality, it is natural to consider risk measures that only allow the usage of internal capital contained in the financial position.

We discussed basic properties, provided examples and applications, an alternative representation on conic acceptance sets with a direct comparison to traditional risk measures and their efficiency, and a dual representation on convex acceptance sets.
This new concept allows us to extend the scope of applications and eliminate problems with infinite values, while concentrating on the primary objective to reach acceptability starting from an unacceptable position.

We have shown that standard properties such as monotonicity and quasi-convexity are imposed directly through the structure of the acceptance set as opposed to monetary risk measures.
On conic acceptance sets, such as the ones associated with Value at Risk and Expected Shortfall, we have drawn connections between intrinsic risk measures and their traditional counterparts.
We showed that the former approach requires less capital to reach acceptability and at the same time yields financial positions with the same performance.
As the representation on cones cannot be extended to convex acceptance sets, we used duality results for convex sets and derived a dual representation for the intrinsic risk measure in terms of $\sigma$-additive probability measures.\\

At the end we would like to mention some ideas for the further development of the research on intrinsic risk measures.
For simplicity reasons we chose $\mathcal{X} = L^\infty(\Omega,\mathcal{F},\mathbb{P})$ and we indicated where immediate extensions to more general spaces are possible.
Given the importance of these spaces in mathematical finance it is of interest to extend intrinsic risk measures to general ordered topological vector spaces, as it was done in \cite{bib:WPC2} for traditional risk measures. 
As some of our results require that the interior of the acceptance set is non-empty, the interior would have to be substituted by a refined concept such as the core.
Moreover, extensions to multiple financial positions and multiple eligible assets could be considered.
Since the intrinsic risk measure operates on a single financial position~$X$, portfolios need to be aggregated before the risk measure can be applied.
A possibly more sensible approach would treat the positions that constitute the portfolio individually, but dependent on their weights in the portfolio. 
Then one has to decide how to define the infimum over all possible risk vectors.
Yet another approach could be to define multidimensional acceptance sets without aggregating the positions.

\newpage
\bibliographystyle{alpha}
\bibliography{bibliography}

\end{document}